\documentclass[11pt,pdftex,a4paper]{article}%
\pdfoutput=1
\usepackage{float}
\floatstyle{boxed} 
\restylefloat{figure}
\usepackage{subfig}

\usepackage{verbatim}
\usepackage{amssymb}
\usepackage{lipsum}
\usepackage{multicol}
\usepackage{amsmath}
\usepackage{framed}
\usepackage{amsthm}
\usepackage{enumitem}
\setlist{nolistsep}
\usepackage{tikz}
\usepackage[margin=0.92in]{geometry}
\usepackage{color}
\usepackage{cite}
\usepackage{tabularx}
\newif\ifcode
\codefalse
\codetrue
\usepackage{macros}
\ifcode
\usepackage[ruled]{algorithm}
\usepackage{algpseudocode}
\usepackage{ioa_code}



\newtheorem{theorem}{Theorem}

\newtheorem{claim}[theorem]{Claim}

\newtheorem{definition}{Definition}
\newtheorem{lemma}[theorem]{Lemma}
\newtheorem{observation}[theorem]{Observation}

\newcounter{linenumber}

\newcommand{\true}{\mathit{true}}
\newcommand{\false}{\mathit{false}}

\newcommand{\remove}[1]{}

\newcommand{\Wset}{\textit{Wset}}
\newcommand{\Rset}{\textit{Rset}}
\newcommand{\Dset}{\textit{Dset}}

\newcommand{\txns}{\textit{txns}}

\newcommand{\Read}{\textit{read}}
\newcommand{\Write}{\textit{write}}
\newcommand{\TryC}{\textit{tryC}}

\newcommand{\ok}{\textit{ok}}


\newcommand{\ignore}[1]{}

\begin{document}
\bibliographystyle{abbrv}

\title{Why Transactional Memory Should Not Be Obstruction-Free}

\author{
Petr Kuznetsov$^1$~~~Srivatsan Ravi$^2$ \\
$^1$\normalsize T\'el\'ecom ParisTech \\
$^2$\normalsize TU Berlin
}

\date{}
\maketitle
\thispagestyle{empty}
\begin{abstract}
Transactional memory (TM) is an inherently optimistic abstraction:
it allows concurrent processes to execute sequences of
shared-data accesses (transactions) speculatively, with an option of
aborting them in the future. 
Early TM designs avoided using locks and 
relied on non-blocking  
synchronization to ensure \emph{obstruction-freedom}: 
a transaction that encounters no step contention 
is not allowed to abort. 
However, it was later observed that obstruction-free TMs perform
poorly and, as a result, state-of-the-art TM implementations are
nowadays \emph{blocking},
allowing aborts because of data \emph{conflicts} rather than step contention.

In this paper, we explain this shift in the TM practice theoretically, via complexity bounds.
We prove a few important lower bounds on obstruction-free TMs. Then we
present a \emph{lock-based} TM implementation that beats all of these lower bounds.
In sum, our results exhibit a considerable complexity gap
between non-blocking and blocking TM implementations.
\end{abstract}

\section{Introduction}
\label{sec:intro}
Transactional memory (TM) allows concurrent processes to
organize sequences of operations on shared \emph{data items} into atomic
transactions. 
A transaction may commit, in which case its updates of data items
``take effect'' or it may \emph{abort},  in which case no
data items are updated.
A TM \emph{implementation} provides processes with algorithms for implementing
transactional operations on data items (such as \emph{read}, \emph{write} and \emph{tryCommit})
by applying \emph{primitives} on shared \emph{base objects}.
Intuitively, the idea behind the TM abstraction is optimism: 
before a transaction commits, all its operations are
\emph{speculative}, and it is expected that, in the absence of
concurrency, a transaction commits. 
     

It therefore appears natural that early TMs
implementations~\cite{HLM+03, astm, ST95,nztm,fraser} 
adopted optimistic concurrency control and guaranteed
that a prematurely halted transaction cannot not prevent
other transactions from committing. 
These implementations avoided using locks and relied on \emph{non-blocking}
(sometimes also called \emph{lock-free})  synchronization. 
Possibly the weakest non-blocking
progress condition is \emph{obstruction-freedom}~\cite{HLM03,HS11-progress}
stipulating that  every transaction running in the absence
of \emph{step contention}, \emph{i.e.}, not encountering steps of concurrent
transactions, must commit.   

In 2005, Ennals~\cite{Ennals05} argued that 
that obstruction-free TMs inherently yield poor performance, because
they require  transactions to forcefully abort each other. 
Ennals further describes a \emph{lock-based} TM
implementation~\cite{Ennals-code}
that he claimed to outperform \emph{DSTM}~\cite{HLM+03},
the most referenced obstruction-free TM implementation at the time.
Inspired by \cite{Ennals05}, more recent TM implementations like \emph{TL}~\cite{DStransaction06}, 
\emph{TL2}~\cite{DSS06} and \emph{NOrec}~\cite{norec}
employ locking and showed that Ennal's claims about performance of
lock-based TMs hold true on most workloads. 
The progress guarantee provided by these TMs is typically \emph{progressiveness}: 
a transaction may be aborted only if it encounters a read-write or a
write-write conflicts with a concurrent transaction~\cite{GK09-progressiveness}.
 
There is a considerable amount of empirical evidence on the
performance gap between non-blocking
(obstruction-free) and blocking (progressive) TM implementations but, to
the best of our knowledge, 
no analytical result explains it.
Complexity lower and upper bounds presented  in this paper provide
such an explanation.
 
\vspace{1mm}\noindent\textbf{Lower bounds for non-blocking TMs.}
Our first result focuses on two important TM properties: \emph{weak
disjoint-access-parallelism} (weak DAP) and \emph{read invisibility}.
Weak DAP~\cite{AHM09} is believed to improve
TM performance by ensuring that transactions
\emph{concurrently contend} on the same base object (both access the base object and at least one updates it) only if their data 
sets are connected in the \emph{conflict graph} constructed on the
data sets of concurrent transactions~\cite{AHM09}.
Many popular obstruction-free TM implementations satisfy weak DAP~\cite{nztm,fraser,HLM+03}, but
not the stronger property of \emph{strict DAP}~\cite{ofgk,tm-book} that
disallows any two transactions to contend on a base object unless they access a common data item.

A TM implementation uses invisible reads if, informally, a reading transaction cannot
cause a concurrent transaction to abort (we give a more precise definition later in this paper), 
which is believed to be important for
(most commonly observed) read-dominated workloads. 
Interestingly, lock-based TM
implementations like \emph{TL}~\cite{DStransaction06}
are weak DAP and use invisible reads.
In contrast, we establish that it is impossible to
implement a \emph{strictly serializable} (all committed transactions appear to execute sequentially in some
total-order respecting the timing of non-overlapping transactions) obstruction-free TM that provides
both weak DAP and read invisibility. Indeed, obstructions TMs like \emph{DSTM}~\cite{HLM+03}
and \emph{FSTM}~\cite{fraser} satisfy weak DAP, but not read invisibility since read operations must write to the shared memory.

We then derive lower bounds on obstruction-free TM implementations with respect to
the number of \emph{stalls}~\cite{G05} 
The stall complexity captures the fact that 
the time a process might have to spend before it applies a primitive on a base object
can be proportional to the number of processes that try to concurrently update the object~\cite{G05}.  
Our second result shows that a single read operation in a $n$-process strictly serializable obstruction-free TM
implementation may incur $\Omega(n)$ stalls.

Finally, we prove that any \emph{read-write (RW) DAP} \emph{opaque} 
(all transactions appear to execute sequentially in some
total-order respecting the timing of non-overlapping transactions) obstruction-free 
TM implementation has an execution in which a
read-only transaction incurs $\Omega(n)$ non-overlapping \emph{RAWs} or \emph{AWARs}. 
Intuitively, RAW (read-after-write) or AWAR (atomic-write-after-read)
patterns~\cite{AGK11-popl} capture the amount of ``expensive
synchronization'', \emph{i.e.}, the number of costly conditional primitives
or memory barriers~\cite{AdveG96} incurred by the
implementation.
The metric appears to be more practically relevant than simple
step complexity, as it accounts for expensive cache-coherence operations or conditional instructions.
RW DAP, satisfied by most
obstruction-free implementations~\cite{HLM+03,fraser}, 
requires that read-only transactions 
do not contend on the same base object with transactions having disjoint write sets.
It is stronger than \emph{weak DAP}~\cite{AHM09}, but weaker than \emph{strict DAP}~\cite{OFTM}.  

\begin{figure}[h]
      
      \scalebox{1}[1]{
      \begin{tabularx}{\textwidth}{c|c|c}
	~~~~~ & Obstruction-free TMs & Our progressive TM $LP$ \\ \hline
	strict DAP & No~\cite{OFTM} & Yes \\ \hline
	invisible reads+weak DAP & No & Yes\\ \hline
	stall complexity of reads & $\Omega(n)$ 
        & $O(1)$  \\ \hline
	RAW/AWAR complexity & $\Omega(n)$ & $O(1)$  \\ \hline
	read-write primitives, wait-free termination & No~\cite{tm-book} & Yes
   \end{tabularx}
    
}
\caption{Complexity gap between blocking and non-blocking strictly serializable TM
  implementations; $n$ is the number of processes}\label{fig:main}
\end{figure}

\vspace{1mm}\noindent\textbf{An upper bound for blocking TMs.}
To exhibit a complexity gap between blocking and non-blocking TMs, we
describe a progressive opaque TM implementation that beats the impossibility result
and the lower bounds we established for obstruction-free TMs.     

Our implementation, denoted $LP$, (1)~uses only read and write primitives on base objects and
ensures that every transactional operation terminates in a wait-free
manner, (2)~ensures strict DAP, (3)~has invisible reads, 
(4)~performs $O(1)$ non-overlapping RAWs/AWARs per transaction, and
(5)~incurs $O(1)$ memory stalls
for read operations.
In contrast, the following claims hold for any implementation in the
class of obstruction-free (OF) strict serializable TMs:
No OF TM can be implemented (i)~using only read and
write primitives and provide wait-free termination~\cite{tm-book}, or (ii) provide strict DAP~\cite{OFTM}.
Furthermore, (iii)~no weak DAP OF TM has invisible reads
(Theorem~\ref{th:ir}) and (iv)~no OF TM ensures a constant
number of stalls incurred by a read operation (Theorem~\ref{th:oftmstalls}). 
Finally, (v)~no RW DAP opaque OFTM has constant  RAW/AWAR complexity
(Theorem~\ref{th:oftriv}).
In fact, (iv) and (v) exhibit a linear separation between blocking and non-blocking TMs w.r.t 
expensive synchronization and memory stall complexity, respectively.

Our results are summarized in Figure~\ref{fig:main}.
Altogether, we grasp a considerable complexity gap between blocking
and non-blocking TM implementations, justifying theoretically the
shift in TM practice we observed during the past decade. 

\vspace{2mm}\noindent\textbf{Roadmap.}
Sections~\ref{sec:model} and \ref{sec:prel} define our model and the classes of TMs 
considered in this paper.
Section~\ref{sec:oftm} contains lower bounds for  obstruction-free
TMs. 
Section~\ref{sec:iclb} describes our lock-based TM implementation $LP$.
In Section~\ref{sec:related}, we discuss the related
work and in Section~\ref{sec:disc},  concluding remarks.
Some proofs are delegated to the optional appendix. 
%

%
%
\section{Model}
\label{sec:model}
\vspace{1mm}\noindent\textbf{TM interface.}
\emph{Transactional memory} (in short, \emph{TM})
allows a set of data items (called \emph{t-objects}) to be accessed 
via atomic \emph{transactions}.
Every transaction $T_k$ has a unique identifier $k$. 
We make no assumptions on the \emph{size} of a t-object, \emph{i.e.}, 
the cardinality on the set $V$ of possible values a t-object can store.
A transaction $T_k$ may contain the following \emph{t-operations},
each being a matching pair of an \emph{invocation} and a \emph{response}:
$\Read_k(X)$ returns a value in $V$ 
or a special value $A_k\notin V$ (\emph{abort});
$\Write_k(X,v)$, for a value $v \in V$,
returns \textit{ok} or $A_k$;
$\TryC_k$ returns $C_k\notin V$ (\emph{commit}) or $A_k$.

\vspace{1mm}\noindent\textbf{TM implementations.}
We consider an asynchronous shared-memory system in which
a set of $n$ processes, communicate by applying \emph{primitives} on shared \emph{base objects}.
We assume that processes issue transactions sequentially i.e. a process starts a new transaction
only after the previous transaction has committed or aborted.
A TM \emph{implementation} provides processes with algorithms
for implementing $\Read_k$, $\Write_k$ and $\TryC_k()$
of a transaction $T_k$ by \emph{applying} \emph{primitives} from a set of shared \emph{base objects}, each of which is 
assigned an \emph{initial value}.
We assume that these primitives are \emph{deterministic}.
A primitive is a generic \emph{read-modify-write} (\emph{RMW}) procedure applied to a base object~\cite{G05,Her91}.
It is characterized by a pair of functions $\langle g,h \rangle$:
given the current state of the base object, $g$ is an \emph{update function} that
computes its state after the primitive is applied, while $h$ 
is a \emph{response function} that specifies the outcome of the primitive returned to the process.
A RMW primitive is \emph{trivial} if it never changes the value of the base object to which it is applied.
Otherwise, it is \emph{nontrivial}.

\vspace{1mm}\noindent\textbf{Executions and configurations.}
An \emph{event} of a transaction $T_k$ (sometimes we say \emph{step} of $T_k$)
is an invocation or response of a t-operation performed by $T_k$ or a 
RMW primitive $\langle g,h \rangle$ applied by $T_k$ to a base object $b$
along with its response $r$ (we call it a \emph{RMW event} and write $(b, \langle g,h\rangle, r,k)$).

A \emph{configuration} (of a TM implementation) specifies the value of each base object and 
the state of each process.
The \emph{initial configuration} is the configuration in which all 
base objects have their initial values and all processes are in their initial states.

An \emph{execution fragment} is a (finite or infinite) sequence of events.
An \emph{execution} of a TM implementation $M$ is an execution
fragment where, starting from the initial configuration, each event is
issued according to $M$ and each response of a RMW event $(b, \langle
g,h\rangle, r,k)$ matches the state of $b$ resulting from all
preceding events.
An execution $E\cdot E'$ denotes the concatenation of $E$ and execution fragment $E'$,
and we say that $E'$ is an \emph{extension} of $E$ or $E'$ \emph{extends} $E$.

Let $E$ be an execution fragment.
For every transaction (resp., process) identifier $k$,
$E|k$ denotes the subsequence of $E$ restricted to events of
transaction $T_k$ (resp., process $p_k$).
If $E|k$ is non-empty,
we say that $T_k$ (resp., $p_k$) \emph{participates} in $E$, else we say $E$ is \emph{$T_k$-free} (resp., \emph{$p_k$-free}).
Two executions $E$ and $E'$ are \emph{indistinguishable} to a set $\mathcal{T}$ of transactions, if
for each transaction $T_k \in \mathcal{T}$, $E|k=E'|k$.
A TM \emph{history} is the subsequence of an execution consisting of the invocation and 
response events of t-operations.
Two histories $H$ and $H'$ are \emph{equivalent} if $\txns(H) = \txns(H')$
and for every transaction $T_k \in \txns(H)$, $H|k=H'|k$.

The \emph{read set} (resp., the \emph{write set}) of a transaction $T_k$ in an execution $E$,
denoted $\Rset(T_k)$ (resp., $\Wset(T_k)$), is the set of t-objects that $T_k$ reads (resp., writes to) in $E$.
More specifically, if $E$ contains an invocation of $\Read_k(X)$ (resp., $\Write_k(X,v)$), we say
that $X\in \Rset(T_k)$ (resp., $\Wset(T_k)$).
The \emph{data set} of $T_k$ is $\Dset(T_k)=\Rset(T_k)\cup\Wset(T_k)$.
A transaction is called \emph{read-only} if $\Wset(T_k)=\emptyset$; \emph{write-only} if $\Rset(T_k)=\emptyset$ and
\emph{updating} if $\Wset(T_k)\neq\emptyset$.
Note that we consider the conventional dynamic TM programming model: 
the data set of a transaction is not known apriori (\emph{i.e.}, at the start of the transaction)
and it is identifiable only by the set of t-objects the transaction has invoked a read or write in the given execution.

\vspace{1mm}\noindent\textbf{Transaction orders.}
Let $\txns(E)$ denote the set of transactions that participate in $E$.
An execution $E$ is \emph{sequential} if every invocation of
a t-operation is either the last event in the history $H$ exported by $E$ or
is immediately followed by a matching response.
We assume that executions are \emph{well-formed} i.e. for all $T_k$, $E|k$ begins with the invocation of a t-operation, is
sequential and has no events after $A_k$ or $C_k$.
A transaction $T_k\in \txns(E)$ is \emph{complete in $E$} if
$E|k$ ends with a response event.
The execution $E$ is \emph{complete} if all transactions in $\txns(E)$
are complete in $E$.
A transaction $T_k\in \txns(E)$ is \emph{t-complete} if $E|k$
ends with $A_k$ or $C_k$; otherwise, $T_k$ is \emph{t-incomplete}.
$T_k$ is \emph{committed} (resp., \emph{aborted}) in $E$
if the last event of $T_k$ is $C_k$ (resp., $A_k$).
The execution $E$ is \emph{t-complete} if all transactions in
$\txns(E)$ are t-complete.

For transactions $\{T_k,T_m\} \in \txns(E)$, we say that $T_k$ \emph{precedes}
$T_m$ in the \emph{real-time order} of $E$, denoted $T_k\prec_E^{RT} T_m$,
if $T_k$ is t-complete in $E$ and
the last event of $T_k$ precedes the first event of $T_m$ in $E$.
If neither $T_k\prec_E^{RT} T_m$ nor $T_m\prec_E^{RT} T_k$,
then $T_k$ and $T_m$ are \emph{concurrent} in $E$.
An execution $E$ is \emph{t-sequential} if there are no concurrent
transactions in $E$.
We say that $\Read_k(X)$ is \emph{legal} in a t-sequential execution $E$ if it returns the
\emph{latest written value} of $X$ in $E$, and $E$ is \emph{legal}
if every $\Read_k(X)$ in $E$ that does not return $A_k$ is legal in $E$.

\vspace{1mm}\noindent\textbf{Contention.}
We say that a configuration $C$ after an execution $E$ is \emph{quiescent} (resp., \emph{t-quiescent})
if every transaction $T_k \in \ms{txns}(E)$ is complete (resp., t-complete) in $C$.
If a transaction $T$ is incomplete in an execution $E$, it has exactly one \emph{enabled} event, 
which is the next event the transaction will perform according to the TM implementation.
Events $e$ and $e'$ of an execution $E$  \emph{contend} on a base
object $b$ if they are both events on $b$ in $E$ and at least 
one of them is nontrivial (the event is trivial (resp., nontrivial) if it is the application of a trivial (resp., nontrivial) primitive).

We say that $T$ is \emph{poised to apply an event $e$ after $E$} if $e$ is the next enabled event for $T$ in $E$.
We say that transactions $T$ and $T'$ \emph{concurrently contend on $b$ in $E$} 
if they are poised to apply contending events on $b$ after $E$.

We say that an execution fragment $E$ is \emph{step contention-free for t-operation $op_k$} if the events of $E|op_k$ 
are contiguous in $E$.
We say that an execution fragment $E$ is \emph{step contention-free for $T_k$} if the events of $E|k$ are contiguous in $E$.
We say that $E$ is \emph{step contention-free} if $E$ is step contention-free for all transactions that participate in $E$.
\section{TM classes}
\label{sec:prel}
In this section, we define the properties of TM implementations considered in this paper.

\vspace{1mm}\noindent\textbf{TM-correctness.}
Informally, a t-sequential history $S$ is \emph{legal} if every t-read of a t-object returns the latest written value
of this t-object in $S$.
A history $H$ is \emph{opaque}
if there exists a legal t-sequential history $S$ equivalent to $H$ 
such that $S$ respects the real-time order of transactions
in $H$~\cite{tm-book}.
A weaker condition called \emph{strict serializability} ensures opacity only with respect to committed transactions.
Precise definitions can be found in Appendix~\ref{app:upper}.

\vspace{1mm}\noindent\textbf{TM-liveness.}
We say that a TM implementation $M$ provides \emph{obstruction-free (OF) TM-liveness} if for every finite execution $E$ of $M$, 
and every transaction $T_k$ that applies the invocation of a t-operation $op_k$ immediately after $E$, 
the finite step contention-free extension for $op_k$ contains a matching response.
A TM implementation $M$ provides \emph{wait-free TM-liveness} if
in every execution of $M$, every t-operation returns a matching response in a finite number of its steps.

\vspace{1mm}\noindent\textbf{TM-progress.}
Progress for TMs specifies the conditions under which a transaction is allowed to abort.
We say that a TM implementation $M$ provides \emph{obstruction-free (OF) TM-progress} if for every execution $E$ of $M$, 
if any transaction $T_k \in \ms{txns}(E)$ returns $A_k$ in $E$, then $E$ is not step contention-free for $T_k$. 

We say that transactions $T_i,T_j$ \emph{conflict} in an execution $E$ on a t-object $X$ if
$T_i$ and $T_j$ are concurrent in $E$ and $X\in\Dset(T_i)\cap\Dset(T_j)$,  and $X\in\Wset(T_i)\cup\Wset(T_j)$.
A TM implementation $M$ provides \emph{progressive} TM-progress (or \emph{progressiveness}) 
if for every execution $E$ of $M$ and every transaction $T_i \in \ms{txns}(E)$ that returns $A_i$ in $E$, 
there exists 
prefix $E'$ of $E$ and
a transaction $T_k \in \ms{txns}(E')$ such that $T_k$ and $T_i$
conflict in $E$. 

\vspace{1mm}\noindent\textbf{Read invisibility.}
Informally, the invisible reads assumption prevents TM implementations from applying nontrivial primitives
during t-read operations and from announcing read sets of transactions during tryCommit.

We say that a TM implementation $M$ uses \emph{invisible reads} if
for every execution $E$ of $M$,
\begin{itemize}
\item
for every read-only transaction $T_k \in \ms{txns}(E)$, 
no event of $E|k$ is nontrivial in $E$, 
\item
for every updating transaction $T_k \in \ms{txns}(E)$; $\Rset(T_k)\neq \emptyset$, 
there exists an execution $E'$ of $M$ such that
\begin{itemize}
\item
$\Rset(T_k)=\emptyset$ in $E'$
\item
$\ms{txns}(E)=\ms{txns}(E')$ and $\forall T_m \in \ms{txns}(E) \setminus \{T_k\}$: $E|m=E'|m$
\item
for any two step contention-free transactions $T_i, T_j \in \ms{txns}(E)$, 
if the last event of $T_i$ precedes the first event of $T_j$ in $E$, 
then the last event of $T_i$ precedes the first event of $T_j$ in $E'$.
\end{itemize}
\end{itemize}
Most popular TM implementations like \emph{TL2}~\cite{DSS06} and \emph{NOrec}~\cite{norec} satisfy this definition of invisible
reads.

\vspace{1mm}\noindent\textbf{Disjoint-access parallelism (DAP).}
A TM implementation $M$ is \emph{strictly disjoint-access parallel
  (strict DAP)} if, for
all executions $E$ of $M$, and for all transactions $T_i$ and $T_j$ that participate in $E$, 
$T_i$ and $T_j$ contend on a base object in $E$ only if 
$\Dset(T_i)\cap \Dset(T_j)\neq \emptyset$~\cite{tm-book}.

We now describe two relaxations of strict DAP. For the formal definitions, we introduce the notion of a
\emph{conflict graph} which captures the dependency relation among t-objects accessed by transactions.

We denote by $\tau_{E}(T_i,T_j)$, the set of transactions ($T_i$ and $T_j$ included)
that are concurrent to at least one of $T_i$ and $T_j$ in an execution $E$.

Let $G(T_i,T_j,E)$ be an undirected graph whose vertex set is $\bigcup\limits_{T \in \tau_{E}(T_i,T_j)} \Dset(T)$
and there is an edge
between t-objects $X$ and $Y$ \emph{iff} there exists $T \in \tau_{E}(T_i,T_j)$ such that 
$\{X,Y\} \in \Dset(T)$.
We say that $T_i$ and $T_j$ are \emph{disjoint-access} in $E$
if there is no path between a t-object in $\Dset(T_i)$ and a t-object in $\Dset(T_j)$ in $G(T_i,T_j,E)$.
A TM implementation $M$ is \emph{weak disjoint-access parallel (weak DAP)} if, for
all executions $E$ of $M$,
transactions $T_i$ and $T_j$ 
concurrently contend on the same base object in $E$ only if   
$T_i$ and $T_j$ are not disjoint-access in $E$ or there exists a t-object $X \in \Dset(T_i) \cap \Dset(T_j)$~\cite{AHM09,PFK10}.

Let ${\tilde G}(T_i,T_j,E)$ be an undirected graph whose vertex set is $\bigcup_{T \in \tau_{E}(T_i,T_j)} \Dset(T)$
and there is an edge
between t-objects $X$ and $Y$ \emph{iff} there exists $T \in \tau_{E}(T_i,T_j)$ such that 
$\{X,Y\} \in \Wset(T)$.
We say that $T_i$ and $T_j$ are \emph{read-write disjoint-access} in $E$
if there is no path between a t-object in $\Dset(T_i)$ and a t-object in $\Dset(T_j)$ in ${\tilde G}(T_i,T_j,E)$.
A TM implementation $M$ is \emph{read-write disjoint-access parallel (RW DAP)} if, for
all executions $E$ of $M$, 
transactions $T_i$ and $T_j$ 
contend on the same base object in $E$ only if   
$T_i$ and $T_j$ are not read-write disjoint-access in $E$ or there exists a t-object $X \in \Dset(T_i) \cap \Dset(T_j)$.

We make the following observations about the DAP definitions presented in this paper.
\begin{itemize}
\item 
From the definitions, it is immediate that
every RW DAP TM implementation satisfies weak DAP. But the converse is not true.
Consider the following execution $E$ of a weak DAP TM implementaton $M$ 
that begins with the t-incomplete execution of a transaction $T_0$ that 
reads $X$ and writes to $Y$, followed by the step contention-free executions of two transactions $T_1$ and $T_2$ 
which write to $X$ and read $Y$ respectively. Transactions $T_1$ and $T_2$ may contend on a base object since 
there is a path between $X$ and $Y$ in $G(T_1,T_2,E)$. However, a RW DAP TM implementation
would preclude transactions $T_1$ and $T_2$ from contending on the same base object: there is no edge
between t-objects $X$ and $Y$ in the corresponding conflict graph ${\tilde G}(T_1,T_2,E)$ because
$X$ and $Y$ are not contained in the write set of $T_0$.
Algorithm~\ref{alg:oftm2} in Appendix~\ref{app:woftm} describes a TM implementation that satisfies weak DAP, but not RW DAP.
\item 
From the definitions, it is immediate that
every strict DAP TM implementation satisfies RW DAP. But the converse is not true.
To understand why,
consider the following execution $E$ of a RW DAP TM implementaton
that begins with the t-incomplete execution of a transaction $T_0$ that 
accesses t-objects $X$ and $Y$, followed by the step contention-free executions of two transactions $T_1$ and $T_2$ 
which access $X$ and $Y$ respectively. Transactions $T_1$ and $T_2$ may contend on a base object since 
there is a path between $X$ and $Y$ in ${\tilde G}(T_i,T_j,E)$. 
However, a strict DAP TM implementation
would preclude transactions $T_1$ and $T_2$ from contending on the same base object since $\Dset(T_1) \cap \Dset(T_2)=\emptyset$
in $E$. Algorithm~\ref{alg:oftm} in Appendix~\ref{app:rwoftm} describes a TM implementation that satisfies RW DAP, but not strict DAP.
\end{itemize}
%
%
%
%
%
%
%
%
\section{Lower bounds for obstruction-free TMs}
\label{sec:oftm}
Let $\mathcal{OF}$ denote the class of TMs that  provide OF TM-progress and OF TM-liveness.
In Section~\ref{sec:ofinv}, we show that no strict serializable TM in $\mathcal{OF}$ can be
weak DAP and have invisible reads.
In Section~\ref{sec:ofsrt}, we determine stall complexity bounds for
strict serializable TMs in $\mathcal{OF}$, and in Section~\ref{sec:oflowerraw}, we
present a linear (in $n$) lower bound on RAW/AWARs for RW DAP opaque TMs in
$\mathcal{OF}$. 
%
\subsection{Impossibility of invisible reads}
\label{sec:ofinv}
In this section, we prove that it is impossible to derive TM implementations in $\mathcal{OF}$ that combine
weak DAP and invisible reads.
\begin{figure*}[t]
\begin{center}
	\subfloat[$T_2$ returns new value of $X$ since $T_1$ is invisible\label{sfig:inv-1}]{\scalebox{0.7}[0.7]{\begin{tikzpicture}
\node (r1) at (1,0) [] {};
\node (w1) at (3,0) [] {};
\node (c1) at (5,0) [] {};

\node (r2) at (8.2,0) [] {};

\node (e) at (10.5,0) [] {};

\node (r3) at (14,0) [] {};

\draw (r1) node [above] {\small {$R_0(Z)\rightarrow v$}};
\draw (w1) node [above] {\small {$W_0(X,nv)$}};
\draw (c1) node [above] {\small {$\TryC_0$}};

\draw (r2) node [above] {\small {$R_1(X)\rightarrow v$}};
\draw (r2) node [below] {\tiny {initial value}};

\draw (e) node [above] {\tiny {(event of $T_0$)}};
\draw (e) node [below] {\small {$e$}};

\draw (r3) node [above] {\small {$R_2(X)\rightarrow nv$}};
\draw (r3) node [below] {\tiny {new value}};

\begin{scope}   
\draw [|-|,thick] (0,0) node[left] {$T_0$} to (2,0);
\draw [-|,thick] (2,0) node[left] {} to (4,0);
\draw [-,thick] (4,0) node[left] {} to (5.5,0);
\draw [-,dotted] (5.5,0) node[left] {} to (6,0);
\draw [|-|,thick] (7.2,0) node[left] {$T_1$} to (9.2,0);
\draw  (10.5,0) circle [fill, radius=0.05]  (10.5,0);
\draw [|-|,thick] (13,0) node[left] {$T_2$} to (15,0);
\end{scope}
\end{tikzpicture}}}
        \\
        \vspace{2mm}
	\subfloat[$T_2$ and $T_3$ do not contend on any base object\label{sfig:inv-2}]{\scalebox{0.7}[0.7]{\begin{tikzpicture}
\node (r1) at (1,0) [] {};
\node (w1) at (3,0) [] {};
\node (c1) at (5,0) [] {};

\node (r2) at (8.2,0) [] {};

\node (e) at (10.5,0) [] {};

\node (r3) at (17.5,0) [] {};

\node (w3) at (14,0) [] {};

\draw (r1) node [above] {\small {$R_0(Z)\rightarrow v$}};
\draw (w1) node [above] {\small {$W_0(X,nv)$}};
\draw (c1) node [above] {\small {$\TryC_0$}};

\draw (r2) node [above] {\small {$R_1(X)\rightarrow v$}};
\draw (r2) node [below] {\tiny {initial value}};

\draw (e) node [above] {\tiny {(event of $T_0$)}};
\draw (e) node [below] {\small {$e$}};

\draw (r3) node [above] {\small {$R_2(X)\rightarrow nv$}};
\draw (r3) node [below] {\tiny {new value}};

\draw (w3) node [above] {\small {$W_3(Z,nv)$}};
\draw (w3) node [below] {\tiny {write new value}};

\begin{scope}   
\draw [|-|,thick] (0,0) node[left] {$T_0$} to (2,0);
\draw [-|,thick] (2,0) node[left] {} to (4,0);
\draw [-,thick] (4,0) node[left] {} to (5.5,0);
\draw [-,dotted] (5.5,0) node[left] {} to (6,0);
\draw [|-|,thick] (7.2,0) node[left] {$T_1$} to (9.2,0);
\draw  (10.5,0) circle [fill, radius=0.05]  (10.5,0);
\draw [|-|,thick] (13,0) node[left] {$T_3$} to (15,0);
\draw [|-|,thick] (16.5,0) node[left] {$T_2$} to (18.5,0);
\end{scope}
\end{tikzpicture}}}
	\\
	\vspace{2mm}
	\subfloat[$T_3$ does not access the base object from the nontrivial event $e$\label{sfig:inv-3}]{\scalebox{0.7}[0.7]{\begin{tikzpicture}
\node (r1) at (1,0) [] {};
\node (w1) at (3,0) [] {};
\node (c1) at (5,0) [] {};

\node (r2) at (8.2,0) [] {};

\node (w3) at (11.5,0) [] {};

\node (r3) at (17.5,0) [] {};

\node (e) at (14.5,0) [] {};

\draw (r1) node [above] {\small {$R_0(Z)\rightarrow v$}};
\draw (w1) node [above] {\small {$W_0(X,nv)$}};
\draw (c1) node [above] {\small {$\TryC_0$}};

\draw (r2) node [above] {\small {$R_1(X)\rightarrow v$}};
\draw (r2) node [below] {\tiny {initial value}};

\draw (e) node [above] {\tiny {(event of $T_0$)}};
\draw (e) node [below] {\small {$e$}};

\draw (r3) node [above] {\small {$R_2(X)\rightarrow nv$}};
\draw (r3) node [below] {\tiny {new value}};

\draw (w3) node [above] {\small {$W_3(Z,nv)$}};
\draw (w3) node [below] {\tiny {write new value}};

\begin{scope}   
\draw [|-|,thick] (0,0) node[left] {$T_0$} to (2,0);
\draw [-|,thick] (2,0) node[left] {} to (4,0);
\draw [-,thick] (4,0) node[left] {} to (5.5,0);
\draw [-,dotted] (5.5,0) node[left] {} to (6,0);
\draw [|-|,thick] (7.1,0) node[left] {$T_1$} to (9.1,0);
\draw  (14.5,0) circle [fill, radius=0.05]  (14.5,0);
\draw [|-|,thick] (10.2,0) node[left] {$T_3$} to (12.5,0);
\draw [|-|,thick] (16.5,0) node[left] {$T_2$} to (18.5,0);
\end{scope}
\end{tikzpicture}}}
	\\
	\vspace{2mm}
	\subfloat[$T_3$ and $T_1$ do not contend on any base object \label{sfig:inv-4}]{\scalebox{0.7}[0.7]{\begin{tikzpicture}
\node (r1) at (1,0) [] {};
\node (w1) at (3,0) [] {};
\node (c1) at (5,0) [] {};

\node (r2) at (12,0) [] {};

\node (e) at (15,0) [] {};

\node (r3) at (18.5,0) [] {};

\node (w3) at (8.5,0) [] {};

\draw (r1) node [above] {\small {$R_0(Z)\rightarrow v$}};
\draw (w1) node [above] {\small {$W_0(X,nv)$}};
\draw (c1) node [above] {\small {$\TryC_0$}};

\draw (r2) node [above] {\small {$R_1(X)\rightarrow v$}};
\draw (r2) node [below] {\tiny {initial value}};

\draw (e) node [above] {\tiny {(event of $T_0$)}};
\draw (e) node [below] {\small {$e$}};

\draw (r3) node [above] {\small {$R_2(X)\rightarrow nv$}};
\draw (r3) node [below] {\tiny {new value}};

\draw (w3) node [above] {\small {$W_3(Z,nv)$}};
\draw (w3) node [below] {\tiny {write new value}};

\begin{scope}   
\draw [|-|,thick] (0,0) node[left] {$T_0$} to (2,0);
\draw [-|,thick] (2,0) node[left] {} to (4,0);
\draw [-,thick] (4,0) node[left] {} to (5.5,0);
\draw [-,dotted] (5.5,0) node[left] {} to (6,0);
\draw [|-|,thick] (7.5,0) node[left] {$T_3$} to (9.5,0);
\draw [|-|,thick] (11,0) node[left] {$T_1$} to (13,0);
\draw  (15,0) circle [fill, radius=0.05]  (15,0);
\draw [|-|,thick] (17.5,0) node[left] {$T_2$} to (19.5,0);
\end{scope}
\end{tikzpicture}}}
	\caption{Executions in the proof of Theorem~\ref{th:ir}; execution in \ref{sfig:inv-4} is not strictly serializable
          \label{fig:indis}} 
\end{center}
\end{figure*}
The following lemma will be useful in proving our result.
\begin{lemma}
\label{lm:dap}
(\cite{AHM09},\cite{WF14-icdcn})
Let $M $ be any weak DAP TM implementation.
Let $\alpha\cdot \rho_1 \cdot \rho_2$ be any execution of $M$ where
$\rho_1$ (resp., $\rho_2$) is the step contention-free
execution fragment of transaction $T_1 \not\in \ms{txns}(\alpha)$ (resp., $T_2 \not\in \ms{txns}(\alpha)$) 
and transactions $T_1$, $T_2$ are disjoint-access in $\alpha\cdot \rho_1 \cdot \rho_2$. 
Then, $T_1$ and $T_2$ do not contend on any base object in $\alpha\cdot \rho_1 \cdot \rho_2$.
\end{lemma}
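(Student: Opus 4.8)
The plan is to prove this by contradiction, assuming that $T_1$ and $T_2$ \emph{do} contend on some base object $b$ in $\alpha\cdot\rho_1\cdot\rho_2$, and then derive a violation of the weak DAP property. The weak DAP definition forbids two transactions from \emph{concurrently contending} on a base object when they are disjoint-access (and share no common t-object), so the first obstacle is that weak DAP talks about transactions being \emph{poised to apply contending events}, whereas the hypothesis only gives us that $T_1$ and $T_2$ contend (i.e., their events in the execution contend). I therefore need to transform ``contention somewhere in the full execution'' into ``concurrent contention at some reachable configuration,'' which is exactly where the step contention-free structure of $\rho_1$ and $\rho_2$ is used.

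**First I would** set up notation: let $b$ be the base object on which $T_1$ and $T_2$ contend, and among all such contending pairs of events, consider the \emph{earliest} event of $T_2$ that contends with some event of $T_1$ on $b$. Since $\rho_1$ is a contiguous (step contention-free) block that precedes the contiguous block $\rho_2$, every event of $T_1$ on $b$ occurs before every event of $T_2$ on $b$. The idea is to truncate $\rho_2$ just before its first event $e_2$ that accesses $b$ in a contending manner, obtaining a prefix $\alpha\cdot\rho_1\cdot\rho_2'$; I would similarly identify the last event $e_1$ of $T_1$ on $b$ that contends with $e_2$. The key claim is that at the configuration reached after $\alpha\cdot\rho_1'$ (where $\rho_1'$ is $\rho_1$ truncated just before $e_1$, so $T_1$ is poised to apply $e_1$), transaction $T_1$ is poised to apply $e_1$ on $b$ and $T_2$ is poised to apply $e_2$ on $b$; because $e_1$ and $e_2$ contend and at least one is nontrivial, $T_1$ and $T_2$ \emph{concurrently contend} on $b$ at this configuration.

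**The hard part will be** justifying that $T_2$ is genuinely poised to apply its contending event $e_2$ at a configuration where $T_1$ is also poised to apply $e_1$, since in the original execution $\rho_2$ runs entirely after $\rho_1$. Here I would invoke the determinism of the primitives together with the step contention-free structure: because $\rho_2$ is step contention-free for $T_2$ and $T_2\notin\ms{txns}(\alpha)$, the behavior of $T_2$ up to its first access of $b$ depends only on the base objects $T_2$ reads, and by disjoint-access-ness those objects are untouched by $T_1$ before $e_1$; hence the prefix of $\rho_2$ preceding $e_2$ can be ``replayed'' starting from the configuration after $\alpha\cdot\rho_1'$, leaving $T_2$ poised to apply the same event $e_2$. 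This indistinguishability argument (the standard tool for DAP lower bounds) is the crux, and it must be carried out carefully to confirm that $e_2$ remains enabled and still accesses $b$.

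**Finally I would** observe that since $T_1$ and $T_2$ are disjoint-access in $\alpha\cdot\rho_1\cdot\rho_2$ and, by well-formedness, share no common t-object (otherwise the statement's contention conclusion would be permitted), their concurrently contending on $b$ directly contradicts the weak DAP property of $M$. This contradiction establishes that no such $b$ exists, proving that $T_1$ and $T_2$ do not contend on any base object in $\alpha\cdot\rho_1\cdot\rho_2$. Since the lemma is attributed to prior work (\cite{AHM09},\cite{WF14-icdcn}), I would keep the argument at the level of a proof sketch, emphasizing the reduction from ``contention in the execution'' to ``concurrent contention at a reachable configuration'' as the essential content.
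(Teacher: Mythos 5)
The paper itself does not prove this lemma---it is imported from the works cited in its statement---so your proposal has to stand on its own. Your overall strategy is the right one (and the standard one): convert ``$T_1$ and $T_2$ contend somewhere in $\alpha\cdot\rho_1\cdot\rho_2$'' into ``$T_1$ and $T_2$ are \emph{poised to apply} contending events after some reachable execution of $M$,'' and then contradict weak DAP. However, the crucial replay step has a genuine gap, in two respects. First, your choice of truncation point is wrong: you fix a contended base object $b$ first and let $e_2$ be the first event of $T_2$ that contends on $b$. Nothing then prevents the prefix $\rho_2'$ of $\rho_2$ preceding $e_2$ from contending with $\rho_1$ on a \emph{different} base object $b'$ (say, $T_1$ writes $b'$ in $\rho_1$ and $T_2$ reads $b'$ inside $\rho_2'$); in that case $\rho_2'$ cannot be replayed after $\alpha\cdot\rho_1'$---$T_2$ would read different values, its state would diverge, and it need not be poised to apply $e_2$ at all. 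Second, the justification you give for the replay is circular: ``by disjoint-access-ness those objects are untouched by $T_1$'' conflates t-objects with base objects. Disjoint-access is a statement about the conflict graph on \emph{data items}; the claim that $T_1$ does not write the \emph{base objects} read by $T_2$'s prefix is precisely an instance of the conclusion the lemma is trying to establish, so it cannot be assumed.

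Both problems disappear with the right choice of $e_2$: let $e_2$ be the \emph{first} event of $\rho_2$ that contends with some event of $\rho_1$ on \emph{any} base object, and let $b$ be the object accessed by $e_2$; let $f$ be an event of $\rho_1$ on $b$ contending with $e_2$, let $\rho_1'$ be the prefix of $\rho_1$ preceding $f$, and $\rho_2'$ the prefix of $\rho_2$ preceding $e_2$. By minimality of $e_2$, every base object accessed both in $\rho_2'$ and in $\rho_1$ is accessed only trivially by both, so every base object read in $\rho_2'$ has the same value after $\alpha$, after $\alpha\cdot\rho_1'$, and after $\alpha\cdot\rho_1$; by determinism of the primitives, $\alpha\cdot\rho_1'\cdot\rho_2'$ is then an execution of $M$ in which $T_1$ is poised to apply $f$ and $T_2$ is poised to apply $e_2$, i.e., they concurrently contend on $b$. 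One further small step you skip: weak DAP must be invoked on the execution $\alpha\cdot\rho_1'\cdot\rho_2'$, not on $\alpha\cdot\rho_1\cdot\rho_2$, so you must check that $T_1$ and $T_2$ are still disjoint-access (and still have disjoint data sets) in the truncated execution. This does hold, because data sets and the set of concurrent transactions only shrink under truncation, so $G(T_1,T_2,\alpha\cdot\rho_1'\cdot\rho_2')$ is a subgraph of $G(T_1,T_2,\alpha\cdot\rho_1\cdot\rho_2)$---but it needs to be said, since weak DAP is a per-execution condition.
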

\begin{theorem}
\label{th:ir}
There does not exist a weak DAP strictly serializable TM
implementation in $\mathcal{OF}$ that uses invisible reads.
\end{theorem}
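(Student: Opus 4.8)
The plan is to argue by contradiction. Assume a TM implementation $M$ that is simultaneously weak DAP, strictly serializable, uses invisible reads, and lies in $\mathcal{OF}$. I will build the execution drawn in Figure~\ref{fig:indis} and show that its final form admits no legal serialization respecting the real-time order, contradicting strict serializability. The whole construction is driven by a single updating transaction $T_0$ that reads $Z$ (returning the initial value $v$) and writes a new value $nv$ to $X$. First I would isolate the \emph{visibility point} of $T_0$'s write: running $T_0$ solo, OF-liveness and OF-progress guarantee that it commits, so by strict serializability a transaction reading $X$ after $T_0$ commits must return $nv$, while one reading before $T_0$ begins returns $v$. Inserting a solo $\Read(X)$ after every prefix of $T_0$'s solo run and invoking OF-liveness, I locate the first event $e$ of $T_0$ after which this read flips from $v$ to $nv$; this $e$ is necessarily nontrivial and changes some base object $b$. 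Let $\pi$ be the prefix of $T_0$ up to (but excluding) $e$.

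Next I build the base execution $\pi\cdot\rho_1\cdot e\cdot\rho_2$ of Figure~\ref{fig:indis}(a), where $\rho_1$ and $\rho_2$ are the step contention-free solo reads of $X$ by $T_1$ and $T_2$. By the choice of $e$, $T_1$ returns $v$; since $T_1$ is read-only, the first clause of invisible reads forces $\rho_1$ to apply only trivial primitives, so the base-object state entering $\rho_2$ coincides with the state after $\pi\cdot e$, whence $T_2$ returns $nv$. The core of the argument is to now introduce a committed transaction $T_3$ that writes $nv$ to $Z$ and slide its step contention-free fragment $\rho_3$ leftwards through the executions of Figure~\ref{fig:indis}(b), (c), (d), preserving every return value at each step. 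I expect the \textbf{main obstacle} to be exactly that $\Dset(T_0)=\{X,Z\}$ links $X$ and $Z$ in every conflict graph, so weak DAP cannot be applied to $T_2$--$T_3$, to $T_0$--$T_3$, or to $T_1$--$T_3$ directly. This is where the second clause of invisible reads is indispensable: because $T_0$ is updating with $\Rset(T_0)=\{Z\}\neq\emptyset$, there is an indistinguishable execution in which $T_0$ does \emph{not} read $Z$, so $\Dset(T_0)=\{X\}$ and the $X$--$Z$ path vanishes.

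In that erased execution the relevant pairs become disjoint-access, and I would invoke Lemma~\ref{lm:dap} (weak DAP) to conclude, in turn: that $T_2$ and $T_3$ do not contend on any base object, so $\rho_3$ may be inserted before $\rho_2$ without changing $T_2$'s read (Figure~\ref{fig:indis}(b)); that $T_3$ and $T_0$ are disjoint-access and share no t-object, hence cannot contend on $b$, so $T_3$ never accesses $b$ and $\rho_3$ commutes before the nontrivial event $e$ (Figure~\ref{fig:indis}(c)); and finally that $T_1$ and $T_3$ do not contend, so $\rho_3$ commutes before $\rho_1$ (Figure~\ref{fig:indis}(d)). Since invisible reads keeps the erased and genuine executions indistinguishable to every transaction other than $T_0$ (so $E|m=E'|m$ for $m\neq 0$) and preserves the real-time order among step contention-free transactions, every return value ($T_1\to v$, $T_2\to nv$, $T_0$ reading $Z\to v$, $T_3$ committing its write of $nv$ to $Z$) and the relation ``$T_3$ commits before $T_1$ starts'' all transfer back to the genuine execution $\pi\cdot\rho_3\cdot\rho_1\cdot e\cdot\rho_2$.

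Finally I would read off the contradiction from Figure~\ref{fig:indis}(d). Because $T_2$ reads $nv$ from $X$ and the only writer of $X$ is $T_0$, any strict serialization must complete $T_0$ as committed and place it before $T_2$; legality of $T_1$'s read of the old value of $X$ forces $T_1$ before $T_0$; legality of $T_0$'s read of the old value of $Z$ forces $T_0$ before $T_3$ (the sole writer of $Z$); yet $T_3$ is t-complete before $T_1$'s first event, so the real-time order forces $T_3$ before $T_1$. These constraints close the cycle $T_3 \to T_1 \to T_0 \to T_3$, so no legal real-time-respecting serialization of the committed transactions exists, contradicting that $M$ is strictly serializable. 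The crux of the whole proof is thus the simultaneous exploitation of both facets of invisible reads: triviality of read-only operations (to keep $T_2$ reading $nv$ across the insertion of $T_1$) and erasure of $T_0$'s read of $Z$ (to break the $X$--$Z$ path and unblock the weak DAP commutations), which together let a write to $Z$ be slid past $T_0$'s write-visibility point and past $T_1$ without disturbing any outcome.
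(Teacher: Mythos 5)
Your proposal is correct and takes essentially the same route as the paper's own proof: the same flip-point construction of the event $e$, the same dual use of invisible reads (triviality of $T_1$'s read-only fragment, plus erasure of $T_0$'s read of $Z$ to break the $X$--$Z$ path in the conflict graph), the same three applications of Lemma~\ref{lm:dap} to slide $T_3$ leftwards past $T_2$, $e$, and $T_1$, and the same final non-serializable cycle $T_3 \rightarrow T_1 \rightarrow T_0 \rightarrow T_3$. The only difference is one of care, not substance: the paper defines $E'$ as the longest prefix after which \emph{neither} $T_1$ nor $T_2$ can return $nv$ (so that which reader flips at $e$ is a genuine without-loss-of-generality choice, even if the two readers' flip points differ), a detail your single ``flip point'' phrasing glosses over but which is repaired by exactly that definition.
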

\begin{proof}
By contradiction, assume that such an implementation $M\in\mathcal{OF}$ exists.
Let $v$ be the initial value of t-objects $X$ and $Z$.
Consider an execution $E$ of $M$ 
in which 
a transaction $T_0$ performs $\Read_0(Z) \rightarrow v$ (returning $v$), 
writes $nv\neq v$ to $X$, and commits.
Let $E'$ denote the longest prefix of $E$ that cannot be extended with
the t-complete step contention-free execution of transaction $T_1$ that performs 
a t-read $X$ and returns $nv$ nor with the t-complete step contention-free execution of transaction $T_2$ that
performs a t-read of $X$ and returns $nv$. 

Let $e$ be the enabled event of transaction $T_0$ in the configuration after $E'$.
Without loss of generality, assume that $E'\cdot e$ can be extended with the t-complete step contention-free
execution of committed transaction $T_2$ that reads $X$ and returns $nv$.
Let $E' \cdot e \cdot E_2$ be such an execution, where 
$E_2$ is the t-complete step contention-free execution fragment of transaction $T_2$ that 
performs $\Read_2(X) \rightarrow nv$ and commits.

We now prove that $M$ has an execution of the form $E' \cdot E_1 \cdot e \cdot E_2$, where
$E_1$ is the t-complete step contention-free execution fragment of transaction $T_1$ that 
performs $\Read_1(X) \rightarrow v$ and commits.

We observe that $E'\cdot E_1$ is an execution of $M$.
Indeed, by OF TM-progress and OF TM-liveness, $T_1$ must return a matching response that is not $A_1$ in $E'\cdot E_1$,
and by the definition of $E'$, this response must be the initial value $v$ of $X$.

By the assumption of invisible reads,
$E_1$ does not contain any nontrivial events.
Consequently, $E'\cdot E_1 \cdot e \cdot E_2$ is indistinguishable to transaction $T_2$
from the execution $E'\cdot e  \cdot E_2$. 
Thus, $E'\cdot E_1 \cdot e \cdot E_2$ is also an execution of $M$ (Figure~\ref{sfig:inv-1}).
\begin{claim}
\label{cl:invdapof2}
$M$ has an execution of the form $E'\cdot E_1 \cdot  E_3  \cdot e \cdot E_2$ where
$E_3$ is the t-complete step contention-free execution fragment of transaction $T_{3}$ that 
writes $nv \neq v$ to $Z$ and commits.
\end{claim}
\begin{proof}
The proof is through a sequence of indistinguishability arguments to construct the execution.

We first claim that $M$ has an execution of the form $E' \cdot E_1 \cdot e \cdot E_2 \cdot E_3$.
Indeed, by OF TM-progress and OF TM-liveness, $T_3$ must be committed in $E' \cdot E_1 \cdot e \cdot E_2 \cdot E_3$.

Since $M$ uses invisible reads,
the execution $E' \cdot E_1 \cdot e \cdot E_2 \cdot E_3$ is indistinguishable
to transactions $T_2$ and $T_3$ from the execution ${\hat E}\cdot E_2 \cdot E_3$, where ${\hat E}$ is
the t-incomplete step contention-free execution of transaction $T_0$ with
$\Wset_{\hat E}(T_0)=\{X\}$; $\Rset_{\hat E}(T_0)=\emptyset$ that writes $nv$ to $X$.

Observe that the execution $E'\cdot E_1 \cdot e \cdot E_2 \cdot E_3$ is indistinguishable
to transactions $T_2$ and $T_3$ from the execution ${\hat E} \cdot E_2 \cdot E_3$, in which
transactions $T_3$ and $T_2$ are disjoint-access. Consequently, by Lemma~\ref{lm:dap}, $T_2$ and $T_3$ 
do not contend on any base object in ${\hat E}  \cdot E_2 \cdot E_3$.
Thus, $M$ has an execution of the form $E' \cdot E_1 \cdot e \cdot E_3 \cdot E_2$ (Figure~\ref{sfig:inv-2}).

By definition of $E'$, $T_0$ applies a nontrivial primitive to some base object, say $b$, in event $e$ that
$T_2$ must access in $E_2$.
Thus, the execution fragment $E_3$ does not contain any nontrivial event on $b$ in
the execution $E'\cdot E_1 \cdot e \cdot E_2 \cdot E_3$.
Infact, since $T_3$ is disjoint-access with $T_0$ in the execution ${\hat E} \cdot  E_3 \cdot E_2$,
by Lemma~\ref{lm:dap}, it cannot access the base object $b$ to which $T_0$ applies a nontrivial primitive
in the event $e$. Thus, transaction
$T_3$ must perform the same sequence of events $E_3$ immediately after $E'$, implying that
$M$ has an execution of the form 
$E'\cdot E_1 \cdot  E_3 \cdot e \cdot E_2$ (Figure~\ref{sfig:inv-3}).
\end{proof} 
Finally, we observe that the execution $E' \cdot E_1 \cdot E_3 \cdot e \cdot E_2$ established in Claim~\ref{cl:invdapof2}
is indistinguishable
to transactions $T_1$ and $T_3$ from an execution ${\tilde E}\cdot E_1 \cdot E_3 \cdot e \cdot E_2$, 
where $\Wset(T_0)=\{X\}$ and $\Rset(T_0)=\emptyset$ in $\tilde E$.
But transactions $T_3$ and $T_1$ are disjoint-access in ${\tilde E} \cdot  E_1 \cdot E_3 \cdot e \cdot E_2$
and by Lemma~\ref{lm:dap}, $T_1$ and $T_3$ do not contend on any base object in this execution.
Thus, $M$ has an execution of the form $E' \cdot E_3 \cdot E_1 \cdot e \cdot E_2$ (Figure~\ref{sfig:inv-4})
in which $T_3$ precedes $T_1$ in real-time order.

However, the execution $E' \cdot E_3 \cdot E_1 \cdot e \cdot E_2$ is not strictly serializable:
$T_0$ must be committed in any serialization and transaction $T_1$
must precede $T_0$ since $\Read_1(X)$ returns the initial value of $X$. 
To respect real-time order, $T_3$ must precede $T_1$, while $T_0$ must precede $T_2$ since 
$\Read_2(X)$ returns $nv$, the value of $X$ updated by $T_0$.
Finally, $T_0$ must precede $T_3$ since $\Read_0(Z)$ returns the initial value of $Z$.
But there exists no such serialization---contradiction.
\end{proof}
%
%
%
%
%

%
\subsection{Stall complexity}
\label{sec:ofsrt}
Let $M$ be any TM implementation.
Let $e$ be an event applied by process $p$ to a base object $b$ as it performs a transaction $T$ during an execution $E$ of $M$.
Let $E=\alpha\cdot e_1\cdots e_m \cdot e \cdot \beta$ be an execution of $M$, where $\alpha$ and $\beta$ are execution 
fragments and $e_1\cdots e_m$
is a maximal sequence of $m\geq 1$ consecutive nontrivial events by distinct distinct processes other than $p$ that access $b$.
Then, we say that $T$ incurs $m$ \emph{memory stalls in $E$ on account of $e$}.
The \emph{number of memory stalls incurred by $T$ in $E$} is the sum of memory stalls incurred by all events of $T$ in $E$~\cite{G05,AGHK09}.

In this section, we prove
a lower bound of $n-1$ on the worst case number of stalls incurred by a transaction as it performs a single t-read operation.
We adopt the following definition of a \emph{k-stall execution} from \cite{AGHK09,G05}.
\begin{definition}
\label{def:stalls}
An execution $\alpha\cdot \sigma_1 \cdots \sigma_i$ is a $k$-stall execution for t-operation $op$ executed by process $p$ if
\begin{itemize}
\item 
$\alpha$ is $p$-free,
\item
there are distinct base objects $b_1,\ldots , b_i$ and disjoint sets of processes $S_1,\ldots , S_i$
whose union does not include $p$
and has cardinality $k$ such that, for $j=1,\ldots i $,
\begin{itemize}
\item
each process in $S_j$ has an enabled nontrivial event about to access base object $b_j$ after $\alpha$, and
\item
in $\sigma_j$, $p$ applies events by itself until it is the first about to apply an event to $b_j$,
then each of the processes in $S_j$ applies an event that accesses $b_j$, and finally, $p$ applies an event that accesses $b_j$,
\end{itemize}
\item
$p$ invokes exactly one t-operation $op$ in the execution fragment $\sigma_1\cdots \sigma_i$
\item
$\sigma_1\cdots \sigma_i$ contains no events of processes not in $(\{p\}\cup S_1\cup \cdots \cup S_i)$
\item
in every $(\{p\}\cup S_1\cup \cdots \cup S_i)$-free execution fragment that extends $\alpha$, 
no process applies a nontrivial event to any base object accessed in $\sigma_1 \cdots \sigma_i$.
\end{itemize}
\end{definition}
Observe that in a $k$-stall execution $E$ for t-operation $op$, the number of memory stalls incurred by $op$
in $E$ is $k$.
\begin{lemma}
\label{lm:stalls}
Let $\alpha\cdot \sigma_1 \cdots \sigma_i$ be a $k$-stall execution for t-operation $op$ executed by process $p$.
Then, $\alpha\cdot \sigma_1 \cdots \sigma_i$ is indistinguishable to $p$ from a step contention-free execution~\cite{AGHK09}.
\end{lemma}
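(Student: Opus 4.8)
The plan is to exhibit a single execution $E'$ that is step contention-free for $op$ and satisfies $E'|p = E|p$, where $E = \alpha\cdot\sigma_1\cdots\sigma_i$ is the given $k$-stall execution. By Definition~\ref{def:stalls} each fragment $\sigma_j$ has the shape $\pi_j^{\mathit{pre}}\cdot\beta_j\cdot e_j^p$, where $\pi_j^{\mathit{pre}}$ are the solo steps by which $p$ reaches the point of being poised at $b_j$, $\beta_j$ is the block in which every process of $S_j$ applies its single enabled nontrivial event on $b_j$, and $e_j^p$ is $p$'s own access to $b_j$. Writing $\pi_j=\pi_j^{\mathit{pre}}\cdot e_j^p$ for the steps of $p$ inside $\sigma_j$ and $\pi=\pi_1\cdots\pi_i$ for the whole solo run of $op$, the candidate execution is
\[
  E' \;=\; \alpha \cdot \beta_1 \cdots \beta_i \cdot \pi ,
\]
i.e. all stalling blocks are pushed into a $p$-free prefix and $p$ then runs $op$ alone.

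First I would check that $E'$ is a legal execution of $M$. Each process of $S_j$ is, by Definition~\ref{def:stalls}, poised after $\alpha$ to apply its nontrivial event on $b_j$; since $b_1,\ldots,b_i$ are distinct, the blocks $\beta_1,\ldots,\beta_i$ act on pairwise-distinct objects and hence commute, and keeping the original internal order of each $\beta_j$ ensures every process of $S_j$ observes exactly the state of $b_j$ it saw in $E$ and returns the same response. After the $p$-free prefix $\alpha\cdot\beta_1\cdots\beta_i$, process $p$ is still in the state it had after $\alpha$, so it is enabled to perform $\pi$.

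The heart of the argument is that $p$'s responses are unchanged, i.e. $E'|p=E|p$. I would obtain $E'$ from $E$ by repeatedly commuting each block $\beta_j$ leftward past the preceding steps of $p$ until it sits immediately after $\alpha$ and after $\beta_1,\ldots,\beta_{j-1}$. Such a commutation is sound because $\beta_j$ touches only $b_j$, while none of $p$'s steps that precede $e_j^p$ touch $b_j$: by Definition~\ref{def:stalls} the fragment $\sigma_j$ is where $p$ becomes \emph{for the first time} poised to access $b_j$, so $p$'s first (hence only prior) access to $b_j$ is exactly $e_j^p$, which comes after $\beta_j$. Two events on distinct base objects can be swapped without affecting either response or the resulting configuration, so $\beta_j$ commutes with every preceding $p$-step and with the disjoint blocks $\beta_{j'}$ on $b_{j'}\neq b_j$; iterating these swaps transforms $E$ into $E'$ while leaving all responses, in particular $p$'s, intact. (For objects that $p$ accesses which are not among the $b_j$, indistinguishability is immediate: $\sigma_1\cdots\sigma_i$ contains no steps of processes outside $\{p\}\cup S_1\cup\cdots\cup S_i$, and each $S_j$-process touches only $b_j$, so such objects are modified only by $\alpha$ and $p$ in both executions.)

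Finally, in $E'$ the prefix $\alpha\cdot\beta_1\cdots\beta_i$ is $p$-free and $\pi$ consists solely of $p$'s steps, so the events of $op$ are contiguous; thus $E'$ is step contention-free for $op$ and indistinguishable from $E$ to $p$, as required. The main obstacle is precisely the commutation step: I must argue that moving the stalling blocks to the front neither invalidates any RMW response nor alters $p$'s view, and this hinges on the distinctness of $b_1,\ldots,b_i$ together with the fact that $p$'s first access to each $b_j$ occurs only at $e_j^p$, which is what guarantees that $p$ never observes $b_j$ before the block $\beta_j$ has been applied.
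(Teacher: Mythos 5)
The paper never actually proves Lemma~\ref{lm:stalls}: it imports the statement from~\cite{AGHK09} (following~\cite{G05}), so there is no in-paper proof to compare against; what you have reconstructed is the standard argument from that line of work, and it is sound. Your construction $E'=\alpha\cdot\beta_1\cdots\beta_i\cdot\pi$ and the block-commutation argument are exactly right, and you correctly isolate the linchpin: the commutation is harmless only because the $b_j$ are pairwise distinct and $e_j^p$ is $p$'s \emph{first} access to $b_j$ in $\sigma_1\cdots\sigma_i$, so no earlier response of $p$ (nor the state of $b_j$ seen by the stalling processes) can change when $\beta_j$ is pushed to the front. Two caveats are worth recording. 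First, that linchpin hinges on reading ``$p$ applies events by itself until it is the first about to apply an event to $b_j$'' in Definition~\ref{def:stalls} as ``for the first time in the whole fragment''; under the laxer reading (first time within $\sigma_j$ only) the lemma is simply false---a trivial read of $b_j$ by $p$ inside $\pi_{j'}^{\mathit{pre}}$ with $j'<j$ could not be reconciled with any execution in which $p$'s steps are contiguous---so your interpretation is the one the lemma forces, and it is also the one realized by the inductive construction in Theorem~\ref{th:oftmstalls}, where $b_{i+1}$ is chosen among objects to which nontrivial events can still be applied in $(\{p\}\cup S_1\cup\cdots\cup S_i)$-free extensions and hence, by the last clause of Definition~\ref{def:stalls}, among objects never accessed in $\sigma_1\cdots\sigma_i$. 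Second, what you establish is that $E'$ is step contention-free \emph{for $op$} (the stalling processes' single events need not be contiguous with their own steps inside $\alpha$); this is weaker than the paper's global notion of a step contention-free execution, but it is precisely what the lemma's only application needs, namely invoking OF TM-progress and OF TM-liveness for $p$'s transaction in the proof of Theorem~\ref{th:oftmstalls}.
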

\begin{theorem}
\label{th:oftmstalls} 
Every strictly serializable TM implementation $M\in \mathcal{OF}$ has a $(n-1)$-stall execution $E$ for a t-read operation
performed in $E$.
\end{theorem}
\begin{proof}
We proceed by induction.
Observe that the empty
execution is a $0$-stall execution since it vacuously satisfies the invariants of Definition~\ref{def:stalls}. 

Let $v$ be the initial value of t-objects $X$ and $Z$.
Let $\alpha={\alpha}_1\cdots { \alpha}_{n-2}$ be a step contention-free execution of a strictly serializable TM implementation
$M\in \mathcal{OF}$, where for all $j\in \{1,\ldots, n-2\}$,
$\alpha_j$ is the longest prefix of the execution fragment ${\bar \alpha}_j$ that denotes the t-complete step-contention
free execution of committed transaction $T_j$ (invoked by process $p_j$)
that performs $\Read_j(Z)\rightarrow v$, writes value $nv \neq v$ to $X$ 
in the execution ${\alpha}_1\cdots { \alpha}_{j-1} \cdot {\bar \alpha}_j$
such that
\begin{itemize}
\item 
$\TryC_j()$ is incomplete in $\alpha_j$,
\item
$\alpha_1\cdots \alpha_{j}$ cannot be extended with the t-complete step contention-free execution fragment
of any transaction $T_{n-1}$ or $T_n$ that performs exactly one t-read of $X$ that returns $nv$ and commits.
\end{itemize}
Assume, inductively, that $\alpha \cdot \sigma_1\cdots \sigma_i$ is a $k$-stall execution
for $\Read_n(X)$ executed by process $p_n$, where $0\leq k \leq n-2$. 
By Definition~\ref{def:stalls}, there are distinct base objects $b_1,\ldots b_i$ 
accessed by disjoint sets of processes $S_1\ldots S_i$ in the execution fragment $\sigma_1\cdots \sigma_i$,
where $|S_1\cup \ldots \cup S_i|=k$ and 
$\sigma_1\cdots \sigma_i$ contains no events of processes not in $S_1 \cup \ldots  \cup S_i \cup \{p_n\}$.
We will prove that there exists a $(k+k')$-stall execution for $\Read_n(X)$, for some $k'\geq 1$.

By Lemma~\ref{lm:stalls}, 
$\alpha \cdot \sigma_1\cdots \sigma_i$ is indistinguishable to $T_n$ from a step contention-free execution.
Let $\sigma$ be the finite step contention-free execution fragment that extends $\alpha \cdot \sigma_1\cdots \sigma_i$
in which $T_n$ performs events by itself: completes $\Read_n(X)$ and returns a response. 
By OF TM-progress and OF TM-liveness, $\Read_n(X)$ and the subsequent $\TryC_k$ must each
return non-$A_n$ responses in $\alpha \cdot \sigma_1\cdots \sigma_i\cdot \sigma$.
By construction of $\alpha$ and strict serializability of $M$, $\Read_n(X)$
must return the response $v$ or $nv$ in this execution. 
We prove that there exists an execution fragment $\gamma$ performed 
by some process $p_{n-1} \not\in (\{p_n\}\cup S_1\cup \cdots \cup S_i)$
extending $\alpha$ that contains a nontrivial event on some
base object that must be accessed by $\Read_n(X)$ in $\sigma_1\cdots \sigma_i\cdot \sigma$.

Consider the case that $\Read_n(X)$
returns the response $nv$ in $\alpha \cdot \sigma_1\cdots \sigma_i\cdot \sigma$.
We define a step contention-free fragment $\gamma$ extending $\alpha$ that is the t-complete step contention-free 
execution of transaction $T_{n-1}$ executed by some process $p_{n-1} \not\in (\{p_n\}\cup S_1\cup \cdots \cup S_i)$ 
that performs $\Read_{n-1}(X)\rightarrow v$, writes 
$nv\neq v$ to $Z$ and commits.
By definition of $\alpha$, OF TM-progress and OF TM-liveness, $M$ has an execution of the form
$\alpha\cdot \gamma$.
We claim that the execution fragment $\gamma$ must contain a nontrivial event on some base object that
must be accessed by $\Read_n(X)$ in $\sigma_1\cdots \sigma_i\cdot \sigma$.
Suppose otherwise. 
Then, $\Read_n(X)$ must return the response $nv$ in $\sigma_1\cdots \sigma_i\cdot \sigma$.
But the execution $\alpha\cdot \sigma_1\cdots \sigma_i\cdot \sigma$ is not strictly serializable.
Since $\Read_n(X)\rightarrow nv$, there exists a transaction $T_q\in \ms{txns}(\alpha)$ that must be committed
and must precede $T_n$ in any serialization.
Transaction $T_{n-1}$ must precede $T_n$ in any serialization to respect the real-time order and
$T_{n-1}$ must precede $T_q$ in any serialization. Also, $T_q$ must precede $T_{n-1}$ in any serialization.
But there exists no such serialization.

Consider the case that $\Read_n(X)$
returns the response $v$ in $\alpha \cdot \sigma_1\cdots \sigma_i\cdot \sigma$.
In this case, we define the step contention-free fragment $\gamma$ extending $\alpha$ as the t-complete step contention-free 
execution of transaction $T_{n-1}$ executed by some process $p_{n-1} \not\in (\{p_n\}\cup S_1\cup \cdots \cup S_i)$ 
that writes $nv\neq v$ to $X$ and commits.
By definition of $\alpha$, OF TM-progress and OF TM-liveness, $M$ has an execution of the form
$\alpha\cdot \gamma$.
By strict serializability of $M$, the execution fragment $\gamma$ must contain a nontrivial event on some base object that
must be accessed by $\Read_n(X)$ in $\sigma_1\cdots \sigma_i\cdot \sigma$. Suppose otherwise.
Then, $\sigma_1\cdots \sigma_i \cdot \gamma \cdot \sigma$ is an execution of $M$
in which $\Read_n(X)\rightarrow v$. But this execution is not strictly
serializable: every transaction $T_q\in \ms{txns}(\alpha)$ must be aborted or must be preceded by $T_n$
in any serialization, but committed transaction $T_{n-1}$ must precede $T_n$ in any serialization to
respect the real-time ordering of transactions. But then $\Read_n(X)$ must return the new value $nv$ of $X$ that is
updated by $T_{n-1}$---contradiction.

Since, by Definition~\ref{def:stalls},
the execution fragment $\gamma$ executed by some process $p_{n-1} \not\in (\{p_n\}\cup S_1\cup \cdots \cup S_i)$
contains no nontrivial events to any base object accessed in $\sigma_1 \cdots \sigma_i$,
it must contain a nontrivial event to some base object $b_{i+1}\not\in \{b_1,\ldots , b_i\}$ that is
accessed by $T_n$ in the execution fragment $\sigma$.

Let $\mathcal{A}$ denote the set of all finite $(\{p_n\}\cup S_1 \ldots \cup S_i)$-free execution fragments that extend $\alpha$.
Let $b_{i+1} \not\in \{b_1,\ldots , b_i\}$ be the first base object accessed by $T_n$ in the execution fragment
$\sigma$ to which some transaction applies a nontrivial event in the 
execution fragment $\alpha'\in \mathcal{A}$.
Clearly, some such execution $\alpha \cdot \alpha'$ exists that contains a nontrivial event in $\alpha'$ to some
distinct base object $b_{i+1}$ not accessed in the execution fragment $\sigma_1 \cdots \sigma_i$.
We choose the execution $\alpha\cdot \alpha' \in \mathcal{A}$ that maximizes the number of transactions
that are poised to apply nontrivial events on $b_{i+1}$ in the configuration after $\alpha\cdot \alpha'$.
Let $S_{i+1}$ denote the set of processes executing these transactions and $k'=|S_{i+1}|$ ($k'>0$ as already proved).

We now construct a $(k+k')$-stall execution
$\alpha\cdot \alpha' \cdot \sigma_1 \cdots \sigma_i \cdot \sigma_{i+1}$ for $\Read_n(X)$,
where in $\sigma_{i+1}$, $p_n$ applies events by itself, then each of the processes in $S_{i+1}$ applies a nontrivial event
on $b_{i+1}$, and finally, $p_n$ accesses $b_{i+1}$.

By construction, $\alpha\cdot \alpha'$ is $p_n$-free.
Let $\sigma_{i+1}$ be the prefix of $\sigma$ not including $T_n$'s first access to $b_{i+1}$, concatenated with
the nontrivial events on $b_{i+1}$ by each of the $k'$ transactions executed by processes in $S_{i+1}$ 
followed by the access of $b_{i+1}$
by $T_n$. Observe that $T_n$ performs exactly one t-operation $\Read_n(X)$ in the execution fragment $\sigma_1\cdots \sigma_{i+1}$
and $\sigma_1\cdots \sigma_{i+1}$ contains no events of processes not in $(\{p_n\}\cup S_1\cup \cdots \cup S_i \cup S_{i+1})$.

To complete the induction, we need to show that in every $(\{p_n\}\cup S_1\cup \cdots \cup S_i \cup S_{i+1})$-free extension of 
$\alpha \cdot \alpha'$,
no transaction applies a nontrivial event to any base object accessed 
in the execution fragment $\sigma_1 \cdots \sigma_i \cdot \sigma_{i+1}$.
Let $\beta$ be any such execution fragment that extends $\alpha\cdot \alpha' $.
By our construction, $\sigma_{i+1}$ is the execution fragment that consists of events by $p_n$ on base objects accessed in
$\sigma_1\cdots \sigma_i$, nontrivial events on $b_{i+1}$ by transactions in $S_{i+1}$ and finally, an access to $b_{i+1}$
by $p_n$. 
Since $\alpha\cdot \sigma_1\cdots \sigma_i$ is a $k$-stall execution by our induction hypothesis,
$\alpha'\cdot \beta$ is $(\{p_n\}\cup S_1 \ldots \cup S_i\})$-free and thus, $\alpha'\cdot \beta$
does not contain nontrivial events on any base object accessed in $\sigma_1\cdots \sigma_i$.
We now claim that $\beta$ does not contain nontrivial events to $b_{i+1}$. Suppose otherwise.
Thus, there exists some transaction $T'$ that has an enabled nontrivial event to $b_{i+1}$ in the
configuration after $\alpha\cdot \alpha' \cdot \beta'$, where $\beta'$ is some prefix of $\beta$.
But this contradicts the choice of $\alpha \cdot \alpha'$ as the extension of $\alpha$ that maximizes $k'$.

Thus, $\alpha\cdot \alpha'\cdot \sigma_1 \cdots \sigma_i \cdot \sigma_{i+1}$ is indeed a $(k+k')$-stall execution for $T_n$
where $1< k< (k+k') \leq (n-1)$. 
%
\end{proof}
%
%

%
\subsection{RAW/AWAR complexity}
\label{sec:oflowerraw}
Attiya \emph{et al.}~\cite{AGK11-popl} identified two common expensive synchronization patterns
that frequently arise in the design of concurrent algorithms: 
\emph{read-after-write (RAW)} and \emph{atomic write-after-read (AWAR)}.
In this section, we prove that opaque, RW DAP TM implementations in $\mathcal{OF}$ have executions in which
some read-only transaction performs a linear (in $n$) number of RAWs or AWARs.

We recall the formal definitions of RAW and AWAR from \cite{AGK11-popl}.
Let $\pi^i$ denote the $i$-th
event in an execution $\pi$ ($i=0,\ldots , |\pi|-1$).   

We say that a transaction $T$ performs a \emph{RAW} (read-after-write) in $\pi$ if  
$\exists i,j; 0 \leq i < j < |\pi|$
such that (1) $\pi^i$ is a write to a base object $b$ by $T$, 
(2) $\pi^j$ is a read of a base object $b'\neq b$ by $T$ and 
(3) there is no $\pi^k$ such that $i<k<j$ and $\pi^k$ is a write to $b'$ by $T$.
In this paper, we are concerned only with \emph{non-overlapping} RAWs, \emph{i.e.}, the read performed by one precedes the write 
performed by the other.

We say a transaction $T$ performs an \emph{AWAR} (atomic-write-after-read)
in $\pi$ if $\exists i, 0 \leq i < |\pi|$ such that the event $\pi^i$
is the application of a nontrivial primitive that atomically reads a base object $b$ and writes to $b$.
\begin{figure*}[t]
\begin{center}
	\subfloat[Transactions in $\{T_1,\ldots , T_m\}$;$m=n-3$ are mutually read-write disjoint-access and concurrent; they are
	poised to apply a nontrivial primitive\label{sfig:rw-1}]{\scalebox{0.7}[0.7]{\begin{tikzpicture}
\node (r1) at (1,0) [] {};
\node (w1) at (3,0) [] {};
\node (c1) at (5,0) [] {};

\draw (r1) node [above] {\small {$R_1(Z_1)\rightarrow v$}};
\draw (w1) node [above] {\small {$W_1(X_1,nv)$}};
\draw (c1) node [above] {\small {$\TryC_1$}};


\node (rj) at (8,0) [] {};
\node (wj) at (10.2,0) [] {};
\node (cj) at (12,0) [] {};

\draw (rj) node [above] {\small {$R_m(Z_m)\rightarrow v$}};
\draw (wj) node [above] {\small {$W_m(X_m,nv)$}};
\draw (cj) node [above] {\small {$\TryC_m$}};


\begin{scope}   
\draw [|-,thick] (0,0) node[left] {$T_1$} to (5,0);
\draw [|-|,thick] (0,0) node[left] {} to (2,0);
\draw [-|,thick] (2,0) node[left] {} to (4,0);
\draw [-,dotted] (5,0) node[left] {} to (6,0);
\end{scope}
\begin{scope}   
\draw [|-|,thick] (7,0) node[left] {$T_m$} to (9,0);
\draw [-|,thick] (9,0) node[left] {} to (11,0);
\draw [-,thick] (11,0) node[left] {} to (12,0);
\draw [-,dotted] (11,0) node[left] {} to (13,0);
\end{scope}

\end{tikzpicture}}}
        \\
        \vspace{2mm}
        \subfloat[$T_{n}$ performs $m$ reads; each $\Read_{n}(X_j)$ returns initial value $v$\label{sfig:rw-2}]{\scalebox{0.7}[0.7]{\begin{tikzpicture}
\node (r1) at (1,0) [] {};
\node (w1) at (3,0) [] {};
\node (c1) at (5,0) [] {};

\draw (r1) node [above] {\small {$R_1(Z_1)\rightarrow v$}};
\draw (w1) node [above] {\small {$W_1(X_1,nv)$}};
\draw (c1) node [above] {\small {$\TryC_1$}};


\node (rj) at (8,0) [] {};
\node (wj) at (10.2,0) [] {};
\node (cj) at (12,0) [] {};

\draw (rj) node [above] {\small {$R_m(Z_m)\rightarrow v$}};
\draw (wj) node [above] {\small {$W_m(X_m,nv)$}};
\draw (cj) node [above] {\small {$\TryC_m$}};


\node (p1) at (15,0) [] {};
\node (pj) at (19,0) [] {};

\draw (p1) node [above] {\small {$R_{n}(X_1)\rightarrow v$}};
\draw (pj) node [above] {\small {$R_{n}(X_j)\rightarrow v$}};

\begin{scope}   
\draw [|-,thick] (0,0) node[left] {$T_1$} to (5,0);
\draw [|-|,thick] (0,0) node[left] {} to (2,0);
\draw [-|,thick] (2,0) node[left] {} to (4,0);
\draw [-,dotted] (5,0) node[left] {} to (6,0);
\end{scope}
\begin{scope}   
\draw [|-|,thick] (7,0) node[left] {$T_m$} to (9,0);
\draw [-|,thick] (9,0) node[left] {} to (11,0);
\draw [-,thick] (11,0) node[left] {} to (12,0);
\draw [-,dotted] (11,0) node[left] {} to (13,0);
\end{scope}
\begin{scope}   
\draw [|-,dotted] (14,0) node[left] {$T_{n}$} to (21,0);
\draw [|-|,thick] (14,0) to (16,0);
\draw [|-|,thick] (18,0) to (20,0);
\end{scope}
\end{tikzpicture}}}
        \\
        \vspace{2mm}
        \subfloat[$T_{n-2}$ commits; $T_{n}$ is read-write disjoint-access with $T_{n-2}$\label{sfig:rw-3}]{\scalebox{0.7}[0.7]{\begin{tikzpicture}
\node (r1) at (1,0) [] {};
\node (w1) at (3,0) [] {};
\node (c1) at (5,0) [] {};

\draw (r1) node [above] {\small {$R_1(Z_1)\rightarrow v$}};
\draw (w1) node [above] {\small {$W_1(X_1,nv)$}};
\draw (c1) node [above] {\small {$\TryC_1$}};


\node (rj) at (8,0) [] {};
\node (wj) at (10.2,0) [] {};
\node (cj) at (12,0) [] {};

\draw (rj) node [above] {\small {$R_m(Z_m)\rightarrow v$}};
\draw (wj) node [above] {\small {$W_m(X_m,nv)$}};
\draw (cj) node [above] {\small {$\TryC_m$}};


\node (p1) at (18,0) [] {};
\node (pj) at (22,0) [] {};

\draw (p1) node [above] {\small {$R_{n}(X_1)\rightarrow v$}};
\draw (pj) node [above] {\small {$R_{n}(X_j)\rightarrow v$}};
\node (z1) at (15,0) [] {};

\draw (z1) node [above] {\small {$W_{n-2}(Z_j,nv)$}};

\begin{scope}   
\draw [|-,thick] (0,0) node[left] {$T_1$} to (5,0);
\draw [|-|,thick] (0,0) node[left] {} to (2,0);
\draw [-|,thick] (2,0) node[left] {} to (4,0);
\draw [-,dotted] (5,0) node[left] {} to (6,0);
\end{scope}
\begin{scope}   
\draw [|-|,thick] (7,0) node[left] {$T_m$} to (9,0);
\draw [-|,thick] (9,0) node[left] {} to (11,0);
\draw [-,thick] (11,0) node[left] {} to (12,0);
\draw [-,dotted] (11,0) node[left] {} to (13,0);
\end{scope}
\begin{scope}   
\draw [|-,dotted] (17,0) node[left] {$T_{n}$} to (23.5,0);
\draw [|-|,thick] (17,0) to (19,0);
\draw [|-|,thick] (21,0) to (23,0);
\end{scope}
\begin{scope}   
\draw [|-|,thick] (14,0) node[left] {$T_{n-2}$} to (16,0);
\end{scope}
%

\end{tikzpicture}}}
        \\
        \vspace{2mm}
        \subfloat[Suppose $\Read_{n}(X_j)$ does not perform a RAW/AWAR, 
        $T_{n}$ and $T_{n-1}$ are unaware of step contention and $T_n$ misses the event of $T_j$, but $R_{n-1}(X_j)$ returns the value of $X_j$ 
        that is updated by $T_j$\label{sfig:rw-4}]{\scalebox{0.5}[0.5]{\begin{tikzpicture}
\node (r1) at (1,0) [] {};
\node (w1) at (3,0) [] {};
\node (c1) at (5,0) [] {};

\draw (r1) node [above] {\small {$R_1(Z_1)\rightarrow v$}};
\draw (w1) node [above] {\small {$W_1(X_1,nv)$}};
\draw (c1) node [above] {\small {$\TryC_1$}};


\node (rj) at (8,0) [] {};
\node (wj) at (10.2,0) [] {};
\node (cj) at (12,0) [] {};

\draw (rj) node [above] {\small {$R_m(Z_m)\rightarrow v$}};
\draw (wj) node [above] {\small {$W_m(X_m,nv)$}};
\draw (cj) node [above] {\small {$\TryC_m$}};


\node (p1) at (18,0) [] {};
\node (pj) at (25,0) [] {};

\draw (p1) node [above] {\small {$R_{n}(X_1)\rightarrow v$}};
\draw (pj) node [above] {\large {$R_{n}(X_j)\rightarrow v$}};
\node (z1) at (15,0) [] {};

\draw (z1) node [above] {\small {$W_{n-2}(Z_j,nv)$}};
\node (e) at (21.1,-1) [] {};
\node (l1) at (24,-1) [] {};
\node (lj) at (28,-1) [] {};

\draw (e) node [above] {\large {(event of $T_j$)}};
\draw (l1) node [above] {\large {$R_{n-1}(X_1)$}};
\draw (lj) node [above] {\large {$R_{n-1}(X_j)\rightarrow nv$}};

\begin{scope}   
\draw [|-,thick] (0,0) node[left] {$T_1$} to (5,0);
\draw [|-|,thick] (0,0) node[left] {} to (2,0);
\draw [-|,thick] (2,0) node[left] {} to (4,0);
\draw [-,dotted] (5,0) node[left] {} to (6,0);
\end{scope}
\begin{scope}   
\draw [|-|,thick] (7,0) node[left] {$T_m$} to (9,0);
\draw [-|,thick] (9,0) node[left] {} to (11,0);
\draw [-,thick] (11,0) node[left] {} to (12,0);
\draw [-,dotted] (11,0) node[left] {} to (13,0);
\end{scope}
\begin{scope}   
\draw [|-,dotted] (17,0) node[left] {$T_{n}$} to (30,0);
\draw [|-|,thick] (17,0) to (19,0);
\draw [|-|,thick] (21,0) to (30,0);
\end{scope}
\begin{scope}   
\draw [|-|,thick] (14,0) node[left] {$T_{n-2}$} to (16,0);
\end{scope}
\begin{scope}
\draw  (21.2,-1) circle [fill, radius=0.05]  (21.2,-1);
\draw [-,dotted] (23.2,-1) node[left] {\large $T_{n-1}$} to (29,-1);
\draw [|-|,thick] (23.2,-1) node[left] {} to (25.1,-1);
\draw [|-|,thick] (27,-1) node[left] {} to (29,-1);
\end{scope}

\end{tikzpicture}}}
                
	\caption{Executions in the proof of Theorem~\ref{th:oftriv}; execution in \ref{sfig:rw-4} is not opaque
          \label{fig:rw}} 
\end{center}
\end{figure*}
\begin{theorem}
\label{th:oftriv}
Every RW DAP opaque TM implementation $M\in \mathcal{OF}$ has
an execution $E$ in which some read-only transaction $T \in \ms{txns}(E)$ performs $\Omega(n)$ non-overlapping RAW/AWARs.
\end{theorem}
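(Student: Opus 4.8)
The plan is to construct an execution in which a read-only transaction $T_n$ must ``notice'' each of $\Omega(n)$ mutually disjoint-access updating transactions, and to argue that noticing each one forces a separate non-overlapping RAW or AWAR. The skeleton follows the four-panel figure: I first set up $m = n-3$ concurrent, pairwise read-write disjoint-access updating transactions $T_1, \dots, T_m$, where each $T_j$ reads $Z_j$ and writes $nv$ to a distinct t-object $X_j$, and each is run step contention-free up to the point where it is poised to apply a nontrivial primitive inside $\TryC_j$ (Figure~\ref{sfig:rw-1}). Because the $X_j$ and $Z_j$ are all distinct, these transactions are pairwise read-write disjoint-access, so by RW DAP (and the contention lemma underlying Lemma~\ref{lm:dap}) they do not contend with one another on any base object, which is what lets me place them side by side in a single execution.

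Next I extend this execution with a read-only transaction $T_n$ that reads $X_1, \dots, X_m$ in sequence; since the writers have not yet committed, each $\Read_n(X_j)$ returns the initial value $v$ (Figure~\ref{sfig:rw-2}). The heart of the argument is then a per-index indistinguishability claim: for each $j$, I want to show that $\Read_n(X_j)$ must perform a RAW or an AWAR. The plan is to suppose, for contradiction, that $\Read_n(X_j)$ performs neither. I then construct a ``witness'' execution in which a transaction $T_{n-2}$ commits a write to $Z_j$ (Figure~\ref{sfig:rw-3}) and a second read-only transaction $T_{n-1}$ runs concurrently, so as to complete the step of $T_j$ (committing its write of $X_j$) between two of $T_n$'s reads. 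Using opacity together with the absence of a RAW/AWAR at $\Read_n(X_j)$, I argue that $T_n$ cannot distinguish whether $T_j$'s committing event has occurred, so $T_n$ still returns $v$ for $X_j$, while $T_{n-1}$ correctly returns $nv$ for $X_j$ (Figure~\ref{sfig:rw-4}). The resulting execution cannot be opaque: $T_{n-1}$ sees $T_j$'s committed value, forcing $T_j \prec T_{n-1}$, yet the real-time and value constraints on $T_n$, $T_{n-1}$, $T_{n-2}$, and $T_j$ together admit no legal t-sequential order, giving the contradiction.

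The key technical device throughout is the combination of RW DAP with opacity: RW DAP guarantees that transactions touching disjoint t-objects (in the read-write sense) do not contend on base objects, which is precisely what licenses the indistinguishability swaps between $T_n$'s obliviousness and the writers' commits; opacity then rules out the resulting execution. I would lean on Lemma~\ref{lm:dap} as the formal engine for every ``do not contend / can be reordered'' step. Finally, I collect the conclusions: for each of the $m = n-3$ indices $j$, $\Read_n(X_j)$ performs at least one RAW or AWAR, and since the reads are executed in disjoint portions of $T_n$'s code on distinct base objects these patterns are \emph{non-overlapping}, yielding $\Omega(n)$ of them in a single read-only transaction $T_n$.

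**The hard part will be** making the single-index argument genuinely local, i.e. ensuring that the witness execution built around $T_j$ (with $T_{n-2}$ and $T_{n-1}$) can be grafted onto the base execution without disturbing the other $m-1$ reads of $T_n$ or the other writers. This requires carefully choosing the disjoint-access structure so that $T_{n-2}$, $T_{n-1}$, and $T_j$ remain read-write disjoint-access from everything except the specific t-objects at play, and then invoking Lemma~\ref{lm:dap} to certify that the inserted events commute past $T_n$'s unrelated reads. Pinning down exactly which base object $\Read_n(X_j)$ would have had to write-then-read (for a RAW) or atomically read-modify-write (for an AWAR) in order to detect $T_j$'s commit — and showing its absence forces the wrong return value under opacity — is where the real bookkeeping lies.
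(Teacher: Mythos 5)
Your proposal is architecturally the paper's own proof --- the same $m=n-3$ solo writers each reading $Z_j$ and writing $nv$ to $X_j$, the same reader $T_n$, the same auxiliary transactions $T_{n-1}$ and $T_{n-2}$, and the same non-opaque cycle $T_j\rightarrow T_{n-2}\rightarrow T_n\rightarrow T_j$ --- but it has a genuine gap at its core: the claim that, for each $j$, the witness $T_{n-1}$ ``correctly returns $nv$'' for $X_j$ once $T_j$'s pending event is applied. Nothing in your setup guarantees this. With your syntactic cut (stop $T_j$ when it is poised to apply a nontrivial primitive inside $\TryC_j$), $T_j$ is still live after its next event $e_j$ --- that event does not commit $T_j$ --- so opacity does not force any reader to return $nv$ after $e_j$; and, conversely, nothing forces $T_n$'s reads to return $v$ at your cut point. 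The paper instead defines the cut semantically: $\rho_j$ is the \emph{longest} prefix of $T_j$'s solo run such that $\rho_1\cdots\rho_j$ can be extended \emph{neither} with $T_n$ \emph{nor} with $T_{n-1}$ performing $j$ t-reads and returning $nv$ for $X_j$. Maximality of that prefix is exactly what yields both facts you need: $T_n$ returns $v$ in the main execution, and after one more event $e_j$ at least one of the two designated readers must return $nv$.

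But ``at least one of $T_n,T_{n-1}$'' is not ``specifically $T_{n-1}$,'' and this is where your per-index claim fails. For some indices the transaction enabled to see $nv$ after $e_j$ could be $T_n$ itself; for those indices no witness execution of the form of Figure~\ref{sfig:rw-4} exists, no contradiction can be derived, and no RAW/AWAR is forced in $\Read_n(X_j)$. The paper repairs this with a pigeonhole: whichever of the two readers is the forced witness for the majority of the indices is fixed as the witness (named $T_{n-1}$, with index set $\mathbb{J}$, $|\mathbb{J}|\geq\lceil m/2\rceil$), and the RAW/AWAR argument for $T_n$ is carried out only for $j\in\mathbb{J}$ --- which is why the theorem delivers $\lceil (n-3)/2\rceil$ forced patterns rather than the $n-3$ you claim (both are $\Omega(n)$, but your stronger count is unobtainable by this route). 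Without the two-reader cut and this counting step your argument simply has no way to produce the witness, so you must add both. By contrast, your deferral of the ``bookkeeping'' --- splitting $\pi^j$ at its first base-object write, using the absence of an AWAR at that write and the absence of a RAW afterwards to insert $e_j\cdot\beta$ invisibly --- matches the paper's Claim~\ref{cl:ofraw} and is acceptable as a plan; the missing pigeonhole is the real gap.
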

\begin{proof}
For all $j\in \{1,\ldots, m\}$; $m=n-3$, let $v$ be the initial value of t-objects $X_j$ and $Z_j$.
Throughout this proof, we assume that, for all $i\in \{1,\ldots , n\}$, transaction $T_i$ is invoked by process $p_i$.

By OF TM-progress and OF TM-liveness, any opaque and RW DAP TM implementation $M\in \mathcal{OF}$ 
has an execution of the form ${\bar \rho}_1\cdots {\bar \rho}_m$, where for all $j\in \{1,\ldots, m\}$, ${\bar \rho}_j$ 
denotes the t-complete step contention-free execution of transaction $T_j$
that performs $\Read_j(Z_j)\rightarrow v$, writes value $nv \neq v$ to 
$X_j$ and commits.

By construction, any two transactions that participate in ${\bar \rho}_1\cdots {\bar \rho}_n$
are mutually read-write disjoint-access and cannot contend on the same base object.
It follows that for all $1\leq j\leq m$,
${\bar \rho}_j$ is an execution of $M$.

For all $j\in \{1,\ldots, m\}$, we iteratively define an execution $\rho_j$ of $M$ as follows:  
it is the longest prefix of ${\bar \rho}_j$ such that 
$\rho_1\cdots \rho_{j}$ cannot be extended with the complete step contention-free execution fragment
of transaction $T_{n}$ that performs $j$ t-reads: $\Read_n(X_1)\cdots \Read_n(X_j)$
in which $\Read_{n}(X_j) \rightarrow nv$ nor with the 
complete step contention-free execution fragment
of transaction $T_{n-1}$ that performs $j$ t-reads: $\Read_{n-1}(X_1)\cdots \Read_{n-1}(X_j)$ 
in which $\Read_{n-1}(X_j) \rightarrow nv$ (Figure~\ref{sfig:rw-1}).

For any $j\in \{1,\ldots, m\}$, let $e_j$ be the event
transaction $T_j$ is poised to apply in the configuration after $\rho_1\cdots \rho_j$.
Thus, the execution $\rho_1\cdots \rho_j\cdot e_j$ can be extended with the complete step contention-free
executions of at least one of transaction $T_{n}$ or $T_{n-1}$ that performs $j$ t-reads of $X_1, \ldots ,X_j$
in which the t-read of $X_j$ returns the new value $nv$. 
Let $T_{n-1}$ be the transaction that must return the new value for the maximum number of $X_j$'s when
$\rho_1\cdots \rho_j\cdot e_j$ is extended with the t-reads of $X_1,\ldots , X_j$.
We show that, in the worst-case, transaction $T_n$ must perform $\lceil \frac{m}{2} \rceil$ non-overlapping RAW/AWARs
in the course of performing $m$ t-reads of $X_1,\ldots , X_m$ immediately after $\rho_1\cdots \rho_m$.
Symmetric arguments apply for the case when $T_{n}$
must return the new value for the maximum number of $X_j$'s when
$\rho_1\cdots \rho_j\cdot e_j$ is extended with the t-reads of $X_1,\ldots , X_j$.

\vspace{1mm}\noindent\textbf{Proving the RAW/AWAR lower bound.}
We prove that transaction $T_n$ must perform
$\lceil \frac{m}{2} \rceil$ non-overlapping RAWs or AWARs
in the course of performing $m$ t-reads of $X_1,\ldots , X_m$ immediately after the execution
$\rho_1\cdots \rho_m$.
Specifically, we prove that $T_n$ must perform a RAW or an AWAR during the execution of the t-read of each $X_j$
such that $\rho_1\cdots \rho_j\cdot e_j$ can be extended with the complete step contention-free
execution of $T_{n-1}$ as it performs $j$ t-reads of $X_1\ldots X_j$
in which the t-read of $X_j$ returns the new value $nv$.
Let $\mathbb{J}$ denote the of all $j\in \{1,\ldots , m\}$ such that
$\rho_1\cdots \rho_j\cdot e_j$ extended with the complete step contention-free
execution of $T_{n-1}$ performing $j$ t-reads of $X_1\ldots X_j$ must
return the new value $nv$ during the t-read of $X_j$.

We first prove that, for all $j \in \mathbb{J}$, 
$M$ has an execution of the form $\rho_1\cdots \rho_m \cdot \delta_j$ 
(Figures~\ref{sfig:rw-1} and \ref{sfig:rw-2}),
where $\delta_j$ is the complete step contention-free
execution fragment of $T_{n}$ that performs $j$ t-reads: $\Read_{n}(X_1)  \cdots \Read_{n}(X_j)$, each of which
return the initial value $v$.

By definition of $\rho_j$, OF TM-progress and OF TM-liveness, 
$M$ has an execution of the form $\rho_1\cdots \rho_j \cdot \delta_j$. 
By construction, transaction $T_{n}$ is read-write disjoint-access with each transaction $T \in \{T_{j+1},\ldots , T_m\}$ 
in $\rho_1\cdots \rho_j \cdots \rho_m \cdot \delta_j$. Thus, $T_n$ cannot contend with any of the transactions
in $\{T_{j+1},\ldots , T_m\}$, implying that, for all $j\in \{1,\ldots , m\}$, $M$ has an execution of the form
$\rho_1\cdots \rho_m \cdot \delta_j$ (Figure~\ref{sfig:rw-2}).

We claim that, for each $j\in \mathbb{J}$, the t-read of $X_j$ performed by $T_n$ must perform a RAW or an AWAR in the course
of performing $j$ t-reads of $X_1, \ldots , X_j$ immediately after $\rho_1\cdots \rho_m$.
Suppose by contradiction that $\Read_n(X_j)$ does not perform a RAW or an AWAR in 
$\rho_1\cdots \rho_m \cdot \delta_m$.
\begin{claim}
\label{cl:oftmex0}
For all $j \in \mathbb{J}$, $M$ has an execution of the form
$\rho_1\cdots \rho_{j} \cdots \rho_m  \cdot  \delta_{j-1} \cdot e_j \cdot \beta$ where,
$\beta$ is the complete step contention-free execution fragment of transaction $T_{n-1}$ that
performs $j$ t-reads: $\Read_{n-1}(X_1)\cdots \Read_{n-1}(X_{j-1}) \cdot \Read_{n-1}(X_j)$
in which $\Read_{n-1}(X_j)$ returns $nv$.
\end{claim}
\begin{proof}
We observe that transaction $T_{n}$ is read-write disjoint-access with every transaction $T\in \{T_j,T_{j+1},\ldots , T_m\}$ in
$\rho_1\cdots \rho_{j}\cdots \rho_m \cdot  \delta_{j-1}$.
By RW DAP, it follows that $M$ has an execution of the form
$\rho_1\cdots \rho_{j} \cdots \rho_m  \cdot  \delta_{j-1} \cdot e_j$
since $T_n$ cannot perform a nontrivial event on the base object accessed by $T_j$ in the event $e_j$.

By the definition of $\rho_j$, 
transaction $T_{n-1}$ must access the base object to which
$T_j$ applies a nontrivial primitive in $e_j$ to return the value $nv$ of $X_j$ as it performs $j$ t-reads
of $X_1,\ldots , X_j$ immediately after the execution $\rho_1\cdots \rho_{j} \cdots \rho_m  \cdot  \delta_{j-1} \cdot e_j$.
Thus, $M$ has an execution of the form $\rho_1\cdots \rho_{j}  \cdot  \delta_{j-1} \cdot e_j \cdot \beta$.

By construction, transactions $T_{n-1}$ is read-write disjoint-access with every
transaction $T\in \{T_{j+1},\ldots , T_m\}$ in
$\rho_1\cdots \rho_{j} \cdots \rho_m  \cdot  \delta_{j-1} \cdot e_j \cdot \beta$.
It follows that $M$ has an execution of the form 
$\rho_1\cdots \rho_j \cdots \rho_m \cdot \delta_{j-1} \cdot e_j \cdot \beta$.
\end{proof}
\begin{claim}
\label{cl:oftmex1}
For all $j \in \{1,\ldots , m\}$, 
$M$ has an execution of the form 
$\rho_1\cdots \rho_{j} \cdots \rho_m  \cdot \gamma\cdot  \delta_{j-1} \cdot e_j \cdot \beta$,
where $\gamma$ is the t-complete step contention-free execution fragment
of transaction $T_{n-2}$ that writes $nv \neq v$ to $Z_j$ and commits.
\end{claim}
\begin{proof}
Observe that $T_{n-2}$ precedes transactions $T_{n}$ and $T_{n-1}$ in real-time order in the above execution.

By OF TM-progress and OF TM-liveness, transaction $T_{n-2}$ must be committed in
$\rho_1\cdots \rho_{j} \cdots \rho_m  \cdot \gamma$.

Since transaction $T_{n-1}$ is read-write disjoint-access
with $T_{n-2}$ in $\rho_1\cdots \rho_{j} \cdots \rho_m  \cdot \gamma \cdot  \delta_{j-1} \cdot e_j \cdot \beta$,
$T_{n-1}$ does not contend with $T_{n-2}$ on any base object (recall that we associate an edge
with t-objects in the conflict graph only if they are both contained in the write set of some transaction).
Since the execution fragment $\beta$ contains an access to the base object to which $T_j$ performs
a nontrivial primitive in the event $e_j$, $T_{n-2}$ cannot perform a nontrivial event on this base object
in $\gamma$.
It follows that $M$ has an execution of the form 
$\rho_1\cdots \rho_{j} \cdots \rho_m  \cdot \gamma\cdot  \delta_{j-1} \cdot e_j \cdot \beta$
since, it is indistinguishable to $T_{n-1}$
from the execution 
$\rho_1\cdots \rho_{j} \cdots \rho_m  \cdot  \delta_{j-1} \cdot e_j \cdot \beta$ (the existence of which is already
established in Claim~\ref{cl:oftmex0}).
\end{proof}
Recall that transaction $T_n$ is read-write disjoint-access with $T_{n-2}$ in 
$\rho_1\cdots \rho_{j} \cdots \rho_m  \cdot \gamma \cdot  \delta_{j}$.
Thus, $M$ has an execution of the form 
$\rho_1\cdots \rho_{j} \cdots \rho_m  \cdot \gamma\cdot  \delta_{j}$ (Figure~\ref{sfig:rw-3}).

\vspace{1mm}\noindent\textbf{Deriving a contradiction.}
For all $j \in \{1,\ldots , m\}$, we represent the execution fragment 
$\delta_j$ as $\delta_{j-1} \cdot \pi^j$, where $\pi^j$ is the complete execution
fragment of the $j^{th}$ t-read $\Read_{n}(X_j) \rightarrow v$.
By our assumption, $\pi^j$ does not contain a RAW or an AWAR.

For succinctness, let $\alpha=\rho_1\cdots \rho_m \cdot \gamma \cdot \delta_{j-1}$.
We now prove that if $\pi^j$ does not contain a RAW or an AWAR, we can define $\pi^j_{1} \cdot \pi^j_{2}=\pi^j$
to construct an execution of the form
$\alpha  \cdot \pi^j_{1} \cdot e_j \cdot \beta \cdot \pi^j_{2}$ (Figure~\ref{sfig:rw-4}) such that
\begin{itemize}
\item
no event in $\pi^j_1$ is the application of a nontrivial primitive
\item 
$\alpha \cdot \pi^j_{1} \cdot e_j \cdot \beta \cdot \pi^j_{2}$
is indistinguishable to $T_{n}$ from the step contention-free execution
$\alpha \cdot \pi^j_{1} \cdot \pi^j_{2}$
\item
$\alpha \cdot \pi^j_{1}  \cdot e_j \cdot \beta \cdot \pi^j_{2}$
is indistinguishable to $T_{n-1}$ from the step contention-free execution
$\alpha \cdot e_j \cdot \beta$.
\end{itemize}
The following claim defines $\pi^j_1$ and $\pi^j_2$ to construct this execution.
\begin{claim}
\label{cl:ofraw}
For all $j \in \{1,\ldots , m\}$,
$M$ has an execution of the form
$\alpha  \cdot \pi^j_{1} \cdot e_j \cdot \beta \cdot \pi^j_{2}$.
\end{claim}
\begin{proof}
Let $t$ be the first event containing a write to a base object in the execution fragment $\pi^j$.
We represent $\pi^j$ as the execution fragment $\pi^j_{1}\cdot t \cdot \pi^j_{f}$.
Since $\pi^j_1$ does not contain nontrivial events that write to a base object, $\alpha \cdot \pi^j_{1}  \cdot e_j\cdot \beta$
is indistinguishable to transaction $T_{n-1}$ from the step contention-free execution
$\alpha  \cdot e_j\cdot \beta$ (as already proven in Claim~\ref{cl:oftmex1}).
Consequently, $\alpha \cdot \pi^j_{1} \cdot e_j\cdot \beta$
is an execution of $M$.

Since $t$ is not an atomic-write-after-read, $M$ has an execution of the form
$\alpha\cdot \gamma\cdot  \pi^j_{1} \cdot e_j\cdot \beta \cdot t$.
Secondly, since $\pi^j$ does not contain a read-after-write, any read of a base object performed in $\pi^j_{f}$ 
may only be performed to base objects previously written in $t \cdot \pi^j_{f}$.
Thus, $\alpha \cdot \pi^j_{1} \cdot e_j\cdot \beta \cdot t \cdot \pi^j_{f}$
is indistinguishable to $T_{n}$ from the step contention-free execution
$\alpha \cdot \pi^j_1 \cdot t \cdot \pi^j_{f}$. 
But, as already proved, $\alpha \cdot \pi^{j}$ is an execution of $M$.

Choosing $\pi^j_{2}=t \cdot \pi^j_{f}$, it follows that
$M$ has an execution of the form $\alpha  \cdot \pi^j_{1} \cdot  e_j\cdot \beta \cdot \pi^j_{2}$.
\end{proof}
We have now proved that, for all $j \in \{1,\ldots , m\}$,
$M$ has an execution of the form
$\rho_1\cdots \rho_m \cdot \gamma \cdot \delta_{j-1} \cdot \pi^j_{1} \cdot  e_j\cdot \beta \cdot \pi^j_{2}$ (Figure~\ref{sfig:rw-4}). 

The execution in Figure~\ref{sfig:rw-4}
is not opaque.
Indeed, in any serialization the following must hold. 
Since $T_{n-1}$ reads the value written by $T_j$ in $X_j$,
$T_j$ must be committed.  
Since $\Read_{n}(X_j)$ returns the initial value $v$, 
$T_{n}$ must precede $T_j$.
The committed transaction $T_{n-2}$, which writes a new value to $Z_j$, must precede 
$T_{n}$ to respect the real-time order on transactions. 
However, $T_j$ must precede $T_{n-2}$ since $\Read_j(Z_j)$ returns the initial value of $Z_j$.
The cycle $T_j\rightarrow T_{n-2} \rightarrow T_{n} \rightarrow T_j$
implies that there exists no such a serialization.

Thus, for each $j\in \mathbb{J}$, transaction $T_n$ must perform a RAW or an AWAR during 
the t-read of $X_j$ in the course
of performing $m$ t-reads of $X_1, \ldots , X_m$ immediately after $\rho_1\cdots \rho_m$.
Since $|\mathbb{J}|\geq \lceil \frac{(n-3)}{2} \rceil$, in the worst-case, $T_n$ must perform $\Omega(n)$ RAW/AWARs during the execution of $m$ t-reads
immediately after $\rho_1\cdots \rho_m$.
\end{proof}
%
%
%
\section{Upper bound for opaque progressive TMs}
\label{sec:iclb}
In this section, we describe a progressive, opaque TM implementation $LP$ (Algorithm~\ref{alg:ic})
that is not subject to any of the lower bounds inherent to implementations in $\mathcal{OF}$ (cf. Figure~\ref{fig:main}).
Our implementation satisfies strict DAP, every transaction performs at most a single RAW and every t-read operation
incurs $O(1)$ memory stalls in any execution.

\vspace{1mm}\noindent\textbf{Base objects.}
For every t-object $X_j$, $LP$ maintains a base object $v_j$ that stores the \emph{value} of $X_j$.
Additionally, for each $X_j$, there is a \emph{bit} $L_j$, which if set, indicates the presence of an updating transaction
writing to $X_j$.
For every process $p_i$ and t-object $X_j$,
$LP$ maintains a \emph{single-writer bit} $r_{ij}$ (only $p_i$ is allowed to write to $r_{ij}$).
Each of these base objects may be accessed only via read and write primitives.

\vspace{1mm}\noindent\textbf{Updating transactions.}
The $\Write_k(X,v)$ implementation by process $p_i$ simply stores the value $v$ locally, deferring the actual updates
to $\TryC_k$.
During $\TryC_k$, process $p_i$ attempts to obtain exclusive write access to every
$X_j\in \Wset(T_k)$. This is realized through the single-writer bits, which ensure that no other transaction
may write to base objects $v_j$ and $L_j$ until $T_k$ relinquishes its exclusive write access to $\Wset(T_k)$.
Specifically, process $p_i$ writes $1$ to each $r_{ij}$, then checks
that no other process $p_t$ has written $1$ to any $r_{tj}$ by executing a series of reads (incurring a single RAW).
If there exists such a process that concurrently contends on write set of $T_k$, 
for each $X_j\in \Wset(T_k)$, $p_i$ writes $0$ to $r_{ij}$ and returns $A_k$. 
If successful in obtaining exclusive write access to $\Wset(T_k)$, $p_i$ sets
the bit $L_j$ for each $X_j$ in its write set.
Implementation of $\TryC_k$ now checks if any t-object in its read set is concurrently contended by another transaction
and then validates its read set. 
If there is contention on the read set or validation fails, indicating the presence of a concurrent conflicting
transaction, the transaction is aborted. If not, $p_i$ writes the values of the t-objects to shared memory and 
relinquishes exclusive write access to each $X_j \in \Wset(T_k)$ by writing $0$ to each of the base objects $L_j$ 
and $r_{ij}$.

\vspace{1mm}\noindent\textbf{Read operations.}
The implementation first reads the value of t-object $X_j$ from base object $v_j$ and then
reads the bit $L_j$ to detect contention with an updating transaction.
If $L_j$ is set, the transaction is aborted; if not, read validation is performed on the entire read set. If the validation fails,
the transaction is aborted. Otherwise, the implementation returns the value of $X_j$.
For a read-only transaction $T_k$, $\TryC_k$ simply returns the commit response.

\vspace{1mm}\noindent\textbf{Complexity.}
Observe that our implementation uses invisible reads since read-only transactions do not apply any nontrivial primitives.
Any updating transaction performs at most a single RAW in the course of acquiring exclusive write access to
the transaction's write set. 
Consequently, every transaction performs $O(1)$ non-overlapping RAWs in any execution. 

Recall that a transaction may write to base objects $v_j$ and $L_j$ only after obtaining exclusive write access to t-object $X_j$, 
which in turn is realized via single-writer base objects.
Thus, no transaction performs a write to any base object $b$ 
immediately after a write to $b$ by 
another transaction, \emph{i.e.}, every transaction incurs only $O(1)$ memory stalls on account of any event it performs.
Since the $\Read_k(X_j)$ implementation only accesses base objects $v_j$ and $L_j$, and the validating $T_k$'s read set does not
cause any stalls, it follows that each t-operation performs $O(1)$ stalls in every execution.

Moreover, $LP$ ensures that any two transactions $T_i$ and $T_j$ access the same base object \emph{iff}
there exists $X\in \Dset(T_i) \cap \Dset(T_j)$ (strict DAP)
and maintains exactly one version for every t-object at any prefix of the execution.
\begin{theorem}
\label{th:ic}
Algorithm~\ref{alg:ic} describes a progressive, opaque and strict DAP TM implementation $LP$ that provides
wait-free TM-liveness, uses invisible reads and in every execution $E$ of $LP$, 
\begin{itemize}
\item every transaction $T\in \ms{txns}(E)$ applies only read and write primitives in $E$,
\item every transaction $T \in \ms{txns}(E)$ performs at most a single RAW,
\item for every transaction $T \in \ms{txns}(E)$, every t-read operation performed by $T$
incurs $O(1)$ memory stalls in $E$.
\end{itemize}
\end{theorem}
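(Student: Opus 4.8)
The statement bundles together the correctness (opacity, progressiveness) and complexity (primitives used, RAW count, stall count, invisible reads, strict DAP) guarantees of $LP$, so the plan is to treat each claim separately, leaning on a few structural invariants of Algorithm~\ref{alg:ic} that I would establish first. Several claims follow almost directly by inspecting the code. The base objects $v_j$, $L_j$ and $r_{ij}$ are touched only through read and write primitives, which gives the read/write-only claim. Wait-free TM-liveness follows because no operation ever waits or spins: a $\Read_k$ reads $v_j$, reads $L_j$, and validates a finite read set, while $\TryC_k$ writes and reads a finite set of single-writer bits, sets a finite set of $L_j$ bits, validates, and writes a finite set of values---so every t-operation returns after a bounded number of its \emph{own} steps, irrespective of other processes. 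For invisible reads I would note that a read-only transaction applies only trivial primitives; for an updating $T_k$ with $\Rset(T_k)\neq\emptyset$ I would build the required $E'$ by deleting the read operations of $T_k$ (turning it write-only), which is sound precisely because those reads never modify a base object, hence are invisible to every other transaction and preserve the non-overlapping step order. Strict DAP follows since every base object is indexed by a single t-object: $v_j$ and $L_j$ are accessed only by transactions with $X_j\in\Dset(T)$, and $r_{tj}$ is accessed by two transactions only when $X_j$ lies in both their write sets, so contention forces a common t-object.

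For the RAW and stall bounds I would first establish the key synchronization invariant: the single-writer bits implement a one-shot try-lock, so writes to $v_j$ and $L_j$ happen only under exclusive access to $X_j$, and at most one transaction holds such access at a time. The unique non-overlapping RAW of an updating transaction is then localized to lock acquisition---writing the bits $r_{ij}$ and then reading the bits $r_{tj}$---and I would verify that no further write-then-read-to-a-fresh-object pattern contributes an additional non-overlapping RAW, while read-only transactions perform none. For stalls, the same exclusive-access invariant (together with the single-writer property of $r_{ij}$) yields that no base object ever carries two consecutive pending nontrivial events by distinct processes, so each event incurs $O(1)$ stalls; in particular the reads issued during a t-read and its validation are not preceded by concurrent writes to the same object, giving $O(1)$ stalls per t-read.

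Progressiveness I would prove by enumerating the abort points of the algorithm and exhibiting a concurrent conflicting transaction for each. A $\Read_k$ aborts only after finding $L_j$ set or after a failed validation, both of which require a concurrent transaction writing a t-object in $\Rset(T_k)$; $\TryC_k$ aborts only after observing a competing $r_{tj}=1$ or after read-set contention or validation failure, each of which exhibits a concurrent transaction writing a t-object in $\Dset(T_k)$. In every case the witness conflicts with $T_k$, so an abort never occurs step-contention-free and the conflict condition of progressiveness is met.

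The main obstacle is opacity. The plan is to assign each transaction a serialization point: for a committed updating transaction, the moment within $\TryC_k$ at which it has validated its read set and is about to write its values; for a committed read-only transaction, the point of its last successful read/validation; and for aborted or t-incomplete transactions, the point of their last read, so that their partial view is serialized as well. Ordering $\txns(E)$ by these points yields a candidate legal t-sequential history $S$ that respects real-time order, since each transaction's serialization point lies between its first and last events. The crux is legality: I must show every t-read returning $u$ is preceded in $S$ by a committed write of $u$ to that t-object with no intervening committed write. This is where the $L_j$ checks and read-set validation do the work---using the exclusive-access invariant, I would argue that if a committing reader missed a concurrent committed update, or if such an update were interleaved before the reader's serialization point, then either the $L_j$ bit would have been seen set or validation would have failed, forcing an abort and contradicting that the reader committed. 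Reconciling the serialization points of write-write-disjoint committed transactions (which never synchronize, by strict DAP) with this legality requirement, and handling the snapshots of aborted transactions uniformly, is the delicate part of the argument.
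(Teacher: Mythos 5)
Your proposal is correct and follows essentially the same route as the paper's proof: the same claim-by-claim decomposition (inspection for primitives, liveness, invisible reads and strict DAP; lock acquisition as the single multi-RAW; the single-writer exclusive-access invariant, the paper's Lemma~\ref{lm:mutex}, for the $O(1)$-stall bound; abort-point enumeration for progressiveness), and the same opacity argument via serialization points placed inside each transaction's interval (committed updaters at tryCommit, readers/aborted at their last successful read) with legality enforced by the $L_j$ checks and read-set validation. The only cosmetic difference is that the paper anchors a committed updater's serialization point at the response of \emph{acquire} rather than after validation, which changes nothing in the argument.
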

%
%
%
%
\section{Related work}
\label{sec:related}
The lower bounds and impossibility results presented in this paper
apply to obstruction-free TMs, such as DSTM~\cite{HLM+03},
FSTM~\cite{fraser}, and others~\cite{astm,nztm,fraser}. 
Our upper bound is inspired by the progressive TM of~\cite{KR11}. 

%
Attiya \emph{et al.}~\cite{AHM09} were the first to formally define DAP for TMs.
They proved the impossibility of implementing weak DAP strictly serializable
TMs that use invisible reads and guarantee that read-only transactions eventually commit, while 
updating transactions are guaranteed to commit only when they run sequentially~\cite{AHM09}.
This class is orthogonal to the class of
obstruction-free TMs, as is the proof technique used to establish the impossibility.

Perelman \emph{et al.}~\cite{PFK10} showed that
\emph{mv-permissive} weak DAP TMs cannot be implemented. 
In mv-permissive TMs, only updating transactions may
be aborted, and only when they conflict with other updating transactions.
In particular, read-only transactions cannot be aborted and updating
transactions may sometimes be aborted even in the absence of step contention, which makes
the impossibility result in~\cite{PFK10} unrelated to ours.  

Guerraoui and Kapalka~\cite{tm-book} proved that it is impossible to implement
strict DAP obstruction-free TMs. They also proved that a strict serializable TM that provides OF TM-progress and
wait-free TM-liveness cannot be implemented using only read and write primitives.
We show that progressive TMs are not subject to either of these lower bounds.

Attiya \emph{et al.} introduced the RAW/AWAR metric and proved that it is impossible to derive RAW/AWAR-free implementations of
a wide class of data types that include \emph{sets}, \emph{queues} and \emph{deadlock-free mutual exclusion}.
The metric was previously used in~\cite{KR11}
to measure the complexity of read-only transactions in a strictly
stronger (than $\mathcal{OF}$) class of \emph{permissive} TMs.
Detailed coverage on memory fences and the RAW/AWAR metric can be found in \cite{McKenney10}.

To derive the linear lower bound on the memory stall complexity of obstruction-free TMs,
we adopted the definition of a $k$-\emph{stall execution} and certain proof steps 
from~\cite{G05,AGHK09}.
%

%
\section{Discussion}
\label{sec:disc}
\vspace{1mm}\noindent\textbf{Lower bounds for obstruction-free TMs.}
We chose obstruction-freedom to elucidate non-blocking TM-progress since it is a very weak
non-blocking progress condition~\cite{HS11-progress}. 
As highlighted in the paper by Ennals~\cite{Ennals05}, (1) obstruction-freedom
increases the number of concurrently executing transactions since transactions cannot wait for inactive transactions to complete,
and (2) while performing a t-read, obstruction-free TMs like \cite{HLM+03, fraser} must forcefully
abort pending conflicting transactions.
Intuitively, (1) allows us to construct executions in which
some pending transaction is stalled while accessing a base object by all other concurrent transactions waiting to apply 
nontrivial primitives on the base object.
Observation (2) inspires the proof of the impossibility
of invisible reads in Theorem~\ref{th:ir}. Typically, the reading transaction must acquire exclusive ownership
of the object via mutual exclusion or employing a read-modify-write primitive like \emph{compare-and-swap},
motivating the linear lower bound on expensive synchronization in Theorem~\ref{th:oftriv}.
In practice though, obstruction-free TMs may possibly circumvent these lower bounds in models that allow the 
use of \emph{contention managers}~\cite{dstm-contention}.

Observe that Theorems~\ref{th:ir} and \ref{th:oftmstalls} assume strict serializability and thus,
also hold under the assumption of stronger TM-correctness conditions like opacity, 
\emph{virtual-world consistency}~\cite{damien-vw} and \emph{TMS}~\cite{TMS}.

Since there are at most $n$ concurrent transactions,
we cannot do better than $(n-1)$ stalls (cf. Definition~\ref{def:stalls}).
Thus, the lower bound of Theorem~\ref{th:oftmstalls} is tight. 
Moreover, we conjecture that the linear (in $n$) lower bound of Theorem~\ref{th:oftriv} for RW DAP opaque obstruction-free TMs 
can be strengthened to be linear in the size of the transaction's read set.  
Then, Algorithm~\ref{alg:oftm} in Appendix~\ref{app:upperof}
would allow us to
establish a linear tight bound (in the size of the transaction's read set) for RW DAP opaque obstruction-free TMs.

\vspace{1mm}\noindent\textbf{Progressive vs. obstruction-free TMs.}
Progressiveness is a blocking TM-progress condition that is satisfied by several popular TM implementations
like \emph{TL2}~\cite{DSS06} and \emph{NOrec}~\cite{norec}.
In general, progressiveness and obstruction-freedom are incomparable.
On the one hand, 
a t-read $X$ by a transaction $T$ that runs step contention-free
from a configuration that contains an incomplete t-write to $X$ is typically \emph{blocked} or aborted in lock-based TMs; 
obstruction-free TMs however, must ensure that $T$ must complete its t-read of $X$ without blocking or aborting.
On the other hand, progressiveness requires two non-conflicting transactions to commit even
in executions that are not step contention-free; but this is not guaranteed by obstruction-freedom.

Intuitively, progressive implementations are not forced to abort
conflicting transactions, which allows us to employ invisible reads, derive constant stall and RAW/AWAR implementations.
While it is relatively easy to derive standalone progressive TM implementations that are not individually 
subject to the lower bounds
of obstruction-free TMs (cf. Figure~\ref{fig:main}), our progressive opaque TM implementation $LP$ 
is not subject to any of the lower bounds we prove for implementations in $\mathcal{OF}$.

Circa. 2005, several papers presented the case for a shift from TMs that provide obstruction-free TM-progress
to lock-based progressive TMs~\cite{DStransaction06,DSS06, Ennals05}. 
They argued that lock-based TMs tend to outperform
obstruction-free ones by allowing for simpler algorithms with lower overheads and their inherent progress issues
may be resolved using timeouts and contention-managers.
The lower bounds for non-blocking TMs and the complexity gap with our progressive TM implementation established
in this paper suggest that this course correction was indeed justified.
\newpage
\bibliography{references}
\appendix
\section{Opaque progressive TM implementation $LP$}
\label{app:upper}
\begin{algorithm}[!h]
\caption{Strict DAP progressive opaque TM implementation $LP$; code for $T_k$ executed by process $p_i$}
\label{alg:ic}
\begin{algorithmic}[1]
  	\begin{multicols}{2}
  	{\footnotesize
	\Part{Shared base objects}{
		\State $v_j$, for each t-object $X_j$, allows reads and writes
		\State $r_{ij}$, for each process $p_i$ and t-object $X_j$
		\State ~~~~~single-writer bit
		\State ~~~~~allows reads and writes
		\State $L_j$, for each t-object $X_j$ 
		\State ~~~~~allows reads and writes
	}\EndPart
	\Part{Local variables}{
		\State $\ms{Rset}_k,\ms{Wset}_k$ for every transaction $T_k$;
		\State ~~~~dictionaries storing $\{X_m, v_m\}$
	}\EndPart	

	\Statex
	\Part{\Read$_k(X_j)$}{
		\If{$X_j \not\in \Rset(T_k)$}
		
		\State $[\textit{ov}_j,k_j ] := \Read(v_j)$ \label{line:read2}
		\State $\Rset(T_k) := \Rset(T_k)\cup\{X_j,[\textit{ov}_j,k_j]\}$ \label{line:rset}
		\If{$\Read(L_j)\neq 0$} \label{line:abort0}
			\Return $A_k$ \EndReturn
		\EndIf
		\If{$\lit{validate}()$} \label{line:read-validate}
			\Return $A_k$ \EndReturn
		\EndIf
		\Return $\textit{ov}_j$ \EndReturn
		
		\Else
		    
		\State $[\textit{ov}_j, \bot] :=\Rset(T_k).\lit{locate}(X_j)$
		\Return $\textit{ov}_j$ \EndReturn
		
		\EndIf
   	}\EndPart
	\Statex
	\Part{\Write$_k(X_j,v)$}{
		\State $\textit{nv}_j := v$
		\State $\Wset(T_k) := \Wset(T_k)\cup\{X_j\}$
		\Return $\ok$ \EndReturn
		
   	}\EndPart
	\Statex
%
	
	\Part{\TryC$_k$()}{
		\If{$|\Wset(T_k)|= \emptyset$}
			\Return $C_k$ \EndReturn \label{line:return}
		\EndIf
				
		\State locked := $\lit{acquire}(\Wset(T_k))$\label{line:acq} 
		\If{$\neg$ locked} \label{line:abort2} 
	 		\Return $A_k$ \EndReturn
	 	\EndIf
	 	
		\If{$\lit{isAbortable}()$} \label{line:abort3}
			\State $\lit{release}(\Wset(T_k))$ 
			\Return $A_k$ \EndReturn
		\EndIf
		\Statex
		\Comment{Exclusive write access to each $v_j$}
		\ForAll{$X_j \in \Wset(T_k)$}
	 		 \State $\Write(v_j,[\textit{nv}_j,k])$ \label{line:write}
	 	\EndFor		
		\State $\lit{release}(\Wset(T_k))$   	\label{line:rellock}	
   		\Return $C_k$ \EndReturn
   	 }\EndPart		
	
	\newpage
	\Part{Function: $\lit{release}(Q)$}{
		\ForAll{$X_j \in Q$}	
 			\State \Write$(L_{j},0)$ \label{line:wlockrelease}
		\EndFor
  		\ForAll{$X_j \in Q$}	
 			\State \Write$(r_{ij},0)$ \label{line:rel1}
		\EndFor
		\Return $ok$ \EndReturn
	}\EndPart

 	\Statex
 	\Part{Function: $\lit{acquire}(Q$)}{
   		\ForAll{$X_j \in Q$}	
			\State \Write$(r_{ij},1)$ \label{line:acq1}
		\EndFor
		\If{$\exists X_j \in Q;t\neq k : \Read(r_{tj})=1$} \label{line:lock}
			\ForAll{$X_j \in Q$}	
				\State \Write$(r_{ij},0)$
			\EndFor
			\Return $\false$ \EndReturn
		\EndIf
		\Statex
		\Comment{Exclusive write access to each $L_j$}
		\ForAll{$X_j \in Q$}
		  \State $\Write(L_j,1)$ \label{line:wlockwrite}
		\EndFor
		\Return $\true$ \EndReturn
	}\EndPart		
	 \Statex
	 
	\Part{Function: $\lit{isAbortable()}$ }{
		\If{$\exists X_j \in \Rset(T_k): X_j\not\in \Wset(T_k)\wedge \Read(L_j)\neq 0$} \label{line:valid0} \label{line:isl}
			\Return $\true$ \EndReturn
		\EndIf
		
		\If{$\lit{validate}()$} 
			\Return $\true$ \EndReturn
		\EndIf
		\Return $\false$ \EndReturn
	}\EndPart
	\Statex
	\Part{Function: $\lit{validate()}$ }{
		
		\Comment{Read validation}
		\If{$\exists X_j \in Rset(T_k)$:$[\textit{ov}_j,k_j]\neq \Read(v_j)$} \label{line:valid} 
			\Return $\true$ \EndReturn
		\EndIf
		\Return $\false$ \EndReturn
	}\EndPart
	}
	\end{multicols}
  \end{algorithmic}
\end{algorithm}
In this section, we describe our blocking TM implementation $LP$ that satisfies progressiveness and opacity~\cite{tm-book}.
We begin with the formal definition of \emph{opacity}.

For simplicity of presentation, we assume that each execution $E$
begins with an ``imaginary'' transaction $T_0$ that writes initial
values to all t-objects and commits before any other transaction
begins in $E$.
Let $E$ be a 
t-sequential execution.
For every operation $\Read_k(X)$ in $E$,
we define the \emph{latest written value} of $X$ as follows:
(1) If $T_k$ contains a $\Write_k(X,v)$ preceding $\Read_k(X)$,
then the latest written value of $X$ is the value of the latest such write to $X$.
(2) Otherwise, if $E$ contains a $\Write_m(X,v)$,
$T_m$ precedes $T_k$, and $T_m$ commits in $E$,
then the latest written value of $X$ is the value
of the latest such write to $X$ in $E$.
(This write is well-defined since $E$ starts with $T_0$ writing to
all t-objects.)
We say that $\Read_k(X)$ is \emph{legal} in a t-sequential execution $E$ if it returns the
latest written value of $X$, and $E$ is \emph{legal}
if every $\Read_k(X)$ in $H$ that does not return $A_k$ is legal in $E$.

For a history $H$, a \emph{completion of $H$}, denoted ${\bar H}$,
is a history derived from $H$ through the following procedure:
(1) for every incomplete t-operation $op_k$ of $T_k \in \txns(H)$ in $H$,
if $op_k=\Read_k \vee \Write_k$, 
insert $A_k$ somewhere after the invocation of $op_k$; 
otherwise, if $op_k=\TryC_k()$, 
insert $C_k$ or $A_k$ somewhere after the last event of $T_k$.
(2) for every complete transaction $T_k$ that is not t-complete, insert $\textit{tryC}_k\cdot A_k$ somewhere after the 
last event of transaction $T_k$.
\begin{definition}
\label{def:opaque}
A finite history $H$ is \emph{opaque} if there
is a legal t-complete t-sequential history $S$,
such that
(1) for any two transactions $T_k,T_m \in \txns(H)$,
if $T_k \prec_H^{RT} T_m$, then $T_k$ precedes $T_m$ in $S$, and
(2) $S$ is equivalent to a completion of $H$.

A finite history $H$ is \emph{strictly serializable} if there
is a legal t-complete t-sequential history $S$,
such that
(1) for any two transactions $T_k,T_m \in \txns(H)$,
if $T_k \prec_H^{RT} T_m$, then $T_k$ precedes $T_m$ in $S$, and
(2) $S$ is equivalent to $\ms{cseq}(\bar H)$, where $\bar H$ is some
completion of $H$ and $\ms{cseq}(\bar H)$ is the subsequence of $\bar H$ reduced to
committed transactions in $\bar H$.

We refer to $S$ as a \emph{serialization} of $H$.
\end{definition}
We now prove that $LP$ implements an opaque TM.

We introduce the following technical definition:
process $p_i$ \emph{holds a lock on $X_j$ after an execution $\pi$ of Algorithm~\ref{alg:ic}} if
$\pi$ contains the invocation of \textit{acquire($Q$)}, $X_j\in Q$ by
$p_i$ that returned \emph{true}, but does not contain a subsequent
invocation of \textit{release($Q'$)}, $X_j\in Q'$, by $p_i$ in $\pi$. 
\begin{lemma}
\label{lm:mutex}
For any object $X_j$, and any execution
$\pi$ of Algorithm~\ref{alg:ic}, there exists at most one process that \emph{holds} a lock on
$X_j$ after $\pi$. 
\end{lemma}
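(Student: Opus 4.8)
The plan is to prove the lemma by contradiction, exploiting two features of Algorithm~\ref{alg:ic}: each $r_{ij}$ is a \emph{single-writer} bit, written only by $p_i$, and \textit{acquire} follows a ``doorway'' discipline --- a process first announces its intent by writing $1$ to its own bits (line~\ref{line:acq1}) and only afterward reads the competitors' bits (line~\ref{line:lock}). Suppose, toward a contradiction, that two distinct processes $p_i$ and $p_{i'}$ both hold a lock on $X_j$ after $\pi$. By the definition of \emph{holds a lock}, each has an invocation of \textit{acquire}$(Q)$ (resp. \textit{acquire}$(Q')$) with $X_j\in Q$ (resp. $X_j\in Q'$) that returned $\true$, with no matching subsequent \textit{release} in $\pi$; call these invocations $A_i$ and $A_{i'}$.

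First I would pin down, inside each successful acquire, the ``doorway'' write $e_i$ in which $p_i$ sets $r_{ij}\gets 1$ at line~\ref{line:acq1} during $A_i$, and the read $f_{i'}$ in which $p_{i'}$ reads $p_i$'s bit $r_{ij}$ at line~\ref{line:lock} during $A_{i'}$ (and symmetrically $e_{i'}$, $f_i$). Since $A_{i'}$ returned $\true$, the check at line~\ref{line:lock} failed to find a set competitor bit, so in particular $f_{i'}$ read $r_{ij}=0$; likewise $f_i$ read $r_{i'j}=0$. Within each acquire the doorway writes precede the checking reads in program order, so $e_i\prec f_i$ and $e_{i'}\prec f_{i'}$ in $\pi$.

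The crux is to show that $r_{ij}$ remains equal to $1$ from $e_i$ through the end of $\pi$. Because $r_{ij}$ is single-writer, only $p_i$ ever changes it, and $p_i$ writes $0$ to it in exactly two places: the abort branch of a \emph{failed} acquire and \textit{release} (line~\ref{line:rel1}). Since $A_i$ succeeded and $p_i$ holds the lock, no \textit{release} touching $X_j$ follows $A_i$; and since processes issue transactions sequentially and invoke \textit{acquire} at most once per $\TryC$, the lock-holder $p_i$ remains blocked inside the $\TryC$ of its current transaction and performs no further acquire on $X_j$ after $A_i$. Hence nothing resets $r_{ij}$ after $e_i$, so its value is $1$ throughout the interval from $e_i$ to the end of $\pi$. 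Now assume without loss of generality that $e_i$ precedes $e_{i'}$. Then $f_{i'}$, occurring after $e_{i'}$ and hence after $e_i$, reads $r_{ij}$ while its value is $1$ --- contradicting that $A_{i'}$ passed the check at line~\ref{line:lock}. This contradiction establishes that at most one process holds the lock on $X_j$.

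I expect the main obstacle to be the persistence argument of the third paragraph: justifying rigorously, from well-formedness and the code structure, that a lock-holder cannot have reset $r_{ij}$ between its doorway write $e_i$ and the end of $\pi$ --- in particular, ruling out an interleaved failed re-acquisition that would clear the bit. Once that invariant is secured, the remainder is the standard single-writer-bit mutual-exclusion argument.
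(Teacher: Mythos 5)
Your proof is correct and takes essentially the same approach as the paper's: a contradiction argument exploiting the doorway discipline of \textit{acquire} (write your own single-writer bit at line~\ref{line:acq1}, then read the competitors' bits at line~\ref{line:lock}) to derive an ordering under which the second acquirer must observe the first acquirer's bit set to $1$. Your treatment is in fact more careful than the paper's, which leaves the persistence of $r_{ij}=1$ (no intervening reset via \textit{release} or a failed re-acquisition) implicit.
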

\begin{proof}
Assume, by contradiction, that there exists an execution $\pi$ after which processes $p_i$ and $p_k$
\emph{hold} a lock on the same object, say $X_j$. 
In order to hold the lock on $X_j$, process $p_i$ writes $1$ to register
$r_{ij}$ and then checks if any other process $p_k$ has written $1$ to $r_{kj}$.  
Since the corresponding operation {\it acquire(Q)}, $X_j \in Q$
invoked by $p_i$ returns {\it true}, $p_i$ read $0$ in $r_{kj}$ in Line~\ref{line:lock}. 
But then $p_k$ also writes $1$ to $r_{kj}$ and later reads that
$r_{ij}$ is 1. 
This is because $p_k$ can write $1$ to $r_{kj}$ only after the read of
$r_{kj}$ returned $0$ to $p_i$ which is preceded by the write of $1$ to
$r_{ij}$. 
Hence, there exists an object $X_j$ such that $r_{ij}=1;i\neq k$, 
but the conditional in Line~\ref{line:lock} returns {\it true} to process $p_k$--- a contradiction. 
\end{proof}
\begin{observation}
\label{ob:write}
Let $\pi$ be any execution of Algorithm~\ref{alg:ic}. Then, for any updating transaction $T_k \in \ms{txns}(\pi)$
executed by process $p_i$
writes to $L_j$ (in Line~\ref{line:wlockwrite}) or $v_j$ (in Line~\ref{line:write}) 
for some $X_j \in \Wset(T_k)$ immediately after $\pi$ \emph{iff} $p_i$ holds the lock on $X_j$ after $\pi$.
\end{observation}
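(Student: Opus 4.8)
The plan is to prove Observation~\ref{ob:write} by direct inspection of Algorithm~\ref{alg:ic}, using Lemma~\ref{lm:mutex} to guarantee exclusivity. First I would record the structural fact that the only lines at which $p_i$ performs a write to $v_j$ or sets $L_j$ to $1$ are Line~\ref{line:write} (inside $\TryC_k$) and Line~\ref{line:wlockwrite} (inside the call \textit{acquire}$(\Wset(T_k))$); the only other writes touching these objects are the \emph{clearing} writes $L_j:=0$ and $r_{ij}:=0$ performed by \textit{release} at Lines~\ref{line:wlockrelease} and \ref{line:rel1}. Lock ownership is mediated entirely by the single-writer bits $r_{ij}$ together with the test at Line~\ref{line:lock}, so the task reduces to locating each of the two writing lines relative to the matching \textit{acquire}/\textit{release} bracket.

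For the ``only if'' direction I would case-split on which write $p_i$ is poised to apply immediately after $\pi$. If it is the write of $v_j$ at Line~\ref{line:write} for some $X_j\in\Wset(T_k)$, then $p_i$ has reached this point through $\TryC_k$: the guard at Line~\ref{line:abort2} forced \textit{acquire}$(\Wset(T_k))$ to have returned $\true$ with $X_j\in\Wset(T_k)=Q$, the guard at Line~\ref{line:abort3} was not taken, and \textit{release} at Line~\ref{line:rellock} has not yet been invoked; hence by definition $p_i$ holds the lock on $X_j$. If instead it is the write $L_j:=1$ at Line~\ref{line:wlockwrite}, then $p_i$ is inside \textit{acquire}$(\Wset(T_k))$ having already passed the test at Line~\ref{line:lock} (it read $0$ in every $r_{tj}$, $t\neq k$); completing the pending Line~\ref{line:wlockwrite} writes returns $\true$ from \textit{acquire}, so upon that transition $p_i$ holds the lock, and Lemma~\ref{lm:mutex} certifies it is the unique holder. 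For the ``if'' direction I would use that holding the lock on $X_j$ means $\pi$ contains an \textit{acquire}$(Q)$, $X_j\in Q=\Wset(T_k)$, by $p_i$ returning $\true$ with no subsequent \textit{release}$(Q')$, $X_j\in Q'$; by well-formedness $p_i$ executes a single $\TryC_k$ sequentially, so the only base-object writes it performs to the objects associated with $X_j$ within this interval are exactly the $L_j:=1$ at Line~\ref{line:wlockwrite} and the $v_j$ write at Line~\ref{line:write}, giving the converse.

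The main obstacle is the precise timing of the $L_j:=1$ write at Line~\ref{line:wlockwrite}: it is applied strictly inside \textit{acquire}, just before that call returns $\true$, whereas the definition of ``holds the lock'' is predicated on \textit{acquire} having already returned $\true$. I would reconcile this by treating the final Line~\ref{line:wlockwrite} write together with the purely local return of \textit{acquire} as a single transition into the lock-holding state, so that the act of writing $L_j$ coincides with acquiring the lock. A secondary subtlety is that, while holding the lock, $p_i$ may be poised at a \emph{read} inside \textit{isAbortable} (Lines~\ref{line:isl} and \ref{line:valid}) rather than at a write; accordingly the biconditional is to be read as identifying the \emph{write events} to $v_j$ and $L_j$ with $p_i$'s lock-holding interval on $X_j$, rather than as an equivalence over every enabled event, and Lemma~\ref{lm:mutex} ensures that no other process writes these objects during that interval.
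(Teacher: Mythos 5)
Your proof is correct and is essentially the argument the paper intends: the paper states Observation~\ref{ob:write} without any written proof, treating it as immediate from the code, and your inspection --- the only writes to $v_j$ and $L_j$ (apart from the clearing writes in \textit{release}) are at Lines~\ref{line:write} and \ref{line:wlockwrite}, both bracketed inside a successful \textit{acquire}/\textit{release} pair of $\TryC_k$ --- is exactly that implicit inspection made explicit. The two subtleties you flag are genuine loosenesses in the paper's statement rather than defects of your argument: under the literal definition of lock-holding the write at Line~\ref{line:wlockwrite} precedes the return of \textit{acquire} (though once the test at Line~\ref{line:lock} passes the call can no longer return $\false$, so your ``single transition'' reading is sound), and a lock-holding process may indeed be poised at a read inside \textit{isAbortable}; your resolution of reading the ``iff'' as identifying the write events on $v_j$ and $L_j$ with the lock-holding interval, rather than as a claim about every enabled event, recovers precisely the property the paper actually uses, namely that such writes occur only under the lock, which together with Lemma~\ref{lm:mutex} yields the $O(1)$-stall claim.
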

\begin{lemma}
\label{lm:icopaque}
Algorithm~\ref{alg:ic} implements an opaque TM.
\end{lemma}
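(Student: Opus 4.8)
The plan is to verify Definition~\ref{def:opaque} directly: for every finite history $H$ produced by $LP$ I will exhibit a legal, t-complete, t-sequential history $S$ that is equivalent to a completion $\bar H$ of $H$ and respects $\prec_H^{RT}$. The serialization $S$ is obtained by assigning each transaction a \emph{serialization point} $\delta_k$ that falls inside its own execution interval and then ordering transactions by these points. First I would fix the completion: an incomplete $\Read_k$/$\Write_k$ is completed with $A_k$; an incomplete $\TryC_k$ is completed with $C_k$ if $T_k$ has already written all of $\Wset(T_k)$ and released its locks (Line~\ref{line:rellock}), and with $A_k$ otherwise. For the serialization points I would place $\delta_k$ of a committed updating transaction at its \emph{last} write to a base object $v_j$ (Line~\ref{line:write}), i.e.\ at the end of its write phase just before \textit{release}; every other transaction (read-only, or aborted, and hence --- by inspection of $\TryC_k$ --- one that never writes any $v_j$) is placed at the response of its last successful t-read, or at its first event if it has none.

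Real-time order and equivalence are then immediate. Each $\delta_k$ lies between the first and last event of $T_k$, so if $T_k \prec_H^{RT} T_m$ then the last event of $T_k$ precedes the first event of $T_m$, whence $\delta_k$ precedes $\delta_m$ and $T_k$ precedes $T_m$ in $S$. Since $S$ merely reorders whole transactions, $S|k = \bar H|k$ for every $T_k$, so $S$ is equivalent to $\bar H$, and by construction $S$ is t-sequential and t-complete.

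The substance of the proof is legality, and this is where Lemma~\ref{lm:mutex} and Observation~\ref{ob:write} do the work. The central claim I would establish is: if a successful $\Read_k(X_j)$ records the pair $[ov_j,k_j]$ read from $v_j$ (Line~\ref{line:read2}), written by some $T_i$ (or $T_0$), then in $S$ the transaction $T_i$ is committed, $\delta_i$ precedes $\delta_k$, and no committed transaction writing $X_j$ is serialized strictly between $T_i$ and $T_k$. For $\delta_i$ before $\delta_k$: because $T_k$ reads $L_j=0$ (Line~\ref{line:abort0}) after reading $ov_j$, and because by Observation~\ref{ob:write} and Lemma~\ref{lm:mutex} the bit $L_j$ stays set throughout the interval in which the unique lock-holder $T_i$ writes $v_j$, this read must occur after $T_i$ released $X_j$ (Line~\ref{line:wlockrelease}), hence after $\delta_i$; as the read is an event of $T_k$, $\delta_k$ cannot precede it. For the absence of an intervening writer I would use that validation at $\delta_k$ re-reads every recorded $v_j$ (Line~\ref{line:valid}): any committed $T_l\neq T_i$ with $\delta_i\prec\delta_l\prec\delta_k$ that writes $X_j$ would (as $\delta_l$ is its last $v_j$-write) have written $v_j$ before $\delta_k$, leaving $v_j$ carrying a pair tagged $l\neq k_j$; since the tag $k_j$ is never reproduced by a different transaction (unique identifiers, no ABA), validation would have failed and $T_k$ would have aborted --- contradicting that the read succeeded. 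Combining the two halves shows $\Read_k(X_j)$ returns the value of the latest committed write to $X_j$ preceding $\delta_k$ in $S$, i.e.\ the latest written value of $X_j$ in $S$; applying this to every successful t-read (including those of transactions that later abort) establishes that $S$ is legal.

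I expect the main obstacle to be precisely this legality step, and within it the reconciliation of two facts that pull in opposite directions: read-set \emph{validation} inspects only the value registers $v_j$ and not the lock bits $L_j$ of the whole read set, whereas a committed writer changes several $v_j$'s at distinct instants of its write phase. Placing the writer's serialization point at the \emph{end} of its write phase is what makes these compatible: a reader that observed an old value with $L_j=0$ necessarily read outside the writer's locked interval and before that writer's $\delta$, while validation at $\delta_k$ rules out any committed overwrite via the uniqueness of the version tags. A secondary, more bookkeeping-heavy obstacle is justifying the treatment of incomplete $\TryC_k$ transactions: I would argue that a partially completed writer still holds its locks (Lemma~\ref{lm:mutex}), so every $v_j$ it has touched has $L_j\neq 0$ and no successful t-read can observe its partial effects, which is exactly what licenses completing it with $A_k$.
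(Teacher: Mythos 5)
Your proof has the same skeleton as the paper's: fix a completion, give every transaction a serialization point inside its own execution interval (which, as you note, makes real-time order and equivalence automatic), and reduce legality to lock exclusivity (Lemma~\ref{lm:mutex}, Observation~\ref{ob:write}) plus read-set validation and the fact that the tagged pairs written to the registers $v_j$ are never reused. Placing a committed writer's point at its last write in Line~\ref{line:write}, rather than at the response of \textit{acquire} (Line~\ref{line:acq}) as the paper does, is a legitimate variation. But the proposal has two genuine gaps.

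The first is the completion of incomplete $\TryC_k$ operations. You commit $T_k$ only if it ``has written all of $\Wset(T_k)$ and released its locks,'' and justify aborting everyone else by the claim that a partially completed writer still holds its locks, so every $v_j$ it has touched has $L_j\neq 0$. That justification pins down the only admissible cutoff, and it is \emph{earlier} than the one you state: a transaction that has executed Line~\ref{line:write} for its whole write set and has then performed some but not all of the writes $\Write(L_j,0)$ of Line~\ref{line:wlockrelease} neither holds its locks nor satisfies your commit condition, so your rule completes it with $A_k$; yet a concurrent reader of such an $X_j$ can read $v_j=[\textit{nv}_j,k]$, find $L_j=0$ at Line~\ref{line:abort0}, validate, return $\textit{nv}_j$, and commit. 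In the resulting completed history a committed read returns a value written by no committed transaction, so no serialization whatsoever is legal and the construction fails before any ordering argument begins. The paper's rule is the correct one: an incomplete $\TryC_k$ is completed with $C_k$ as soon as $T_k$ has performed \emph{any} write inside \textit{release}, and is removed otherwise.

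The second gap is the step ``validation would have failed.'' Your $\delta_k$ for a reader is the \emph{response} of its last successful t-read, whereas the validation re-reads inside that t-read are strictly earlier events; a committed writer $T_l$ of $X_j$ can therefore run its entire locked phase (\textit{acquire}, the write to $v_j$, \textit{release}, commit) in the window between $T_k$'s last re-read of $v_j$ (Line~\ref{line:valid}) and that response. Then no validation of $T_k$ ever observes $T_l$'s tag, yet $\delta_l<\delta_k$, so $T_l$ is serialized between $T_i$ and $T_k$ and your $S$ is not legal---although the execution itself is opaque, with $T_k$ ordered before $T_l$. The same squeeze exists for an updating reader: a writer of some $X_j\in\Rset(T_k)\setminus\Wset(T_k)$ can run entirely between $T_k$'s \textit{isAbortable} check and $T_k$'s last write to a value register. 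So for your choice of points the implication ``$\delta_i\prec\delta_l\prec\delta_k$ implies some validation of $T_k$ fails'' does not hold. For fairness: the paper's own proof has a symmetric hole (its writer point, the \textit{acquire} response, can precede a read-only transaction's $\delta$ even though the writer updates $v_j$ only after that transaction's last validation; the paper dismisses exactly this case with the unjustified sentence restricting attention to $X_z\in\Wset(T_k)$), so on this point you are no less rigorous than the paper---but the step as written is not sound, and it is not clear that any fixed per-transaction choice of points closes it.
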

\begin{proof}
Let $E$ by any finite execution of Algorithm~\ref{alg:ic}. 
Let $<_E$ denote a total-order on events in $E$.

Let $H$ denote a subsequence of $E$ constructed by selecting
\emph{linearization points} of t-operations performed in $E$.
The linearization point of a t-operation $op$, denoted as $\ell_{op}$ is associated with  
a base object event or an event performed between the invocation and response 
of $op$ using the following procedure. 

\vspace{1mm}\noindent\textbf{Completions.}
First, we obtain a completion of $E$ by removing some pending
invocations and adding responses to the remaining pending invocations
involving a transaction $T_k$ as follows:
every incomplete $\Read_k$, $\Write_k$ operation is removed from $E$;
an incomplete $\TryC_k$ is removed from $E$ if $T_k$ has not performed any write to a base object during the \emph{release}
function in Line~\ref{line:rellock}, otherwise it is completed by including $C_k$ after
$E$.

\vspace{1mm}\noindent\textbf{Linearization points.}
Now a linearization $H$ of $E$ is obtained by associating linearization points to
t-operations in the obtained completion of $E$ as follows:
\begin{itemize}
\item For every t-read $op_k$ that returns a non-A$_k$ value, $\ell_{op_k}$ is chosen as the event in Line~\ref{line:read2}
of Algorithm~\ref{alg:ic}, else, $\ell_{op_k}$ is chosen as invocation event of $op_k$
\item For every $op_k=\Write_k $ that returns, $\ell_{op_k}$ is chosen as the invocation event of $op_k$
\item For every $op_k=\TryC_k$ that returns $C_k$ such that $\Wset(T_k)
  \neq \emptyset$, $\ell_{op_k}$ is associated with the response
  of \emph{acquire} in Line~\ref{line:acq}, 
  else if $op_k$ returns $A_k$, $\ell_{op_k}$ is associated with the invocation event of $op_k$
\item For every $op_k=\TryC_k$ that returns $C_k$ such that $\Wset(T_k) = \emptyset$, 
$\ell_{op_k}$ is associated with Line~\ref{line:return}
\end{itemize}
$<_H$ denotes a total-order on t-operations in the complete sequential history $H$.

\vspace{1mm}\noindent\textbf{Serialization points.}
The serialization of a transaction $T_j$, denoted as $\delta_{T_j}$ is
associated with the linearization point of a t-operation 
performed within the execution of $T_j$.

We obtain a t-complete history ${\bar H}$ from $H$ as follows: 
for every transaction $T_k$ in $H$ that is complete, but not t-complete, 
we insert $\textit{tryC}_k\cdot A_k$ after $H$. 
 
A t-complete t-sequential history $S$ is obtained by associating serialization points to transactions in ${\bar H}$ as follows:
\begin{itemize}
\item If $T_k$ is an update transaction that commits, then $\delta_{T_k}$ is $\ell_{\TryC_k}$
\item If $T_k$ is a read-only or aborted transaction in $\bar H$,
$\delta_{T_k}$ is assigned to the linearization point of the last t-read that returned a non-A$_k$ value in $T_k$
\end{itemize}
$<_S$ denotes a total-order on transactions in the t-sequential history $S$.
\begin{claim}
\label{cl:seq}
If $T_i \prec_{H}T_j$, then $T_i <_S T_j$
\end{claim}
\begin{proof}
This follows from the fact that for a given transaction, its
serialization point is chosen between the first and last event of the transaction
implying if $T_i \prec_{H} T_j$, then $\delta_{T_i} <_{E} \delta_{T_j}$ implies $T_i <_S T_j$.
\end{proof}
\begin{claim}
\label{cl:ic1}
Let $T_k$ be any updating transaction that returns \emph{false} from the invocation of \emph{isAbortable}
in Line~\ref{line:abort3}. Then, $T_k$ returns $C_k$ within a finite number of its own steps in any extension of $E$.
\end{claim}
\begin{proof}
Observer that $T_k$ performs the write to base objects $v_j$ for every $X_j \in \Wset(T_k)$ and then invokes
\emph{release} in Lines~\ref{line:write} and \ref{line:rellock} respectively.
Since neither of these involve aborting the transaction or contain unbounded loops or waiting statements, it follows
that $T_k$ will return $C_k$ within a finite number of its steps.
\end{proof}
\begin{claim}
\label{cl:readfrom}
$S$ is legal.
\end{claim}
\begin{proof}
Observe that for every $\Read_j(X_m) \rightarrow v$, there exists some transaction $T_i$
that performs $\Write_i(X_m,v)$ and completes the event in Line~\ref{line:write} such that
$\Read_j(X_m) \not\prec_H^{RT} \Write_i(X_m,v)$.
More specifically, $\Read_j(X_m)$ returns as a non-abort response, the value of the base object $v_m$
and $v_m$ can be updated only by a transaction $T_i$ such that $X_m \in \Wset(T_i)$.
Since $\Read_j(X_m)$ returns the response $v$, the event in Line~\ref{line:read2}
succeeds the event in Line~\ref{line:write} performed by $\TryC_i$.
Consequently, by Claim~\ref{cl:ic1} and the assignment of linearization points,
$\ell_{\TryC_i} <_E \ell_{\Read_j(X_m)}$.
Since, for any updating
committing transaction $T_i$, $\delta_{T_i}=\ell_{\TryC_i}$, by the assignment of serialization points, it follows that
$\delta_{T_{i}} <_E \delta_{T_{j}}$.

Thus, to prove that $S$ is legal, it suffices to show that  
there does not exist a
transaction $T_k$ that returns $C_k$ in $S$ and performs $\Write_k(X_m,v')$; $v'\neq v$ such that $T_i <_S T_k <_S T_j$. 
Suppose that there exists a committed transaction $T_k$, $X_m \in \Wset(T_k)$ such that $T_i <_S T_k <_S T_j$.

$T_i$ and $T_k$ are both updating transactions that commit. Thus, 
\begin{center}
($T_i <_S T_k$) $\Longleftrightarrow$ ($\delta_{T_i} <_{E} \delta_{T_k}$) \\
($\delta_{T_i} <_{E} \delta_{T_k}$) $\Longleftrightarrow$ ($\ell_{\TryC_i} <_{E} \ell_{\TryC_k}$) 
\end{center}
Since, $T_j$ reads the value of $X$ written by $T_i$, one of the following is true:
$\ell_{\TryC_i} <_{E} \ell_{\TryC_k} <_{E} \ell_{\Read_j(X_m)}$ or
$\ell_{\TryC_i} <_{E} \ell_{\Read_j(X_m)} <_{E} \ell_{\TryC_k}$.
Let $T_i$ and $T_k$ be executed by processes $p_i$ and $p_k$ respectively.

Consider the case that $\ell_{\TryC_i} <_{E} \ell_{\TryC_k} <_{E} \ell_{\Read_j(X_m)}$.

By the assignment of linearization points, $T_k$ returns a response from the event in Line~\ref{line:acq} 
before the read of $v_m$ by $T_j$ in Line~\ref{line:read2}. 
Since $T_i$ and $T_k$ are both committed in $E$, $p_k$ returns \emph{true} from the event in
Line~\ref{line:acq} only after $T_i$ writes $0$ to $r_{im}$ in Line~\ref{line:rel1} (Lemma~\ref{lm:mutex}).

Recall that $\Read_j(X_m)$ checks if $X_m$ is locked by a concurrent transaction (i.e $L_j\neq 0$), 
then performs read-validation (Line~\ref{line:abort0}) before returning a matching response. 
Consider the following possible sequence of events: 
$T_k$ returns \emph{true} from the \emph{acquire} function invocation, 
sets $L_j$ to $1$ for every $X_j \in \Wset(T_k)$ (Line~\ref{line:wlockwrite}) and
updates the value of $X_m$ to shared-memory (Line~\ref{line:write}).
The implementation of $\Read_j(X_m)$ then reads the base object $v_m$ associated with $X_m$ after which
$T_k$ releases $X_m$ by writing $0$ to $r_{km}$ and finally $T_j$ performs the check in Line~\ref{line:abort0}. 
However, $\Read_j(X_m)$ is forced to return $A_j$ because $X_m \in \Rset(T_j)$ (Line~\ref{line:rset}) 
and has been invalidated since last reading its value. 
Otherwise suppose that $T_k$ acquires exclusive access to $X_m$ by writing $1$ to $r_{km}$ and returns \emph{true}
from the invocation of \emph{acquire}, updates $v_m$ in Line~\ref{line:write}), 
$T_j$ reads $v_m$, $T_j$ performs the check in Line~\ref{line:abort0} and finally $T_k$ 
releases $X_m$ by writing $0$ to $r_{km}$. 
Again, $\Read_j(X_m)$ returns $A_j$ since $T_j$ reads that $r_{km}$ is $1$---contradiction.

Thus, $\ell_{\TryC_i} <_E \ell_{\Read_j(X)} <_{E} \ell_{\TryC_k}$.

We now need to prove that $\delta_{T_{j}}$ indeed precedes $\ell_{\TryC_k}$ in $E$.

Consider the two possible cases:
\begin{itemize}
\item
Suppose that $T_j$ is a read-only or aborted transaction in $\bar H$. 
Then, $\delta_{T_j}$ is assigned to the last t-read performed by $T_j$ that returns a non-A$_j$ value. 
If $\Read_j(X_m)$ is not the last t-read performed by $T_j$ that returned a non-A$_j$ value, 
then there exists a $\Read_j(X_z)$ performed by $T_j$ such that 
$\ell_{\Read_j(X_m)} <_{E} \ell_{\TryC_k} <_E \ell_{\Read_j(X_z)}$.
Now assume that $\ell_{\TryC_k}$ must precede $\ell_{\Read_j(X_z)}$ to obtain a legal $S$.
Since $T_k$ and $T_j$ are concurrent in $E$, we are restricted to the case that
$T_k$ performs a $\Write_k(X_z,v)$ and $\Read_j(X_z)$ returns $v$.
However, we claim that this t-read of $X_z$ must abort by performing the checks in Line~\ref{line:abort0}.
Observe that $T_k$ writes $1$ to $L_m$, $L_z$ each (Line~\ref{line:wlockwrite}) and 
then writes new values to base objects $v_m$, $v_z$ (Line~\ref{line:write}).
Since $\Read_j(X_z)$ returns a non-$A_j$ response, $T_k$ writes $0$ to $L_z$ before the read
of $L_z$ by $\Read_j(X_z)$ in Line~\ref{line:abort0}.
Thus, the t-read of $X_z$ would return $A_j$ (in Line~\ref{line:read-validate} after validation of the read set since $X_m$
has been updated---
contradiction to the assumption that it the last t-read by $T_j$ to return a non-$A_j$ response.
\item
Suppose that $T_j$ is an updating transaction that commits, then $\delta_{T_j}=\ell_{\TryC_j}$ which implies that
$\ell_{\Read_j(X_m)} <_{E} \ell_{\TryC_k} <_E \ell_{\TryC_j}$. Then, $T_j$ must necessarily perform the checks
in Line~\ref{line:abort3} and read that $L_m$ is $1$. 
Thus, $T_j$ must return $A_j$---contradiction to the assumption that $T_j$ is a committed transaction.
\end{itemize}
\end{proof}
The conjunction of Claims~\ref{cl:seq} and \ref{cl:readfrom} establish that Algorithm~\ref{alg:ic} is opaque.
\end{proof}
\setcounter{theorem}{8}
\begin{theorem}
\label{th:ic}
Algorithm~\ref{alg:ic} describes a progressive, opaque and strict DAP TM implementation $LP$ that provides
wait-free TM-liveness, uses invisible reads and in every execution $E$ of $LP$, 
\begin{itemize}
\item every transaction $T\in \ms{txns}(E)$ applies only read and write primitives in $E$,
\item every transaction $T \in \ms{txns}(E)$ performs at most a single RAW,
\item for every transaction $T \in \ms{txns}(E)$, every t-read operation performed by $T$
incurs $O(1)$ memory stalls in $E$.
\end{itemize}
\end{theorem}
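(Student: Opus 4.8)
The plan is to dispatch the syntactic properties by direct inspection of Algorithm~\ref{alg:ic} and then establish the semantic ones (progressiveness, strict DAP, and the complexity bounds) using the mutual-exclusion guarantee of Lemma~\ref{lm:mutex} together with Observation~\ref{ob:write}. Opacity is already in hand from Lemma~\ref{lm:icopaque}, so only the remaining clauses need verification. First I would note that every base object ($v_j$, $L_j$, and the single-writer bits $r_{ij}$) is touched solely through read and write primitives, so the implementation uses read/write primitives exclusively; and that each of \emph{acquire}, \emph{release}, \emph{isAbortable}, and \emph{validate} merely loops over the finite sets $\Wset(T_k)$ or $\Rset(T_k)$ with no blocking, waiting, or retry, so every t-operation returns within a bounded number of its own steps, giving wait-free TM-liveness.

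For strict DAP I would argue that $T_i$ and $T_j$ can contend on a base object only when their data sets intersect: the objects $v_m$ and $L_m$ are accessed only by transactions with $X_m$ in their data set, while a single-writer bit $r_{im}$ is written only by $p_i$ and read by another process only inside \emph{acquire} when that process commits a transaction with $X_m$ in its write set; in each case a common t-object $X_m \in \Dset(T_i)\cap\Dset(T_j)$ witnesses the contention. Invisible reads would then follow in two parts: a read-only transaction applies no nontrivial primitive (a t-read performs only reads of $v_j$, $L_j$ and validation reads, and read-only \TryC{} returns at once), and for an updating transaction with non-empty read set I would construct the witness execution $E'$ by deleting exactly the t-read events of $T_k$; since those events are all trivial, their removal changes no base-object state, is invisible to every other transaction, and preserves the real-time order of step-contention-free transactions.

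The substantive work is progressiveness and the two complexity bounds. For progressiveness I would walk through each abort site and exhibit a concurrent conflicting transaction: an abort inside a t-read or inside \emph{isAbortable} is triggered either by observing some $L_m\neq 0$ for a read-set object or by a failed read-validation, and in both cases Observation~\ref{ob:write} forces a concurrent transaction that holds the lock on (or has written the value of) a commonly accessed t-object, hence a conflict in the sense of the definition; an abort from \emph{acquire} returning \emph{false} is caused by reading some $r_{tj}=1$, which by Lemma~\ref{lm:mutex} pins down a concurrent transaction contending on a common write-set object. For the RAW bound I would observe that t-reads perform no writes, so the only write-then-read pattern arises in \TryC{}, where the committing transaction first writes its single-writer bits and then reads the others' bits to detect write-write conflicts; this is the single essential RAW, and I would argue that the reads subsequently performed during \emph{isAbortable} and \emph{validate} all belong to the same post-acquisition read phase and therefore do not open a second non-overlapping RAW, yielding $O(1)$ non-overlapping RAWs per transaction.

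Finally, for the $O(1)$ stall bound on t-reads I would use Lemma~\ref{lm:mutex} and Observation~\ref{ob:write} to show that any base object can be written only by the unique process currently holding the corresponding lock, so the execution never contains a run of two consecutive nontrivial events by distinct processes on the same base object (a lock handoff, consisting of \emph{release} and \emph{acquire} events on other objects, always separates successive writers); consequently each access a t-read makes to $v_j$ or $L_j$, as well as each validation read, incurs only a constant number of stalls. The main obstacle I anticipate is exactly this last accounting together with the precise RAW count: forcing the stalls and the RAWs down to a true constant rather than a quantity growing with the size of $\Rset(T_k)$ requires invoking the exclusive-access structure carefully to argue that the validation reads are essentially stall-free and contribute no further non-overlapping RAW, and this is where the mutual-exclusion invariant of Lemma~\ref{lm:mutex} must be used most delicately.
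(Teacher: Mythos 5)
Your proposal is correct and takes essentially the same route as the paper's own proof: syntactic inspection for the read/write-primitives and wait-free TM-liveness claims, Lemma~\ref{lm:icopaque} for opacity, the same abort-site enumeration for progressiveness, the same case analysis on $v_j$, $L_j$ and the single-writer bits $r_{ij}$ for strict DAP, and Lemma~\ref{lm:mutex} together with Observation~\ref{ob:write} for the $O(1)$-stall bound. The one place you diverge is that you explicitly worry about whether the writes to the $L_j$ bits followed by the reads in \emph{isAbortable}/\emph{validate} open a second non-overlapping RAW --- a point the paper's proof silently skips by counting only the acquire-phase ``multi-RAW'' on the $r$-bits --- and although your ``same post-acquisition read phase'' argument is not rigorous under the paper's formal RAW definition (that write-then-read pattern is, strictly speaking, a second non-overlapping RAW for an updating transaction with a non-empty read set), this affects neither your conclusion nor the paper's, since the bound that matters for the separation from obstruction-free TMs is $O(1)$ RAWs per transaction.
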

\begin{proof}
\textit{(TM-liveness and TM-progress)}
Since none of the implementations of the t-operations in Algorithm~\ref{alg:ic}
contain unbounded loops or waiting statements, every t-operation $op_k$ returns a matching response
after taking a finite number of steps in every execution. Thus, Algorithm~\ref{alg:ic}
provides wait-free TM-liveness.

To prove progressiveness, we proceed by enumerating the cases under which a transaction $T_k$ may be aborted.
\begin{itemize}
\item
Suppose that there exists a $\Read_k(X_j)$ performed by $T_k$ that returns $A_k$
from Line~\ref{line:abort0}.
Thus, there exists a process $p_t$ executing a transaction
that has written $1$ to $r_{tj}$ in Line~\ref{line:acq1}, but has not yet written
$0$ to $r_{tj}$ in Line~\ref{line:rel1} or
some t-object in $\Rset(T_k)$ has been updated since its t-read by $T_k$.
In both cases, there exists a concurrent transaction performing a 
t-write to some t-object in $\Rset(T_k)$.
\item
Suppose that $\TryC_k$ performed by $T_k$ that returns $A_k$
from Line~\ref{line:abort2}.
Thus, there exists a process $p_t$ executing a transaction
that has written $1$ to $r_{tj}$ in Line~\ref{line:acq1}, but has not yet written
$0$ to $r_{tj}$ in Line~\ref{line:rel1}. Thus, $T_k$ encounters step-contention with another
transaction that concurrently attempts to update a t-object in $\Wset(T_k)$.
\item
Suppose that $\TryC_k$ performed by $T_k$ that returns $A_k$
from Line~\ref{line:abort3}.
Since $T_k$ returns $A_k$ from Line~\ref{line:abort3} for the same reason it
returns $A_k$ after Line~\ref{line:abort0}, the proof follows.
\end{itemize}
\textit{(Strict disjoint-access parallelism)}
Consider any execution $E$ of Algorithm~\ref{alg:ic} and let $T_i$
and $T_j$ be any two transactions that participate in $E$ and access the same
base object $b$ in $E$.
\begin{itemize}
\item
Suppose that $T_i$ and $T_j$ contend on base object $v_j$ or $L_j$.
Since for every t-object $X_j$, there exists distinct base objects $v_j$ and $L_j$,
$T_j$ and $T_j$ contend on $v_j$ only if $X_j \in \Dset(T_i) \cap \Dset(T_j)$.
\item
Suppose that $T_i$ and $T_j$ contend on base object $r_{ij}$.
Without loss of generality, let $p_i$ be the process executing 
transaction $T_i$; $X_j \in \Wset(T_i)$ that writes $1$ to $r_{ij}$ in Line~\ref{line:acq1}.
Indeed, no other process executing a transaction that writes to $X_j$ can write to $r_{ij}$.
Transaction $T_j$ reads $r_{ij}$ only if $X_j \in \Dset(T_j)$ as evident from the accesses performed
in Lines~\ref{line:acq1}, \ref{line:lock}, \ref{line:rel1}, \ref{line:isl}.
\end{itemize}
Thus, $T_i$ and $T_j$ access the same base object only if they access a common t-object.

\textit{(Opacity)}
Follows from Lemma~\ref{lm:icopaque}.

\textit{(Invisible reads)}
Observe that read-only transactions do not perform any nontrivial events.
Secondly, in any execution $E$ of Algorithm~\ref{alg:ic}, and any transaction $T_k\in \ms{txns}(E)$,
if $X_j\in \Rset(T_k)$, $T_k$ does not write to any of the base objects associated with $X_j$ nor
write any information that reveals its read set to other transactions.

\textit{(Complexity)} 
Consider any execution $E$ of Algorithm~\ref{alg:ic}.
\begin{itemize}
\item
For any $T_k \in \ms{txns}(E)$, each $\Read_k$ only applies trivial primitives in $E$ while $\TryC_k$ simply
returns $C_k$ if $\Wset(T_k)=\emptyset$. Thus, Algorithm~\ref{alg:ic} uses invisible reads.
\item
Any read-only transaction $T_k \in \ms{txns}(E)$ not perform any RAW or AWAR.
An updating transaction $T_k$ executed by process $p_i$ performs a sequence of writes (Line~\ref{line:acq1}
to base objects $\{r_{ij}\}:X_j \in \Wset(T_k)$, followed by a sequence of reads to base objects 
$\{r_{tj}\}:t\in \{1,\ldots , n\}, X_j \in \Wset(T_k)$
(Line~\ref{line:lock}) thus incurring a single multi-RAW.
\item
Let $e$ be a write event performed by some transaction $T_k$ executed by process $p_i$ in $E$ on 
base objects $v_j$ and $L_j$ (Lines~\ref{line:write} and \ref{line:wlockwrite}).
Any transaction $T_k$ performs a write to $v_j$ or $L_j$ only after $T_k$ writes $0$ to $r_{ij}$, for every $X_j\in \Wset(T_k)$.
Thus, by Lemmata~\ref{lm:mutex} and \ref{lm:icopaque}, it follows that
events that involve an access to either of these base objects incurs $O(1)$ stalls.

Let $e$ be a write event on base object $r_{ij}$ (Line~\ref{line:acq1}) while writing to t-object $X_j$.
By Algorithm~\ref{alg:ic}, no other process can write to $r_{ij}$.
It follows that any transaction $T_k \in \ms{txns}(E)$ incurs $O(1)$ memory stalls 
on account of any event it performs in $E$.
%
Observe that any t-read $\Read_k(X_j)$ only accesses base objects $v_j$, $L_j$ and other value base objects in $\Rset(T_k)$.
But as already established above, these are $O(1)$ stall events. Hence, every t-read operation
incurs $O(1)$-stalls in $E$.
\end{itemize}
\end{proof}
\section{Obstruction-free TMs}
\label{app:upperof}
\subsection{An opaque RW DAP TM implementation $M\in \mathcal{OF}$}
\label{app:rwoftm}
\begin{algorithm}[!h]
\caption{RW DAP opaque implementation $M\in \mathcal{OF}$; code for $T_k$
}\label{alg:oftm}
  \begin{algorithmic}[1]
  	\begin{multicols}{2}
  	{\footnotesize
	\Part{Shared base objects}{
		\State $\ms{tvar}[m]$, storing $[\ms{owner}_m,\ms{oval}_m,\ms{nval}_m]$
		\State ~~~~for each t-object $X_m$, supports read, write, cas
		\State ~~~~$\ms{owner}_m$, a transaction identifier 
		\State ~~~~$\ms{oval}_m\in V$
		\State ~~~~$\ms{nval}_m\in V$
		\State $\ms{status}[k] \in \{\ms{live},\ms{aborted},\ms{committed}\}$,  
		\State ~~~~for each $T_k$; supports read, write, cas
	}\EndPart	
	\Part{Local variables}{
		\State $\ms{Rset}_k,\ms{Wset}_k$ for every transaction $T_k$;
		\State ~~~~dictionaries storing $\{X_m$, $\ms{Tvar}[m]\}$
	}\EndPart	
	\Statex
	\Part{\Read$_k(X_m)$}{
		\State $[\ms{owner}_m,\ms{oval}_m,\ms{nval}_m]$ $\gets$ $\ms{tvar}[m].\lit{read}()$ \label{line:linr}
		
		\If{$\ms{owner}_m \neq k$}
			
			\State $s_m\gets \ms{status}[\ms{owner}_m].\lit{read}()$ \label{line:status1}
			\If{$s_m=\ms{committed}$} \label{line:rcurr}
				\State $\ms{curr}=\ms{nval}_m$
			\ElsIf{$s_m=\ms{aborted}$} 
			  \State $\ms{curr}=\ms{oval}_m$
			  
			\Else
				\If{$\ms{status}[\ms{owner}_m].\lit{cas}(\ms{live},\ms{aborted})$} \label{line:awar}
				  \State $\ms{curr}=\ms{oval}_m$
				\Else
				\Return $A_k$ \EndReturn
				\EndIf
				
			\EndIf
			\If{$ \ms{status}[k]=\ms{live} \wedge \neg \lit{validate}()$} \label{line:rc1}
				\State $\Rset(T_k).\lit{add}(\{X_m,[\ms{owner}_m,\ms{oval}_m,\ms{nval}_m]\})$
				\Return $\ms{curr}$ \EndReturn
			\EndIf
			\Return $A_k$ \EndReturn \label{line:of2}
			
		\Else
			
			\Return $\Rset(T_k).\lit{locate}(X_m)$ \EndReturn
				
		\EndIf

   	 }\EndPart
	\Statex
	\Part{Function: $\lit{validate}()$}{
		\If{$\exists \{X_j,[\ms{owner}_j,\ms{oval}_j,\ms{nval}_j]\} \in \Rset(T_k)$:\\
		~~~~~~~($[\ms{owner}_j,\ms{oval}_j,\ms{nval}_j]\neq \ms{tvar}[j].\lit{read}())$}
			\Return $\true$ \EndReturn
		\EndIf
		\Return $\false$ \EndReturn
	}\EndPart
	
	\newpage
	\Part{\Write$_k(X_m,v)$}{
		\State $[\ms{owner}_m,\ms{oval}_m,\ms{nval}_m]$ $\gets$ $\ms{tvar}[m].\lit{read}()$ \label{line:writeread}
		\If{$\ms{owner}_m \neq k$}
			
			\State $s_m\gets \ms{status}[\ms{owner}_m].\lit{read}()$ \label{line:status2}
			\If{$s_m=\ms{committed}$} \label{line:nval}
				\State $\ms{curr}=\ms{nval}_m$
			\ElsIf{$s_m=\ms{aborted}$} 
			  \State $\ms{curr}=\ms{oval}_m$
			\Else
				\If{$\ms{status}[\ms{owner}_m].\lit{cas}(\ms{live},\ms{aborted})$} \label{line:writeabort}
				  \State $\ms{curr}=\ms{oval}_m$
				
				\Else
				\Return $A_k$ \EndReturn
			\EndIf
				
		\EndIf
		\State $o_m\gets \ms{tvar}[m].\lit{cas}([\ms{owner}_m,\ms{oval}_m,\ms{nval}_m],[k,\ms{curr},v])$ \label{line:linw}
		\If{$o_m\wedge \ms{status}[k]= \ms{live}$} \label{line:wc1}
			\State $\ms{Wset}_k.\lit{add}(\{X_m,[k,\ms{curr},v]\})$
				\Return $ok$ \EndReturn
		\Else
			\Return $A_k$ \EndReturn \label{line:of4}
		\EndIf
		\Else
				\State $[\ms{owner}_m,\ms{oval}_m,\ms{nval}_m]=\ms{Wset}_k.\lit{locate}(X_m)$
				\State $s= \ms{tvar}[m].\lit{cas}([\ms{owner}_m,\ms{oval}_m,\ms{nval}_m],[k,\ms{oval}_m,v])$
				\If{$s$}
					\State $\Wset(T_k).\lit{add}(\{X_m,[k,\ms{oval}_m,v]\})$
					\Return $ok$ \EndReturn
				\Else
					\Return $A_k$ \EndReturn \label{line:of5}
		
				\EndIf
		\EndIf
   	}\EndPart	
   	
   	\Statex
	\Part{\TryC$_k$()}{
		\If{$\lit{validate}()$}
			\Return $A_k$ \EndReturn  \label{line:of1}
		\EndIf
		\If{$\ms{status}[k].\lit{cas}(\ms{live},\ms{committed})$} \label{line:tryc}
			\Return $C_k$ \EndReturn
		\EndIf
		\Return $A_k$ \EndReturn 
   	 }\EndPart

	}\end{multicols}
  \end{algorithmic}
\end{algorithm}
\begin{lemma}
\label{lm:oftmopaque}
Algorithm~\ref{alg:oftm} implements an opaque TM.
\end{lemma}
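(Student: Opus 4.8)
The plan is to mirror the structure of the opacity proof for $LP$ (Lemma~\ref{lm:icopaque}): from an arbitrary finite execution $E$ of Algorithm~\ref{alg:oftm} I construct a legal t-complete t-sequential history $S$ witnessing Definition~\ref{def:opaque}. First I assign linearization points to t-operations. A committing updating transaction $T_k$ is linearized at its successful \emph{cas} of $\ms{status}[k]$ from $\ms{live}$ to $\ms{committed}$ (Line~\ref{line:tryc}); a t-read returning a non-$A_k$ value at its read of $\ms{tvar}[m]$ (Line~\ref{line:linr}); a t-write returning $\ok$ at its successful \emph{cas} on $\ms{tvar}[m]$ (Line~\ref{line:linw}); every remaining operation at its invocation. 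I then form a completion $\bar H$ exactly as for $LP$: removing incomplete t-reads and t-writes, and completing an incomplete $\TryC_k$ with $C_k$ iff its status \emph{cas} succeeded. Finally I place the serialization point $\delta_{T_k}$ at the commit point for committed updating transactions, and at the linearization point of the last non-aborting t-read for read-only or aborted transactions, yielding the t-sequential $S$.

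The real-time claim (the analogue of Claim~\ref{cl:seq}) is immediate, since each serialization point lies between the first and last event of its transaction, so $T_i \prec_H^{RT} T_j$ forces $\delta_{T_i} <_E \delta_{T_j}$ and hence $T_i <_S T_j$. The bulk of the work is legality (the analogue of Claim~\ref{cl:readfrom}). The key structural fact, proved first, is a single-owner invariant: because ownership of $X_m$ is claimed only by a successful \emph{cas} on $\ms{tvar}[m]$, at most one transaction is the current $\ms{owner}_m$ in any configuration, the successful \emph{cas} operations on $\ms{tvar}[m]$ are totally ordered in $<_E$, and each such operation carries the cell's current value forward into its $\ms{oval}$ field. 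From this I establish the central value-tracking invariant: a write of $v$ to $X_m$ by $T_i$ becomes the current value of $X_m$ exactly at $T_i$'s commit point (its $\ms{status}$ \emph{cas}) and remains current until the next committed writer of $X_m$ commits. Consequently a t-read that resolves an owner as $\ms{committed}$, $\ms{aborted}$, or by forced abort returns precisely the value written by the latest committed writer of $X_m$ preceding the read in $<_E$.

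Granting the value-tracking invariant, legality follows as in Claim~\ref{cl:readfrom}: suppose $\Read_j(X_m)\rightarrow v$ with $v$ written by committed $T_i$, and for contradiction suppose some committed $T_k$ with $X_m\in\Wset(T_k)$ writes $v'\neq v$ and is ordered $T_i <_S T_k <_S T_j$. All three value-relevant points are total-ordered in $E$, so either $T_k$'s commit precedes $\ell_{\Read_j(X_m)}$, contradicting the invariant since the read would then observe $T_k$'s value rather than $T_i$'s, or it follows the read, contradicting $T_k <_S T_j$ because $T_j$'s later accesses invoke \textit{validate} (Lines~\ref{line:rc1} and~\ref{line:of1}), which upon seeing $\ms{tvar}[m]$ changed by $T_k$ would abort $T_j$, so no surviving read of $T_j$ could be serialized after $\ell_{\TryC_k}$. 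The main obstacle is precisely the value-tracking invariant: unlike $LP$, here a reader may \emph{forcefully abort} a live owner through the AWAR \emph{cas} at Line~\ref{line:awar}, so I must argue that this abort, the owner's own \emph{cas} at Line~\ref{line:tryc}, and the fall-back to $\ms{oval}$ are mutually consistent---namely that a forcefully aborted owner can never subsequently commit (its status is already $\ms{aborted}$, while commit requires a \emph{cas} from $\ms{live}$), so no reader's choice of $\ms{oval}$ is ever later contradicted. Opacity, as opposed to mere strict serializability, then comes for free: the \textit{validate} calls guarantee that even read-only and aborted transactions observe a snapshot consistent with $S$, so their serialization points can be placed at their last surviving t-read without violating legality or real-time order.
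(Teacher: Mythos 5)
Your proposal follows essentially the same route as the paper's own proof: the same completion/linearization/serialization scaffolding carried over from the $LP$ proof (Lemma~\ref{lm:icopaque}), the same serialization points (commit point for committed updaters, last non-aborting t-read for read-only and aborted transactions), the same one-shot-status invariant (the paper's Claim~\ref{cl:ofclaim0} is exactly your observation that a forcibly aborted owner can never later commit and vice versa), and the same legality dichotomy (a conflicting committed writer $T_k$ either commits before the read of $T_j$, which is impossible since the read would then have to return $T_k$'s value, or commits after it, in which case the validation in Lines~\ref{line:rc1} and \ref{line:of1} aborts any later operation of $T_j$, so $T_j$ cannot be serialized after $T_k$). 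The only deviations are cosmetic---you linearize committed $\TryC_k$ at the status \textit{cas} of Line~\ref{line:tryc} and successful t-writes at the \textit{cas} of Line~\ref{line:linw}, where the paper formally uses Line~\ref{line:of1} and Line~\ref{line:writeread}---with the one substantive difference that your completion rule (complete an incomplete $\TryC_k$ with $C_k$ iff its status \textit{cas} succeeded) is actually more careful than the paper's, which discards every incomplete $\TryC_k$ even when the successful commit \textit{cas} has already made the transaction's value visible to readers.
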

\begin{proof}
Since opacity is a safety property, we only consider finite executions~\cite{icdcs-opacity}.
Let $E$ by any finite execution of Algorithm~\ref{alg:oftm}. 
Let $<_E$ denote a total-order on events in $E$.

Let $H$ denote a subsequence of $E$ constructed by selecting
\emph{linearization points} of t-operations performed in $E$.
The linearization point of a t-operation $op$, denoted as $\ell_{op}$ is associated with  
a base object event or an event performed during 
the execution of $op$ using the following procedure. 

\vspace{1mm}\noindent\textbf{Completions.}
First, we obtain a completion of $E$ by removing some pending
invocations and adding responses to the remaining pending invocations
involving a transaction $T_k$ as follows:
every incomplete $\Read_k$, $\Write_k$, $\TryC_k$ operation is removed from $E$;
an incomplete $\Write_k$ is removed from $E$.

\vspace{1mm}\noindent\textbf{Linearization points.}
We now associate linearization points to
t-operations in the obtained completion of $E$ as follows:
\begin{itemize}
\item For every t-read $op_k$ that returns a non-A$_k$ value, $\ell_{op_k}$ is chosen as the event in Line~\ref{line:linr}
of Algorithm~\ref{alg:oftm}, else, $\ell_{op_k}$ is chosen as invocation event of $op_k$
\item For every t-write $op_k$ that returns a non-A$_k$ value, $\ell_{op_k}$ is chosen as the event in Line~\ref{line:writeread}
of Algorithm~\ref{alg:oftm}, else, $\ell_{op_k}$ is chosen as invocation event of $op_k$
\item For every $op_k=\TryC_k$ that returns $C_k$, $\ell_{op_k}$ is associated with Line~\ref{line:of1}.
\end{itemize}
$<_H$ denotes a total-order on t-operations in the complete sequential history $H$.

\vspace{1mm}\noindent\textbf{Serialization points.}
The serialization of a transaction $T_j$, denoted as $\delta_{T_j}$ is
associated with the linearization point of a t-operation 
performed during the execution of the transaction.

We obtain a t-complete history ${\bar H}$ from $H$ as follows: 
for every transaction $T_k$ in $H$ that is complete, but not t-complete, 
we insert $\textit{tryC}_k\cdot A_k$ after $H$. 

${\bar H}$ is thus a t-complete sequential history.
A t-complete t-sequential history $S$ equivalent to ${\bar H}$ is obtained by associating 
serialization points to transactions in ${\bar H}$ as follows:
\begin{itemize}
\item If $T_k$ is an update transaction that commits, then $\delta_{T_k}$ is $\ell_{tryC_k}$
\item If $T_k$ is an aborted or read-only transaction in $\bar H$,
then $\delta_{T_k}$ is assigned to the linearization point of the last t-read that returned a non-A$_k$ value in $T_k$
\end{itemize}
$<_S$ denotes a total-order on transactions in the t-sequential history $S$.
\begin{claim}
\label{cl:oseq}
If $T_i \prec_{H}^{RT} T_j$, then $T_i <_S T_j$.
\end{claim}
\begin{proof}
This follows from the fact that for a given transaction, its
serialization point is chosen between the first and last event of the transaction
implying if $T_i \prec_{H} T_j$, then $\delta_{T_i} <_{E} \delta_{T_j}$ implies $T_i <_S T_j$ 
\end{proof}
\begin{claim}
\label{cl:ofclaim0}
If transaction $T_i$ returns $C_i$ in $E$, then \emph{status[i]=committed} in $E$.
\end{claim}
\begin{proof}
Transaction $T_i$ must perform the event in Line~\ref{line:tryc}
before returning $T_i$ i.e. the \emph{cas} on its own \emph{status} to change the
value to \emph{committed}. The proof now follows from the fact that any other transaction
may change the \emph{status} of $T_i$ only if it is \emph{live} (Lines~\ref{line:writeabort} and \ref{line:awar}).
\end{proof}
\begin{claim}
\label{cl:oreadfrom}
$S$ is legal.
\end{claim}
\begin{proof}
Observe that for every $\Read_j(X) \rightarrow v$, there exists some transaction $T_i$
that performs $\Write_i(X,v)$ and completes the event in Line~\ref{line:linw} to write $v$ as the \emph{new value} of $X$ such that
$\Read_j(X) \not\prec_H^{RT} \Write_i(X,v)$. 
For any updating
committing transaction $T_i$, $\delta_{T_i}=\ell_{\TryC_i}$. 
Since $\Read_j(X)$ returns a response $v$, the event in Line~\ref{line:linr} must succeed
the event in Line~\ref{line:tryc} when $T_i$ changes \emph{status[i]} to \emph{committed}.
Suppose otherwise, then $\Read_j(X)$ subsequently forces $T_i$ to abort by writing
\emph{aborted} to \emph{status[i]} and must return the \emph{old value} of $X$ 
that is updated by the previous \emph{owner} of $X$, which must be committed in $E$ (Line~\ref{line:nval}).
Since $\delta_{T_{i}}=\ell_{\TryC_{i}}$ precedes the event in Line~\ref{line:tryc},
it follows that $\delta_{T_{i}} <_E \ell_{\Read_{j}(X)}$.

We now need to prove that $\delta_{T_{i}} <_E \delta_{T_{j}}$. Consider the following cases:
\begin{itemize}
\item
if $T_j$ is an updating committed transaction, then $\delta_{T_{j}}$ is assigned to $\ell_{\TryC_{j}}$.
But since $\ell_{\Read_{j}(X)} <_E \ell_{\TryC_{j}}$, it follows that $ \delta_{T_{i}} <_E \delta_{T_{j}}$.
\item
if $T_j$ is a read-only or aborted transaction, then $\delta_{T_{j}}$ is assigned to
the last t-read that did not abort. Again, it follows that $ \delta_{T_{i}} <_E \delta_{T_{j}}$.
\end{itemize}
To prove that $S$ is legal, we need to show that,
there does not exist any
transaction $T_k$ that returns $C_k$ in $S$ and performs $\Write_k(X,v')$; $v'\neq v$ such that $T_i <_S T_k <_S T_j$. 
Now, suppose by contradiction that there exists a committed transaction $T_k$, $X \in \Wset(T_k)$ that writes $v'\neq v$ to $X$ 
such that $T_i <_S T_k <_S T_j$.
Since $T_i$ and $T_k$ are both updating transactions that commit,
\begin{center}
($T_i <_S T_k$) $\Longleftrightarrow$ ($\delta_{T_i} <_{E} \delta_{T_k}$) \\
($\delta_{T_i} <_{E} \delta_{T_k}$) $\Longleftrightarrow$ ($\ell_{\TryC_i} <_{E} \ell_{\TryC_k}$) 
\end{center}
Since, $T_j$ reads the value of $X$ written by $T_i$, one of the following is true:
$\ell_{\TryC_i} <_{E} \ell_{\TryC_k} <_{E} \ell_{\Read_j(X)}$ or
$\ell_{\TryC_i} <_{E} \ell_{\Read_j(X)} <_{E} \ell_{\TryC_k}$.

If $\ell_{\TryC_i} <_{E} \ell_{\TryC_k} <_{E} \ell_{\Read_j(X)}$, then the event in Line~\ref{line:tryc}
performed by $T_k$ when it changes the status field to \emph{committed}
precedes the event in Line~\ref{line:linr} performed by $T_j$.
Since $\ell_{\TryC_i} <_{E} \ell_{\TryC_k}$ and both $T_i$ and $T_k$ are committed in $E$,
$T_k$ must perform the event in Line~\ref{line:writeread} after $T_i$ changes \emph{status[i]}
to \emph{committed} since otherwise, $T_k$ would perform the event in Line~\ref{line:writeabort}
and change \emph{status[i]} to \emph{aborted}, thereby forcing $T_i$ to return $A_i$.
However, $\Read_j(X)$ observes that the \emph{owner} of $X$ is $T_k$
and since the \emph{status} of $T_k$ is committed at this point in the execution,
$\Read_j(X)$ must return $v'$ and not $v$---contradiction.

Thus, $\ell_{\TryC_i} <_{E} \ell_{\Read_j(X)} <_{E} \ell_{\TryC_k}$.
We now need to prove that $\delta_{T_{j}}$ indeed precedes $\delta_{T_{k}}=\ell_{\TryC_k}$ in $E$.

Now consider two cases:
\begin{itemize}
\item
Suppose that $T_j$ is a read-only transaction. 
Then, $\delta_{T_j}$ is assigned to the last t-read performed by $T_j$ that returns a non-A$_j$ value. 
If $\Read_j(X)$ is not the last t-read that returned a non-A$_j$ value, then there exists a $read_j(X')$ such that 
$\ell_{\Read_j(X)} <_{E} \ell_{\TryC_k} <_E \ell_{read_j(X')}$.
But then this t-read of $X'$ must abort since the value of $X$ has been updated by $T_k$ since $T_j$ first 
read $X$---contradiction.
\item
Suppose that $T_j$ is an updating transaction that commits, then $\delta_{T_j}=\ell_{\TryC_j}$ which implies that
$\ell_{read_j(X)} <_{E} \ell_{\TryC_k} <_E \ell_{\TryC_j}$. Then, $T_j$ must neccesarily perform the validation
of its read set in Line~\ref{line:of1} and return $A_j$---contradiction.
\end{itemize}
\end{proof}
Claims~\ref{cl:oseq} and \ref{cl:oreadfrom} establish that
Algorithm~\ref{alg:oftm} is opaque.
\end{proof}

\begin{theorem}
\label{th:ofraw}
Algorithm~\ref{alg:oftm} describes a RW DAP, opaque TM implementation $M\in \mathcal{OF}$ such that every
execution $E$ of $M$ is a $O(n)$-stall execution for any t-read operation and
every read-only transaction $T \in \ms{txns}(E)$ performs $O(|\Rset(T)|)$ AWARs in $E$.
\end{theorem}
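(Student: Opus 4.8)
The plan is to establish the claims in turn, reusing the opacity argument already in place. \emph{Opacity} is exactly Lemma~\ref{lm:oftmopaque}, so nothing new is needed there. For membership in $\mathcal{OF}$ I would first argue \emph{OF TM-liveness}: every t-operation of Algorithm~\ref{alg:oftm} is straight-line code whose only loops range over the finite read set $\Rset(T_k)$ during \emph{validate} and which never waits, so any step contention-free continuation of an invocation reaches a matching response. For \emph{OF TM-progress} I would show that a transaction $T_k$ running step contention-free never reaches an abort line. The only aborts are: a failed \emph{cas} of some $\ms{status}[\ms{owner}]$ from $\ms{live}$ to $\ms{aborted}$; a failed \emph{cas} on $\ms{tvar}[m]$ in $\Write_k$; a failed read-set \emph{validate} in a t-read or in $\TryC_k$; and a failed \emph{cas} of $\ms{status}[k]$ from $\ms{live}$ to $\ms{committed}$. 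Running contiguously, $T_k$ reads $\ms{tvar}[m]$ or a status register and then applies its \emph{cas} with no intervening foreign event, so every such \emph{cas} succeeds; no entry it previously placed in $\Rset(T_k)$ can have changed, so \emph{validate} returns $\false$; and no other process can move $\ms{status}[k]$ off $\ms{live}$. Hence no step contention-free execution of $T_k$ aborts.

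The main work, and the main obstacle, is \emph{RW DAP}. The base objects are the registers $\ms{tvar}[m]$ and $\ms{status}[k]$, and I would argue object by object. If $T_i$ and $T_j$ contend on some $\ms{tvar}[m]$ then both access it, which by inspection of $\Read$ and $\Write$ happens only when $X_m\in\Dset(T_i)\cap\Dset(T_j)$, so the common-t-object clause of RW DAP holds. The delicate case is contention on a status register $\ms{status}[k]$. A transaction touches $\ms{status}[k]$ only after reading $\ms{tvar}[m]$ and finding $\ms{owner}_m=k$, which means $T_k$ had installed itself as owner through its \emph{cas} on $\ms{tvar}[m]$ and hence $X_m\in\Wset(T_k)$. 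If $k\in\{i,j\}$, say $k=i$, the t-object $X_m$ witnessing $T_j$'s access to $\ms{status}[i]$ satisfies $X_m\in\Wset(T_i)\subseteq\Dset(T_i)$ and $X_m\in\Dset(T_j)$, again a common t-object. If $k\notin\{i,j\}$, let $X_a\in\Dset(T_i)$ and $X_b\in\Dset(T_j)$ be the t-objects through which $T_i$ and $T_j$ respectively reach $\ms{status}[k]$; both lie in $\Wset(T_k)$, so $\{X_a,X_b\}$ is an edge of ${\tilde G}(T_i,T_j,E)$ provided $T_k\in\tau_{E}(T_i,T_j)$, exhibiting a path between $\Dset(T_i)$ and $\Dset(T_j)$ and hence showing that $T_i$ and $T_j$ are not read-write disjoint-access.

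The crux is therefore to establish $T_k\in\tau_{E}(T_i,T_j)$, i.e.\ that $T_k$ is concurrent with $T_i$ or with $T_j$. Contention on $\ms{status}[k]$ requires a \emph{nontrivial} event, and the only nontrivial access to $\ms{status}[k]$ by a transaction other than $T_k$ is the \emph{cas} from $\ms{live}$ to $\ms{aborted}$, which is attempted only after the accessor has read $\ms{status}[k]=\ms{live}$. At that instant $T_k$ has already installed itself as owner, so its first event precedes the accessor's event, yet $T_k$ is not t-complete, since its status is still $\ms{live}$. I would convert this into the statement that the accessor neither precedes nor is preceded by $T_k$ in the real-time order $\prec_E^{RT}$, so $T_k$ is concurrent with whichever of $T_i,T_j$ performs the nontrivial \emph{cas}, and thus $T_k\in\tau_{E}(T_i,T_j)$. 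Pinning down this concurrency argument precisely against the definition of $\prec_E^{RT}$ is the step I expect to be the fiddliest.

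Finally, the two complexity bounds follow by inspection. For \emph{stalls}, the key observation is that a base object is accessed nontrivially only by a transaction writing or casing it, and at most $n-1$ processes other than the reader can be simultaneously poised to apply such a primitive to a single base object; hence each base-object access of a t-read is immediately preceded by a run of at most $n-1$ foreign nontrivial events, giving $O(n)$ stalls per access, and any $k$-stall execution for a t-read (Definition~\ref{def:stalls}) necessarily has $k\le n-1=O(n)$ since the stalling sets $S_1,\dots,S_i$ are disjoint and exclude the reader, matching the lower bound of Theorem~\ref{th:oftmstalls}. For \emph{AWARs}, a read-only transaction applies exactly two kinds of atomic read-write primitives: the \emph{cas} of $\ms{status}[\ms{owner}]$ from $\ms{live}$ to $\ms{aborted}$ inside a t-read, of which there is at most one per $\Read_k$, and the single \emph{cas} of $\ms{status}[k]$ in $\TryC_k$; \emph{validate} and all other steps apply only read primitives. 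Summing over the $|\Rset(T)|$ t-reads and the one tryCommit yields $O(|\Rset(T)|)$ AWARs, completing the theorem.
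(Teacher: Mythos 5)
Your proposal follows essentially the same route as the paper's proof: opacity is delegated to Lemma~\ref{lm:oftmopaque}, membership in $\mathcal{OF}$ is argued by inspecting the straight-line code and checking that every abort branch requires step contention, RW DAP is established by the same three-case analysis (contention on $\ms{tvar}[m]$, on $\ms{status}[i]$ or $\ms{status}[j]$, and on a third party's $\ms{status}[k]$, using the fact that ownership places the witnessing t-objects in $\Wset(T_k)$), and the complexity bounds follow from the single \emph{cas} per t-read plus the one in $\TryC$, and from the bound of $n-1$ on processes poised at a base object. The step you flag as fiddly---showing the third-party owner $T_k$ satisfies $T_k\in\tau_E(T_i,T_j)$---is elided at exactly the same level of detail in the paper, which likewise infers concurrency with $T_k$ from the fact that a nontrivial \emph{cas} on $\ms{status}[k]$ is applied only after reading $\ms{live}$; so your argument matches the paper's in both structure and rigor.
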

\begin{proof}
\textit{(Opacity)}
Follows from Lemma~\ref{lm:oftmopaque}

\textit{(TM-liveness and TM-progress)}
Since none of the implementations of the t-operations in Algorithm~\ref{alg:oftm}
contain unbounded loops or waiting statements, every
t-operation $op_k$ returns a matching response after taking a finite number of steps.
Thus, Algorithm~\ref{alg:oftm} provides wait-free TM-liveness.

To prove OF TM-progress, we proceed by enumerating the cases under which a transaction $T_k$ may be aborted in any execution.
\begin{itemize}
\item
Suppose that there exists a $\Read_k(X_m)$ performed by $T_k$ that returns $A_k$.
If $\Read_k(X_m)$ returns $A_k$ in Line~\ref{line:of2}, then there exists a concurrent transaction
that updated a t-object in $\Rset(T_k)$ or changed \emph{status[k]} to \emph{aborted}.
In both cases, $T_k$ returns $A_k$ only because there is step contention.
\item
Suppose that there exists a $\Write_k(X_m,v)$ performed by $T_k$ that returns $A_k$ in Line~\ref{line:of4}.
Thus, either a concurrent transaction has changed \emph{status[k]} to \emph{aborted} or the value
in $\ms{tvar}[m]$ has been updated since the event in Line~\ref{line:writeread}.
In both cases, $T_k$ returns $A_k$ only because of step contention with another transaction.
\item
Suppose that a $\Read_k(X_m)$ or $\Write_k(X_m,v)$ return $A_k$ in Lines~\ref{line:awar} and \ref{line:writeabort} respectively.
Thus, a concurrent transaction has takes steps concurrently by updating the 
\emph{status} of $\ms{owner}_m$ since the read by $T_k$ in Lines~\ref{line:linr} and \ref{line:writeread} respectively.
\item
Suppose that $\TryC_k()$ returns $A_k$ in Line~\ref{line:of5}. This is because
there exists a t-object in $\Rset(T_k)$ that has been updated by a concurrent transaction since
i.e. $\TryC_k()$ returns $A_k$ only on encountering step contention.
\end{itemize}
It follows that in any step contention-free execution of a transaction $T_k$
from a $T_k$-free execution, $T_k$ must return $C_k$ after taking a finite number of steps.

\textit{(Read-write disjoint-access parallelism)}
Consider any execution $E$ of Algorithm~\ref{alg:oftm} and let
$T_i$ and $T_j$ be any two transactions
that contend on a base object $b$ in $E$.
We need to prove that there is a path between a t-object in $\Dset(T_i)$ and a t-object in $\Dset(T_j)$ 
in ${\tilde G}(T_i,T_j,E)$ or there exists $X \in \Dset(T_i) \cap \Dset(T_j)$.
Recall that there exists an edge between t-objects $X$ and $Y$ in ${\tilde G}(T_i,T_j,E)$
only if there exists a transaction $T\in \ms{txns}(E)$ such that $\{X,Y\} \in \Wset(T)$.
\begin{itemize}
\item
Suppose that $T_i$ and $T_j$ contend on base object \ms{tvar}[m] belonging to t-object $X_m$ in $E$.
By Algorithm~\ref{alg:oftm}, a transaction accesses $X_m$ only if $X_m$ is contained in $\Dset(T_m)$.
Thus, both $T_i$ and $T_j$ must access $X_m$.
\item
Suppose that $T_i$ and $T_j$ contend on base object \ms{status}[i] in $E$ (the case when $T_i$ and $T_j$ contend
on \ms{status}[j] is symmetric).
$T_j$ accesses \emph{status[i]} while performing a t-read of some t-object $X$ in Lines~\ref{line:status1} and \ref{line:awar}
only if $T_i$ is the \emph{owner} of $X$.
Also, $T_j$ accesses \emph{status[i]} while performing a t-write to $X$ in Lines~\ref{line:status2} and \ref{line:writeabort}
only if $T_i$ is the \emph{owner} of $X$.
But if $T_i$ is the \emph{owner} of $X$, then $X \in \Wset(T_i)$.
\item
Suppose that $T_i$ and $T_j$ contend on base object \ms{status}[m] belonging to some transaction $T_m$ in $E$.
Firstly, observe that $T_i$ or $T_j$ access \emph{status[m]} only if 
there exist t-objects $X$ and $Y$ in $\Dset(T_i)$ and $\Dset(T_j)$ respectively such that $\{X,Y\} \in \Wset(T_m)$.
This is because $T_i$ and $T_j$ would both read \emph{status[m]} in Lines~\ref{line:status1} (during t-read) and 
\ref{line:status2} (during t-write)
only if $T_m$ was the previous \emph{owner} of $X$ and $Y$.
Secondly, one of $T_i$ or $T_j$ applies a nontrivial primitive to \emph{status[m]}
only if $T_i$ and $T_j$ read \emph{status[m]=live} in Lines~\ref{line:status1} (during t-read) and 
\ref{line:writeread} (during t-write).
Thus, at least one of $T_i$ or $T_j$ is concurrent to $T_m$ in $E$.
It follows that there exists a path between $X$ and $Y$ 
in ${\tilde G}(T_i,T_j,E)$.
\end{itemize}
\textit{(Complexity)} 
Every t-read operation performs at most one AWAR in an execution $E$ (Line~\ref{line:awar}) of Algorithm~\ref{alg:oftm}.
It follows that any read-only transaction $T_k \in \ms{txns}(E)$ performs at most $|\Rset(T_k)|$ AWARs in $E$.

The linear step-complexity is immediate from the fact that during the t-read operations, the transaction validates its entire read
set (Line~\ref{line:rc1}). All other t-operations incur $O(1)$ step-complexity since they involve no iteration statements
like \emph{for} and \emph{while} loops.

Since at most $n-1$ transactions may be t-incomplete at any point in an execution $E$, it follows that
$E$ is at most a $(n-1)$-stall execution for any t-read $op$ and every $T\in \ms{txns}(E)$
incurs $O(n)$ stalls on account of any event performed in $E$. More specifically, consider the following
execution $E$: for all $i\in \{1,\ldots , n-1\}$, each transaction $T_i$ performs $\Write_i(X_m,v)$
in a step-contention free execution until it is poised to apply a nontrivial event on $\ms{tvar}[m]$ (Line~\ref{line:linw}).
By OF TM-progress, we construct $E$ such that each of the $T_i$ is poised to apply a nontrivial event on $\ms{tvar}[m]$
after $E$. Consider the execution fragment of $\Read_n(X_m)$ that is poised to perform an event $e$
that reads $\ms{tvar}[m]$ (Line~\ref{line:linr}) immediately after $E$.
In the constructed execution, $T_n$ incurs $O(n)$ stalls on account of $e$ and thus, produces the desired $(n-1)$-stall execution
for $\Read_n(X)$.
\end{proof}
\subsection{An opaque weak DAP implementation $M \in \mathcal{OF}$}
\label{app:woftm}
\begin{algorithm}[!h]
\caption{Weak DAP opaque implementation $M\in \mathcal{OF}$; code for $T_k$
}\label{alg:oftm2}
  \begin{algorithmic}[1]
  	{\footnotesize
	
	\Part{\Read$_k(X_m)$}{
		\State $[\ms{owner}_m,\ms{oval}_m,\ms{nval}_m]$ $\gets$ $\ms{tvar}[m].\lit{read}()$ \label{line:ofread1}
		
		\If{$\ms{owner}_m \neq k$}
			
			\State $s_m\gets \ms{status}[\ms{owner}_m].\lit{read}()$ \label{line:ownerread}
			\If{$s_m=\ms{committed}$} 
				\State $\ms{curr}=\ms{nval}_m$
			\ElsIf{$s_m=\ms{aborted}$} 
			  \State $\ms{curr}=\ms{oval}_m$

			\Else
				\If{$\ms{status}[\ms{owner}_m].\lit{cas}(\ms{live},\ms{aborted})$} \label{line:ownerwrite}
				  \State $\ms{curr}=\ms{oval}_m$
				\EndIf
				\Return $A_k$ \EndReturn
			\EndIf
				
			\State $o_m\gets \ms{tvar}[m].\lit{cas}([\ms{owner}_m,\ms{oval}_m,\ms{nval}_m],[k,\ms{oval}_m,\ms{nval}_m])$ \label{line:readowner}
			\If{$o_m\wedge \ms{status}[k]= \ms{live}$} 
			    \State $\Rset(T_k).\lit{add}(\{X_m,[\ms{owner}_m,\ms{oval}_m,\ms{nval}_m]\})$
				\Return $\ms{curr}$ \EndReturn
			\EndIf
			
		\Else
			
			\Return $\Rset(T_k).\lit{locate}(X_m)$ \EndReturn
				
		\EndIf

   	 }\EndPart
	\Statex
	\Part{\TryC$_k$()}{
		
		\If{$\ms{status}[k].\lit{cas}(\ms{live},\ms{committed})$} 
			\Return $C_k$ \EndReturn
		\EndIf
		\Return $A_k$ \EndReturn 
   	 }\EndPart

	}
  \end{algorithmic}
\end{algorithm}
Algorithm~\ref{alg:oftm2} describes a weak DAP implementation in $\mathcal{OF}$ that does not satisfy read-write DAP.
The code for the t-write operations is identical to Algorithm~\ref{alg:oftm}.
\begin{theorem}
\label{th:ofweakdap}
Algorithm~\ref{alg:oftm2} describes a weak TM implementation $M\in \mathcal{OF}$ such that
in any execution $E$ of $M$, for every transaction $T \in \ms{txns}(E)$,
$T$ performs $O(1)$ steps during the execution of any t-operation in $E$.
\end{theorem}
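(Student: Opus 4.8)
The plan is to verify the three ingredients packed into the statement — membership in $\mathcal{OF}$, weak DAP, and the $O(1)$ per-operation step bound — by direct inspection of Algorithm~\ref{alg:oftm2}, reusing as much of the machinery of Theorem~\ref{th:ofraw} as possible, since the t-write code is shared between the two algorithms. The complexity claim is the cheapest to establish, so I would dispatch it first: each of $\Read_k$, $\Write_k$ and $\TryC_k$ is a straight-line sequence of reads and \emph{cas} primitives on $\ms{tvar}[\cdot]$ and $\ms{status}[\cdot]$ with no \emph{for}/\emph{while} loops and, crucially, no call to the \emph{validate} routine used in Algorithm~\ref{alg:oftm}. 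The essential structural difference is that here a t-read \emph{takes ownership} of $X_m$ via the \emph{cas} on $\ms{tvar}[m]$ in Line~\ref{line:readowner}, making reads visible; consequently read-set validation is unnecessary and is omitted from both $\Read_k$ and $\TryC_k$. This immediately gives that every t-operation performs a constant number of base-object steps.

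For TM-liveness and OF TM-progress I would argue exactly as in the corresponding part of Theorem~\ref{th:ofraw}. Absence of unbounded loops or waiting yields wait-free (hence OF) TM-liveness. For progress I would enumerate the abort points: a $\Read_k(X_m)$ returns $A_k$ only if it finds $\ms{status}[\ms{owner}_m]=\ms{live}$ for a distinct owner (Line~\ref{line:ownerwrite}), if the ownership \emph{cas} on $\ms{tvar}[m]$ in Line~\ref{line:readowner} fails, or if $\ms{status}[k]$ has been set to $\ms{aborted}$ — each signals a concurrent step by another transaction. The $\Write_k$ aborts are identical to those already analysed for Algorithm~\ref{alg:oftm}, and $\TryC_k$ aborts only when its \emph{cas}$(\ms{live},\ms{committed})$ fails, i.e.\ only after a concurrent transaction aborted $T_k$. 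Hence in any step contention-free execution of $T_k$ from a $T_k$-free configuration no abort condition can fire and $T_k$ commits.

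The weak DAP argument is where the interesting divergence from the RW DAP proof lives, and it is the step I expect to require the most care. I would follow the case analysis of the read-write disjoint-access part of Theorem~\ref{th:ofraw}: if two contending transactions $T_i,T_j$ meet on some $\ms{tvar}[m]$ or on a $\ms{status}[i]$/$\ms{status}[j]$ base object, the same reasoning produces a common t-object $X\in\Dset(T_i)\cap\Dset(T_j)$. The delicate case is contention on $\ms{status}[m]$ for a \emph{third} transaction $T_m$: both $T_i$ and $T_j$ access $\ms{status}[m]$ only if $T_m$ was the previous $\ms{owner}$ of some $X\in\Dset(T_i)$ and of some $Y\in\Dset(T_j)$. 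In Algorithm~\ref{alg:oftm} ownership implied membership in $\Wset(T_m)$, giving an edge in the write-set graph ${\tilde G}$; but here a t-read also acquires ownership (Line~\ref{line:readowner}), so $X$ and $Y$ need only lie in $\Dset(T_m)$. This still produces a path between $X$ and $Y$ in the full conflict graph $G(T_i,T_j,E)$, which establishes weak DAP, yet not necessarily in ${\tilde G}$, which is precisely why the implementation fails RW DAP.

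The main obstacle is therefore to state this last case cleanly and to make explicit the mechanism behind the whole theorem: visible (ownership-acquiring) reads are exactly what break RW DAP while simultaneously making read-set validation — and hence the linear step cost incurred in Algorithm~\ref{alg:oftm} — dispensable, yielding the $O(1)$ bound. Opacity, although not stated in the theorem, can be argued by a linearization mirroring Lemma~\ref{lm:oftmopaque}, simplified here by the fact that an owning reader cannot be silently overwritten: any transaction wishing to update $X_m$ must first observe $T_k$'s live ownership and abort either itself or $T_k$, so the read need not revalidate.
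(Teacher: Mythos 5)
Your proposal is correct and follows essentially the same route as the paper's proof: delegate opacity, TM-liveness and OF TM-progress to the arguments for Algorithm~\ref{alg:oftm}, establish weak DAP by the same three-case contention analysis with ownership now implying membership in $\Dset(T_m)$ rather than $\Wset(T_m)$ (so the path lives in $G(T_i,T_j,E)$ rather than ${\tilde G}$ --- you state this more cleanly than the paper, which has a typo writing ${\tilde G}$ in its final case), and obtain the $O(1)$ step bound from the absence of loops and of read-set validation. Your added observation that visible, ownership-acquiring reads are exactly what makes validation dispensable while breaking RW DAP is a faithful articulation of the mechanism the paper leaves implicit.
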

\begin{proof}
The proofs of opacity, TM-liveness and TM-progress are almost identical to the analogous proofs
for Algorithm~\ref{alg:oftm}.

\textit{(Weak disjoint-access parallelism)}
Consider any execution $E$ of Algorithm~\ref{alg:oftm2} and let
$T_i$ and $T_j$ be any two transactions
that contend on a base object $b$ in $E$.
We need to prove that there is a path between a t-object in $\Dset(T_i)$ and a t-object in $\Dset(T_j)$ 
in ${\tilde G}(T_i,T_j,E)$ or there exists $X \in \Dset(T_i) \cap \Dset(T_j)$.
Recall that there exists an edge between t-objects $X$ and $Y$ in ${G}(T_i,T_j,E)$
only if there exists a transaction $T\in \ms{txns}(E)$ such that $\{X,Y\} \in \Dset(T)$.
\begin{itemize}
\item
Suppose that $T_i$ and $T_j$ contend on base object \ms{tvar}[m] belonging to t-object $X_m$ in $E$.
By Algorithm~\ref{alg:oftm2}, a transaction accesses $X_m$ only if $X_m$ is contained in $\Dset(T_m)$.
Thus, both $T_i$ and $T_j$ must access $X_m$.
\item
Suppose that $T_i$ and $T_j$ contend on base object \ms{status}[i] in $E$ (the case when $T_i$ and $T_j$ contend
on \ms{status}[j] is symmetric).
$T_j$ accesses \emph{status[i]} while performing a t-read of some t-object $X$ in 
Lines~\ref{line:ownerread} and \ref{line:ownerwrite}
only if $T_i$ is the \emph{owner} of $X$.
Also, $T_j$ accesses \emph{status[i]} while performing a t-write to $X$ in Lines~\ref{line:status2} and \ref{line:writeabort}
only if $T_i$ is the \emph{owner} of $X$.
But if $T_i$ is the \emph{owner} of $X$, then $X \in \Dset(T_i)$.
\item
Suppose that $T_i$ and $T_j$ contend on base object \ms{status}[m] belonging to some transaction $T_m$ in $E$.
Firstly, observe that $T_i$ or $T_j$ access \emph{status[m]} only if 
there exist t-objects $X$ and $Y$ in $\Dset(T_i)$ and $\Dset(T_j)$ respectively such that $\{X,Y\} \in \Dset(T_m)$.
This is because $T_i$ and $T_j$ would both read \emph{status[m]} in Lines~\ref{line:ownerread} (during t-read) and 
\ref{line:status2} (during t-write)
only if $T_m$ was the previous \emph{owner} of $X$ and $Y$.
Secondly, one of $T_i$ or $T_j$ applies a nontrivial primitive to \emph{status[m]}
only if $T_i$ and $T_j$ read \emph{status[m]=live} in Lines~\ref{line:ownerread} (during t-read) and 
\ref{line:writeread} (during t-write).
Thus, at least one of $T_i$ or $T_j$ is concurrent to $T_m$ in $E$.
It follows that there exists a path between $X$ and $Y$ 
in ${\tilde G}(T_i,T_j,E)$.
\end{itemize}
\textit{(Complexity)}
Since no implementation of any of the t-operation contains any iteration statements like \emph{for} and \emph{while} loops), 
the proof follows.
\end{proof}
\end{document}